\documentclass[a4paper]{article}
\usepackage{geometry}
\usepackage{mathrsfs}
\usepackage{bm}
\usepackage{authblk}
\usepackage{xcolor}
\usepackage{comment}
\usepackage{upgreek}
\usepackage{amssymb}  
\usepackage{amsthm}
\usepackage{amsmath}
\usepackage{graphicx}

\newtheorem{theorem}{Theorem}[section]
\newtheorem*{remark}{Remark}

\newcommand{\field}{\mathbb}


\newcommand{\reals}{\field{R}}

\newcommand{\ie}{\textit{i.e.,}~}
\newcommand{\eg}{\textit{e.g.,}~}

\newcommand{\covder}[2]{\nabla_{#2}{#1}}



\newcommand{\mbs}[1]{\bm{#1}}
\newcommand{\dexp}{\mathrm{dexp}}
\renewcommand{\skew}{\mathrm{skew}}

\newcommand{\nodrill}{\mbs{\chi}}
\newcommand{\pmap}[1]{{#1}_\perp}
\newcommand{\dmap}[1]{{#1}_\parallel}
\newcommand{\rotation}{{\mbs{{\Lambda}}}}

\newcommand{\dd}[2]{\frac{\mathrm{d}#1}{\mathrm{d}#2}}
\newcommand{\pd}[2]{\frac{\partial #1}{\partial #2}}

\newcommand{\Order}[2]{\mathcal{O}(#1^{#2})}

\let\oldOmega=\Omega \renewcommand{\Omega}{\mathit{\oldOmega}}
\let\oldGamma=\Gamma \renewcommand{\Gamma}{\mathit{\oldGamma}}
\let\oldLambda=\Lambda \renewcommand{\Lambda}{\mathit{\oldLambda}}
\let\oldSigma=\Sigma \renewcommand{\Sigma}{\mathit{\oldSigma}}
\let\oldPi=\Pi \renewcommand{\Pi}{\mathit{\oldPi}}
\let\oldXi=\Xi \renewcommand{\Xi}{\mathit{\oldXi}}

\begin{document}


\providecommand{\keywords}[1]{\textbf{\textit{Keywords:}} #1}

\title{The rotating rigid body model based on a non-twisting frame}

\author{Cristian Guillermo Gebhardt, Ignacio Romero}

\date{}

\maketitle




\begin{abstract}
This work proposes and investigates a new model of the rotating rigid body based on the \emph{non-twisting} frame. Such a frame consists of three mutually orthogonal unit vectors whose rotation rate around one of the three axis remains zero at all times and thus, is represented by a nonholonomic restriction. Then, the corresponding Lagrange-D'Alembert equations are formulated by employing two descriptions, the first one relying on rotations and a splitting approach, and the second one relying on constrained directors. For vanishing external moments, we prove that the new model possesses conservation laws, \ie the kinetic energy and two nonholonomic momenta that substantially differ from the holonomic momenta preserved by the standard rigid body model. Additionally, we propose a new specialization of a class of energy-momentum integration schemes that exactly preserves the kinetic energy and the nonholonomic momenta replicating the continuous counterpart. Finally, we present numerical results that show the excellent conservation properties as well as the accuracy for the time-discretized governing equations.   
\end{abstract}


\section{Introduction}
\label{sec-intro}
The rigid body is a problem of classical mechanics that has been exhaustively studied (see,
\eg\cite{Goldstein:1980up,Arnold:1989wt}).  Its simplicity, as well as its relation with the
(nonlinear) rotation group, makes of it the ideal setting to study abstract concepts of geometric
mechanics, such as Poisson structures, reduction, symmetry, stability, etc. Additionally, many of
the ideas that can be analyzed in the context of the rigid body can later be exploited in the study
of nonlinear structural theories, such as rods and shells (as for example in
\cite{Simo:1988uc,Antman:1995wm,Mielke:1988hk}). As a result, geometrical integrators for the rigid
body \cite{Simo1991,Rom08} are at the root of more complex numerical integration schemes for models
that involve, in one way or another, rotations
\cite{Simo:1986ws,Sansour:1995fi,Jelenic:1998uu,Romero:2002wb,Betsch:2003vl,Romero:2017uv}.

More specifically, the role of the rotation group is key because it is usually chosen to be the
configuration space of the rigid body, when the latter has a fixed point. The rich Lie group
structure of this set is responsible for much of the geometric theory of the rigid body, but it is
not the only possible way to describe it. For example, the configuration of the a rigid body with a
fixed point can also be described with three mutually orthogonal unit vectors. While this alternative
description makes use of constraints, it has proven useful in the past for the construction of
conserving numerical methods for rigid
bodies, rods, and multibody systems \cite{Romero:2002uw,Betsch:2002wk,BetschLeyendecker06}.

In this article we explore a third route that can be followed to describe the kinematics of rigid
bodies. This avenue is based on introducing a \emph{non-twisting} or Bishop frame of motion
\cite{AmericanMathematica:6j0BCmRM}. This frame consists of three mutually orthogonal unit vectors
whose rotation rate around one of the three axis remains zero at all times. Such a frame has proven
useful to study the configuration of nonlinear Kirchhoff rods
\cite{Shi:1994ei,McMillen:2002iz,Audoly:2007ko,Romero2019}, but has not received enough attention in
the context of the rigid body.

Formulating the equations of motion for the rigid body in the \emph{non-twisting} frame demands a
construction that is different from the standard one. In particular, the definition of Bishop's
frame requires a constraint that is nonholonomic and does not admit a variational
statement. Additionally, conservation laws take in this setting a different form when compared with the usual ones and
whose identification is not straightforward, and the governing equations, \ie the
Lagrange-D'Alembert equations, are elegantly formulated by means of an splitting approach in terms
of the covariant derivative on the unit sphere. Some of the conservation laws that take place under
consideration of the \emph{non-twisitng} frame may substantially differ from other nonholonomic
cases that were investigated in the literature, \eg \cite{Betsch2006,Hedrih2019}. In a pure
holonomic context, some attempts to reformulate the dynamics on the unit sphere by means of advanced
concepts from the differential geometry are to be found in \cite{Lee2009,Lee2018}. However, the
anisotropy of the inertial properties has been completely disregarded. 

The rigid body equations with a nonholonomic constraint can be integrated in time with a
conserving scheme that preserves energy and the newly identified momenta, the so-called nonholonomic
momenta. This method, based on the idea of the average vector field
\cite{McLachlan:1999tm,Celledoni:2012ff} preserves remarkably these invariants of the motion,
exactly, resulting in accurate pictures of the rigid body dynamics. The specialization of approaches
based on more elaborated conservative/dissipative integration schemes like \cite{Gebhardt2019c} is
possible as well in this context, but falls outside the scope of the current work and therefore, not
addressed here.    

The rest of the article has the following structure. In Section~\ref{sec-geometry}, the fundamental
concepts from the differential geometry are presented and
discussed. Section~\ref{sec-standard-formualtion} presents two derivations of the equations of
motion for the standard rotating rigid body. The first set of equations is a \emph{split} version of
the well-known Euler equations that are presented within a novel geometrical framework that relies
on covariant derivatives. The second one, then,  produces a totally equivalent set of equations
and relies on three constrained directors. Such an approach possesses a very favorable mathematical
setting that will be exploited later on to derive an structure-preserving integration algorithm. In
Section~\ref{sec-nonholonomic-formulation}, the \emph{non-twisting} condition is enforced for
both fully equivalent formulations. Additionally, new conservation laws are identified in
the continuous setting. Section~\ref{sec-integration} starts with the energy-momentum integration
scheme for the director-based formulation and then it is modified for the nonholonomic case. The
conservation properties are identified analytically for the discrete setting, replicating their
continuous counterparts. In Section~\ref{sec-results}, numerical results that show the excellent
performance of the new energy-momentum algorithm are presented and discussed. Finally,
Section~\ref{sec-summary} closes this work with a summary.

\section{Relevant geometrical concepts}
\label{sec-geometry}
The governing equations of the rigid body are posed on nonlinear manifolds.
We briefly summarize the essential geometrical concepts required for a complete
description of the model (see, e.g., \cite{Marsden:1994wk,Lee2018} for more comprehensive
expositions).

\subsection{The unit sphere}
\label{subs-stwo} The unit sphere $S^2$ is a nonlinear, compact, two-dimensional manifold that
often appears in the configuration spaces of solid mechanics, be it the rigid body, rods, or
shells~\cite{Eisenberg1979,Simo:1989wu,Romero:2004uo,RoUrreCy:2014,Romero:2017uv}.  As an embedded
set on Euclidean space, it is defined as
\begin{equation}
  \label{eq-stwo}
  S^{2}
  :=
  \left\{
    \bm{d}\in\reals^{3}\mid\bm{d}\cdot\bm{d}=1
  \right\},
\end{equation}
where the dot product is the standard one in $\reals^3$. The tangent bundle of the unit 2-sphere
is also a manifold defined as
\begin{equation}
  \label{eq-tstwo}
  TS^2 :=
  \left\{
    (\bm{d},\bm{v}),\
    \bm{d}\in S^2,\
    \bm{v}\in\reals^3,\
    \bm{d}\cdot \bm{v} = 0
  \right\} .
\end{equation}
Alternatively, tangent vectors of $S^2$ to a point $\mbs{d}$ are those
defined by the expressions:
\begin{equation}
  \bm{v} = \bm{w}\times \bm{d}\, ,
  \quad
  \mathrm{with}
  \quad
  \bm{w}\cdot \bm{d}= 0\ ,
\end{equation}
where the product ``$\times$'' is the standard cross product in $\reals^3$.

In contrast with the space of rotations, to be studied in more detail later,
the unit sphere does not have a group structure, but instead it has that
of a Riemannian manifold. The connection of this set can be more easily
explained by considering it to be an embedded manifold in $\reals^3$.
As such, the covariant derivative of a smooth vector field $\bm{v}:S^2\to TS^2$ along a
vector field $\bm{w}:S^2\to TS^2$ is the vector field $\nabla_{\bm{w}}\bm{v}\in TS^2$
that evaluated at $\bm{d}$ is precisely the projection
of the derivative $D\bm{v}$ in the direction of $\bm{w}$ onto
the tangent plane to $\bm{d}$. Hence, denoting as $\mbs{I}$ the unit
second order tensor and  $\otimes$ the
outer product between vectors, both on $\reals^3$, 
this projection can be simply written as
\begin{equation}
  \label{eq-cov-projection}
  \nabla_{\bm{w}} \bm{v}
  :=
  (\bm{I} - \bm{d}\otimes \bm{d})
  \;
  D \bm{v} \cdot \bm{w}\ .
\end{equation}
When $\bm{d}:(a,b)\to S^2$ is a smooth one-parameter curve on the unit
sphere and $\bm{d}'$ its derivative, the covariant derivative
of a smooth vector field $\bm{v}:S^2\to TS^2$ in the direction of $\bm{d}'$
has an expression that follows from Eq.~(\ref{eq-cov-projection}), that is,
\begin{equation}
  \nabla_{\bm{d}'} \bm{v}
  =
  (\bm{I} - \bm{d}\otimes \bm{d})
  \;
  D \bm{v} \cdot \bm{d}'
  =
  (\bm{v}\circ \bm{d})'
  -
  \left(
    (\bm{v}\circ \bm{d})' \cdot \bm{d}
  \right)
  \bm{d}, 
\end{equation}
which, as before, is nothing but the projection of $(\bm{v}\circ \bm{d})'$ onto the tangent space
$T_{\bm{d}}S^2$, and the symbol ``$\circ$'' stands for composition. The covariant derivative allows
to compare two tangent vectors belonging to different tangent spaces. By means of the
parallel transport, one vector can be transported from its tangent space to the space
of the other one. Then, all the desired comparisons can be made over objects belonging
to the same space.

The exponential map $\exp:T_{\bm{d}_0}S^2 \to S^2$ is a surjective application
with a closed form expression given by the formula
\begin{equation}
\exp_{\bm{d}_0}(\bm{v}_0) = \cos(|\bm{v}_0|)\bm{d}_0+\sin(|\bm{v}_0|)\frac{\bm{v}_0}{|\bm{v}_0|}\, ,
\label{eq-exp-2-sphere}
\end{equation}
where $\bm{v}_0$ must be a tangent vector on $T_{\bm{d}_0}S^2$
and $|\cdot|$ denotes the Euclidean vector norm. The inverse
of the exponential function
is the logarithmic map $\log:S^2 \to T_{\bm{d}_0}S^2$, for which also there is a closed
form expression that reads
\begin{equation}
\log_{\bm{d}_0}(\bm{d}) = 
\arccos(\bm{d}_0\cdot\bm{d})\frac{(\bm{I} - \bm{d}_0\otimes \bm{d}_0)\bm{d}}{|(\bm{I} -
  \bm{d}_0\otimes \bm{d}_0)\bm{d}|} ,
\end{equation}
with $\bm{d}_0\neq-\bm{d}$. The geodesic $\bm{d}_\mathrm{G}(s)$ for $s\in[0,1]$ with
$\bm{d}_\mathrm{G}(0)=\bm{d}_0$ and $\bm{d}'_\mathrm{G}(0)=\bm{v}_0$ is
a great arch on the sphere obtained as the solution of the equation 
\begin{equation}
  \nabla_{\bm{d}'}{\bm{d}'} = 
  \bm{0}\, ,
\end{equation}
with the explicit form
\begin{equation}
\bm{d}_\mathrm{G}(s) = \cos(|\bm{v}_0| s)\bm{d}_0+\sin(|\bm{v}_0| s)\frac{\bm{v}_0}{|\bm{v}_0|}\,.
\end{equation}
The parallel transport of $\bm{w}_0\in T_{\bm{d}_0}S^2 \mapsto \bm{w} \in T_{\bm{d}}S^2$ along the geodesic $\bm{d}_\mathrm{G}$ is then given by
\begin{equation}
  \bm{w} =
  \left(
    \bm{I}-
    \frac{1}{\arccos^2(\bm{d}_0\cdot\bm{d}_\mathrm{G})}
    \left(
      \log_{\bm{d}_0}\bm{d}_\mathrm{G} + \log_{\bm{d}_\mathrm{G}} \bm{d}_0
    \right)
    \otimes
    \log_{\bm{d}_0}\bm{d}_{\mathrm{G}}
  \right) \bm{w}_0\,, 
\label{eq-parallel-transport-2-sphere}
\end{equation}
and verifies
\begin{equation}
\nabla_{\bm{w}}{\bm{d}'} = (\bm{I} - \bm{d} \otimes \bm{d})\bm{w}' = \bm{0}\,.
\end{equation}
More details about the expressions \eqref{eq-exp-2-sphere} to \eqref{eq-parallel-transport-2-sphere}
can be found in \cite{Hosseini2017,Bergmann2018}.

Given the definitions \eqref{eq-stwo} and \eqref{eq-tstwo} of the unit sphere
and its tangent bundle, we recognize that there exists an isomorphism
\begin{equation}
  \mathbb{R}^3 \cong T_{\mbs{d}}S^2 \oplus \mathrm{span}(\mbs{d}),
  \label{eq-s2-isomorphism}
\end{equation}
for any $\mbs{d}\in S^2$. Given now two directors
$\mbs{d},\tilde{\mbs{d}}$, we say that a second order tensor $\mbs{T}:\reals^3\to\reals^3$
\emph{splits from $\mbs{d}$ to $\tilde{\mbs{d}}$} if it can be written in the form
\begin{equation}
  \mbs{T}= \pmap{\mbs{T}} + \dmap{\mbs{T}}\:,
  \label{eq-t-split}
\end{equation}
where $\pmap{\mbs{T}}$ is a bijection from $T_{\mbs{d}} S^2$
to $T_{\tilde{\mbs{d}}}S^2$ with $\ker(\pmap{\mbs{T}}) = \hbox{span}(\mbs{d})$,
and $\dmap{\mbs{T}}$ is a bijection from $\hbox{span}(\mbs{d})$ to $\hbox{span}(\tilde{\mbs{d}})$
with $\ker(\dmap{\mbs{T}}) \equiv T_{\mbs{d}}S^2$. The
split~\eqref{eq-t-split} depends on the pair~$\mbs{d},\tilde{\mbs{d}}$
but it is not indicated explicitly in order to simplify the notation.

\subsection{The special orthogonal group}
\label{subs-rotation}
Classical descriptions of rigid body kinematics are invariably based on
the definition of their configuration space as the set of proper orthogonal
second order tensors, that is, the special orthogonal group, defined as
\begin{equation}
  SO(3) :=
  \left\{
    \bm{\Lambda}\in \reals^{3\times3}, \
    \bm{\Lambda}^T \bm{\Lambda} = \bm{I}, \
    \det{\bm{\Lambda}} = +1
    \right\} .
  \label{eq-sothree}
\end{equation}
This smooth manifold has a group-like structure when considered with the tensor
multiplication operation, thus it is a Lie group. Its associated Lie algebra
is the linear set
\begin{equation}
  so(3) :=
  \left\{
    \hat{\bm{w}}\in\reals^{3\times3}, \
    \hat{\bm{w}} = - \hat{\bm{w}}^T
  \right\}.
  \label{eq-algebra}
\end{equation}
Later, it will be convenient to exploit the isomorphism that
exists between three-dimensional real vectors and $so(3)$.
To see this, consider the map $\hat{(\cdot)}:\reals^3\to
so(3)$ such that for all $\bm{w},\bm{a}\in\reals^3$, the tensor $\hat{\bm{w}}\in so(3)$ satisfies
$\hat{\bm{w}}\bm{a} = \bm{w}\times \bm{a}$. Here, $\mbs{w}$ is referred to as the axial vector
of $\hat{\mbs{w}}$, which is a skew-symmetric tensor, and we also write $\skew[\bm{w}] =
\hat{\bm{w}}$. Being a Lie group, the space of rotations has an exponential
map $\exp:so(3)\to SO(3)$ whose closed form expression is Rodrigues' formula
\begin{equation}
  \exp[\hat{\bm{\theta}}]
  :=
  \bm{I} +  \frac{\sin\theta}{\theta} \hat{\bm{\theta}}
  +
  \frac{1}{2} \frac{\sin^2(\theta/2)}{(\theta/2)^2} \hat{\bm{\theta}}^2\ ,
  \label{eq-rodrigues}
\end{equation}
with $\bm{\theta}\in\reals^3$,  $\theta =|\bm{\theta}|$. The linearization
of the exponential map is simplified by introducing
the map $\dexp: so(3)\to\reals^{3\times 3}$ that
satisfies
\begin{equation}
  \dd{}{\epsilon}
  \exp[\hat{\bm{\theta}}(\epsilon)]
  =
  \skew\left[\dexp[\hat{\bm{\theta}}(\epsilon)] \dd{}{\epsilon}\bm{\theta}(\epsilon)\right] 
  \exp[\hat{\bm{\theta}}(\epsilon)]
  \label{eq-dexp}
\end{equation}
for every $\bm{\theta}:\reals\to\reals^3$.
A closed-form expression for this map, as well as more details regarding
the numerical treatment of rotations can be found elsewhere \cite{Hairer:2002vg,Rom08,Romero:2017uv}.

\subsection{The notion of twist and the \emph{non-twisting} frame}
\label{subs-non-twisting-frame}

Let $\bm{d}_3:\left[0,T\right]\rightarrow S^2$ indicate a curve of directors parameterized by time
$t\in[0,T]$.  Now, let us consider two other curves $\bm{d}_1,\bm{d}_2:[0,T]\to S^2$ such that
$\{\bm{d}_1(t), \bm{d}_2(t), \bm{d}_3(t)\}$ are mutually orthogonal for all $t\in[0,T]$.  We say
that the triad $\{\bm{d}_1,\bm{d}_2,\bm{d}_3\}$ moves without twist if
\begin{equation}
  \dot{\bm{d}}_1\cdot\bm{d}_2 = \dot{\bm{d}}_2\cdot\bm{d}_1  = 0\ ,
  \label{eq-non-twisting}
\end{equation}
where the overdot refers to the derivative with respect to time.
Given the initial value of the triad
$\{\mbs{d}_1(0), \mbs{d}_2(0), \mbs{d}_3(0) \} = \{ \mbs{D}_1,\mbs{D}_2, \mbs{D}_3\}$,
there is a map $\mbs{\chi}:[0,T]\to SO(3)$ transforming it to
the frame $\{ \mbs{d}_1, \mbs{d}_2, \mbs{d}_3 \}$ that evolves without twist, and  whose
closed form is 
\begin{equation}
\bm{\chi} = \bm{d}_1\otimes\bm{D}_1+\bm{d}_2\otimes\bm{D}_2+\bm{d}_3\otimes\bm{D}_3\, .
\end{equation}
%
%
The \emph{non-twisting} frame has Darboux vector
\begin{equation}
  \bm{w}_{\bm{\chi}} = \bm{d}_3\times\dot{\bm{d}}_3\:,
\end{equation}
and it is related to parallel transport in the sphere.
To see this relation, consider again the same non-twisting frame as before. 
Then, we recall that a vector field $\bm{v} \in T_{\bm{d}_3}S^2$ is said to be parallel-transported
along $\bm{d}_3$ if and only if $\nabla_{\bm{v}}{\dot{\bm{d}}_3}=\bm{0}$. An consequence of this is that
\begin{equation}
\dot{\bm{d}}_{3}\cdot\dot{\bm{d}}_{1} 
= \dot{\bm{d}}_{3}\cdot(\bm{w}_{\bm{\chi}}\times\bm{d}_{1})
= 0 ,
\end{equation}
and similarly
\begin{equation}
\dot{\bm{d}}_{3}\cdot\dot{\bm{d}}_{2} = \dot{\bm{d}}_{3}\cdot(\bm{w}_{\bm{\chi}}\times\bm{d}_{2}) = 0\, .
\end{equation}
In addition, we have that
\begin{equation}
\begin{split}
\dot{\bm{d}}_{1}\cdot\dot{\bm{d}}_{2} 
= (\bm{w}_{\bm{\chi}}\times\bm{d}_{1})\cdot(\bm{w}_{\bm{\chi}}\times\bm{d}_{2})
= (\bm{d}_1\cdot\dot{\bm{d}}_3)(\bm{d}_2\cdot\dot{\bm{d}}_3)\, ,
\end{split}
\end{equation}
which is merely the product of the angular velocity components and can be interpreted
as a scalar curvature.

To define precisely the concept of twist, let us consider the rotation
$\exp[\psi\:\hat{\bm{d}}_3]$, with $\psi:[0,T]\to S^1$ (the unit 1-sphere), and the rotated triad
$\{\bm{d}_1^\psi,\bm{d}_2^\psi,\bm{d}_3\} = \exp[\psi\:\hat{\bm{d}}_3]
\{\bm{d}_1,\bm{d}_2,\bm{d}_3\}$.
Then, 
\begin{equation}
  \bm{d}_1^{\psi} = \cos(\psi)\bm{d}_1+\sin(\psi)\bm{d}_2
  \qquad \mathrm{and}\qquad
  \bm{d}_2^{\psi} =-\sin(\psi)\bm{d}_1+\cos(\psi)\bm{d}_1 ,
\end{equation}
and the Darboux vector of the rotated triad is
\begin{equation}
	\bm{w}_{\exp[\psi\:\hat{\bm{d}}_3]\bm{\chi}} = \bm{d}_3\times\dot{\bm{d}}_3+\dot{\psi}\bm{d}_3,
\end{equation}
or, equivalently,
\begin{equation}
  \bm{w}_{\exp[\psi\:\hat{\bm{d}}_3]\bm{\chi}} = -(\bm{d}_2^{\psi}\cdot\dot{\bm{d}}_3)\bm{d}_1^{\psi}
  +(\bm{d}_1^{\psi}\cdot\dot{\bm{d}}_3)\bm{d}_2^{\psi}+\dot{\psi}\bm{d}_3\, .
\label{eq-w_twist}
\end{equation}
In this last expression, we identify the twist rate $\dot{\psi}$ and
the twist angle $\psi$, respectively, as the rotation velocity
component on the $\mbs{d}_3$ direction and the rotation angle about this
same vector (for further details, see
\cite{AmericanMathematica:6j0BCmRM,Langer:1996wj}). The previous calculations show that
the frame $\{\bm{d}_1,\bm{d}_2,\bm{d}_3\}$ --- known
as the natural or Bishop frame in the context of one-parameter curves embedded in the ambient space ---
is the unique one obtained by transporting $\{\bm{D}_1, \bm{D}_2,\bm{D}_3\}$
\emph{without twist}. In this context, a summary of recent advances and
open problems are presented for instance in \cite{Farouki2016}. 
\section{Standard rotating rigid body}
\label{sec-standard-formualtion}
In this section, we review the classical rotating rigid body model, which we take as the starting
point for our developments.  We present this model, however, in an unusual fashion. In it, the
governing equations of the body appear in \textit{split form}.  This refers to the fact that, for a
given director $\bm{d}_3$, the dynamics of the body that takes place in the cotangent space
$T^*_{\bm{d}_3}S^2$ is separated from that one corresponding to the reciprocal normal
space $N^*_{\bm{d}_3}S^2 \equiv \textrm{span}(\bm{d}_3)$.
In order to do this, we have employed the identifications
\begin{equation}
  \reals^3 \cong T_{\bm{d}_3}S^2 \oplus \textrm{span}(\bm{d}_3) \cong T^*_{\bm{d}_3}S^2 \oplus
  \textrm{span}(\bm{d}_3)\, .
  \label{eq-r3-isomorphism}
\end{equation}

\subsection{Kinematic description}
\label{subs-general}
As customary, a rotating rigid body is defined to be a three-dimensional non-deformable body. The
state of such a body, when one of its points is fixed,
can be described by a rotating frame whose orientation is given by a rotation
tensor. Thus, the configuration manifold is $Q\equiv SO(3)$.

Let us now study the motion of a rotating rigid body, that is, a time-parameterized curve in
configuration space $\mbs{\Lambda}: [0,T] \to Q$. The generalized velocity of the rotating rigid
body belongs, for every $t\in[0,T]$, to the tangent bundle
\begin{equation}
  TQ := 
  \left\{(\bm{\Lambda},\dot{\bm{\Lambda}}),
    \mbs{\Lambda}\in SO(3),
    \mbs{\Lambda}^T \dot{\mbs{\Lambda}} \in so(3)
  \right\} .
  \label{eq-tq}
\end{equation}
The time derivative of the rotation tensor can be written as
\begin{equation}
  \dot{\bm{\Lambda}} = \hat{\mbs{w}} \bm{\Lambda} = \bm{\Lambda} \widehat{ \mbs{W}}\ ,
  \label{eq-omegas}
\end{equation}
where $\mbs{w}$ and $\mbs{W}$ are the spatial and convected angular
velocities, respectively.

Let $\{\mbs{E}_1,\mbs{E}_2, \mbs{E}_3\}$ be
a fixed basis of the ambient space. Then, if $\mbs{d}_i=\mbs{\Lambda} \mbs{E}_i$, with $i=1,2,3$
we can use Eq.~\eqref{eq-r3-isomorphism} to split the rotation vectors as in
\begin{equation}
  \mbs{w} = \mbs{w}_{\perp} + w_\parallel \bm{d}_3\ ,
  \qquad
  \mbs{W} = \mbs{W}_\perp + W_\parallel \bm{d}_3\ .
  \label{eq-omegas-split}
\end{equation}
Then, using the relations~\eqref{eq-omegas}, we identify 
\begin{equation}
  \mbs{w}_{\perp} = \mbs{d}_3\times \dot{\mbs{d}_3}\ ,
  \qquad
  \mbs{W}_{\perp} = \bm{\Lambda}^T \mbs{w}_{\perp}\ .
  \qquad
  \label{eq-omegas-perp}
\end{equation}

\subsection{Kinetic energy and angular momentum}
Let us now select a fixed Cartesian basis of $\mathbb{R}^3$ denoted as $\{ \mbs{D}_i \}_{i=1}^3$,
where the third vector coincides with one of the principal directions of the convected inertia
tensor $\mbs{J}:\mathbb{R}^3\to\mathbb{R}^3$ of the body,
a symmetric, second order, positive definite tensor. Thus, this tensor splits from $\mbs{D}_3$
to $\mbs{D}_3$ and we write
\begin{equation}
  \bm{J} = \bm{J}_\perp + J_\parallel\, \bm{D}_3\otimes \bm{D}_3\ ,
  \label{eq-irho-decomp}
\end{equation}
where $\bm{J}_\perp$ maps bijectively $\mathrm{span}(\bm{D}_1,\bm{D}_2)$ onto
itself and satisfies $\bm{J}_\perp \bm{D}_3 =
\bm{0}$.

The kinetic energy of a rigid body with a fixed point is defined as
the quadratic form
\begin{equation}
  K :=
  \frac{1}{2} \mbs{W} \cdot \bm{J} \mbs{W}
  =
  \frac{1}{2} \mbs{w} \cdot \mbs{j} \mbs{w}
  \ ,
  \label{eq-kinetic-density}
\end{equation}
where $\bm{j}$ is the spatial inertia tensor, the push-forward of the
convected inertia, and defined as
\begin{equation}
  \bm{j} :=
  \bm{\Lambda} \bm{J} \bm{\Lambda}^T\ .
  \label{eq-inertia-inertia}
\end{equation}
Let $\mbs{d}_3=\mbs{\Lambda}\mbs{D}_3$. Given the relationship between the convected and spatial inertia,
it follows that the latter also splits, this time from
$\mbs{d}_3$ to $\mbs{d}_3$, and thus
\begin{equation}
  \bm{j} = \bm{j}_\perp + j_\parallel\, \bm{d}_3\otimes \bm{d}_3\ ,
  \label{eq-irho-decomp-spatial}
\end{equation}
where now $\bm{j}_\perp$ maps bijectively $\mathrm{span}(\bm{d}_1,\bm{d}_2)$ onto
itself and satisfies $\bm{j}_\perp \bm{d}_3 =
\bm{0}$.

As a consequence of the structure of the inertia tensor, the kinetic energy of a rotating
rigid body can be written in either of the following equivalent ways:
\begin{equation}
  \frac{1}{2} \mbs{w} \cdot \bm{j} \mbs{w}
  =
  \frac{1}{2} \mbs{W} \cdot \bm{J} \mbs{W}
  =
  \frac{1}{2} \mbs{W}_\perp \cdot \bm{J}_\perp \mbs{W}_\perp
  +
  \frac{1}{2} J_\parallel\, W_\parallel^2
  =
  \frac{1}{2} \mbs{w}_\perp \cdot \bm{j}_\perp \mbs{w}_\perp
  +
  \frac{1}{2} j_\parallel\, w_\parallel^2 .
  \label{eq-kinetic-density-convected}
\end{equation}
The angular momentum of the rotating rigid body is conjugate
to the angular velocity as in
\begin{equation}
  \qquad
  \mbs{\pi}
  :=
  \pd{K}{\mbs{w}} = \mbs{j} \mbs{w}\:,
  \label{eq-momenta}
\end{equation}
and we note that we can introduce a convected version of
the momentum $\mbs{\pi}$ by pulling it back with the rotating tensor
and defining
\begin{equation}
  \mbs{\Pi} := \bm{\Lambda}^T \mbs{\pi} = \pd{K}{\mbs{W}}  \ .
  \label{eq-pi-convected}
\end{equation}
Due to the particular structure of the inertia,
the momentum can also be split, as before, as in
\begin{equation}
  \mbs{\pi} = \mbs{\pi}_\perp + \pi_\parallel \bm{d}_3\ ,
  \qquad
  \mbs{\Pi} = \mbs{\Pi}_\perp + \Pi_\parallel \bm{d}_3,
  \label{eq-pi-split}
\end{equation}
with
\begin{equation}
  \mbs{\pi}_\perp = \mbs{j}_\perp \mbs{w}_\perp\ ,
  \qquad
  \mbs{\Pi}_\perp = \mbs{J}_\perp \mbs{W}_\perp ,
  \qquad
  \pi_\parallel = \Pi_\parallel = j_\parallel w_\parallel = J_\parallel W_\parallel \ .
  \label{eq-pi-split-components}
\end{equation}

\subsection{Variations of the motion rates}
\label{subs-k-variations}
The governing equations of the rigid body will be obtained using Hamilton's
principle of stationary action, using calculus of variations.
We gather next some results that will prove
necessary for the computation of the functional derivatives and, later,
for the linearization of the model.

To introduce these concepts, let us consider a curve of configurations
$\bm{\Lambda}_\iota (t)$ parametrized by the scalar $\iota$ and given by
\begin{equation}
\bm{\Lambda}_\iota(t)
=
\exp[\iota\, \widehat{\delta\bm{\theta}}(t)] \bm{\Lambda}(t)\, ,
\label{eq-one-parameter}
\end{equation}
where $\delta\bm{\theta} :[0,T]\to\reals^3$ represents all arbitrary variations
that satisfy
\begin{equation}
\delta \bm{\theta}(0) = \delta \bm{\theta}(T) = \bm{0} \ .
\label{eq-variations}
\end{equation}
The curve $\bm{\Lambda}_\iota$ passes through the configuration
$\bm{\Lambda}$ when $\iota=0$ and has tangent at this point
\begin{equation}
\left.\pd{}{t}\right|_{\iota=0} \bm{\Lambda}_\iota
=
\widehat{\delta\bm{\theta}} \bm{\Lambda}\, .
\label{eq-tangent-variation}
\end{equation}
For future reference, let us calculate the variation of the derivative $\dot{\bm{\Lambda}}$.
To do so, let us first define the temporal derivative of the perturbed rotation, that is,
\begin{equation}
\begin{split}    
\pd{}{t} \bm{\Lambda}_\iota
&=
\pd{}{t}\exp[\iota \widehat{\delta\bm{\theta}}] \bm{\Lambda}
=
\skew\left[ \dexp[\iota \widehat{\delta \bm{\theta}}] \iota \delta \dot{\bm{\theta}}\right]
\bm{\Lambda} + \exp[\iota \widehat{\delta \bm{\theta}}] \dot{\bm{\Lambda}} \ .
\end{split}
\label{eq-d-lambdaprime}
\end{equation}
Then, the variation of $\dot{\bm{\Lambda}}$ is just
\begin{equation}
\delta ( \dot{\bm{\Lambda}} )
=
\left.\pd{}{\iota}\right|_{\iota=0}
\pd{}{t} \bm{\Lambda}_\iota
=
\widehat{\delta \dot{\bm{\theta}}} \bm{\Lambda} + \widehat{\delta \bm{\theta}} \dot{\bm{\Lambda}}\ .
\label{eq-var-lambdaprime}
\end{equation}
With the previous results at hand, we can now proceed to calculate the variations of
the convected angular velocities, as summarized in the following theorem.

\begin{theorem}
	The variations of the convected angular velocities $(\bm{W}_\perp, W_\parallel)$
	are
	\begin{subequations}
	\begin{align}
	\delta \bm{W}_\perp
	&=
	\rotation^T
	(\bm{I}-\bm{d}_3 \otimes \bm{d}_3) \delta \dot{\bm{\theta}}\, ,
	\\
	\delta W_\parallel &= \bm{d}_3 \cdot \delta \dot{\bm{\theta}}\, .
	\end{align}
	\label{eq-variation-convected}
	\end{subequations}
\end{theorem}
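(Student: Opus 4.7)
The plan is to obtain $\delta\bm{W}$ in unsplit form first, then project onto $\mathrm{span}(\bm{D}_3)$ and its orthogonal complement. The key observation is that the reference director $\bm{D}_3$ is fixed (so its variation vanishes) while the spatial director $\bm{d}_3=\bm{\Lambda}\bm{D}_3$ does depend on $\iota$; this is what converts the ``body-frame'' projector $\bm{I}-\bm{D}_3\otimes\bm{D}_3$ into the ``spatial'' projector $\bm{I}-\bm{d}_3\otimes\bm{d}_3$ in the final expression.

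The starting point is $\widehat{\bm{W}}=\bm{\Lambda}^T\dot{\bm{\Lambda}}$ from Eq.~\eqref{eq-omegas}. From Eq.~\eqref{eq-tangent-variation} together with $(\widehat{\delta\bm{\theta}})^T=-\widehat{\delta\bm{\theta}}$ one gets $\delta(\bm{\Lambda}^T)=-\bm{\Lambda}^T\widehat{\delta\bm{\theta}}$. Substituting this and Eq.~\eqref{eq-var-lambdaprime} into $\delta\widehat{\bm{W}}=\delta(\bm{\Lambda}^T)\dot{\bm{\Lambda}}+\bm{\Lambda}^T\delta(\dot{\bm{\Lambda}})$, the two contributions proportional to $\widehat{\delta\bm{\theta}}\,\dot{\bm{\Lambda}}$ cancel, leaving $\delta\widehat{\bm{W}}=\bm{\Lambda}^T\widehat{\delta\dot{\bm{\theta}}}\,\bm{\Lambda}$. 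Invoking the standard conjugation identity $\bm{\Lambda}^T\widehat{\bm{a}}\,\bm{\Lambda}=\widehat{\bm{\Lambda}^T\bm{a}}$, valid for any $\bm{\Lambda}\in SO(3)$ and $\bm{a}\in\reals^3$, I conclude
\[
\delta\bm{W}=\bm{\Lambda}^T\,\delta\dot{\bm{\theta}}.
\]

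It only remains to split this according to Eq.~\eqref{eq-omegas-split}. Since $\bm{D}_3$ is time-independent, $\delta W_\parallel=\bm{D}_3\cdot\delta\bm{W}=(\bm{\Lambda}\bm{D}_3)\cdot\delta\dot{\bm{\theta}}=\bm{d}_3\cdot\delta\dot{\bm{\theta}}$, which is the second claim. For the tangential part, $\delta\bm{W}_\perp=(\bm{I}-\bm{D}_3\otimes\bm{D}_3)\,\delta\bm{W}$, and the elementary rewriting $(\bm{I}-\bm{D}_3\otimes\bm{D}_3)\bm{\Lambda}^T=\bm{\Lambda}^T(\bm{I}-\bm{d}_3\otimes\bm{d}_3)$, which again just uses $\bm{d}_3=\bm{\Lambda}\bm{D}_3$ and $(\bm{a}\otimes\bm{b})\bm{M}=\bm{a}\otimes(\bm{M}^T\bm{b})$, delivers the first identity.

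The only non-routine step is spotting the cancellation that collapses the apparently messy $\delta\widehat{\bm{W}}$ into the clean form $\widehat{\bm{\Lambda}^T\delta\dot{\bm{\theta}}}$; once that simplification is in hand, the split is a straightforward exercise of commuting the rotation past the projector onto the fixed director $\bm{D}_3$ versus the moving director $\bm{d}_3$.
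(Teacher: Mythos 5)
Your proposal is correct, but it takes a genuinely different route than the paper. The paper never forms the unsplit variation $\delta\bm{W}$: it differentiates the two split quantities separately, starting from $\bm{W}_{\perp,\iota}=\bm{D}_3\times\bigl(\bm{\Lambda}_\iota^T\dot{\bm{d}}_{3,\iota}\bigr)$ and $W_{\parallel,\iota}=\bm{d}_{2,\iota}\cdot\dot{\bm{d}}_{1,\iota}$, and reduces each one to the stated form by vector-triple-product manipulations together with Eq.~\eqref{eq-var-lambdaprime}. You instead establish the single intermediate identity $\delta\widehat{\bm{W}}=\bm{\Lambda}^T\widehat{\delta\dot{\bm{\theta}}}\,\bm{\Lambda}$, \ie $\delta\bm{W}=\bm{\Lambda}^T\delta\dot{\bm{\theta}}$, via the cancellation in $\delta\bigl(\bm{\Lambda}^T\dot{\bm{\Lambda}}\bigr)$ and the conjugation identity $\bm{\Lambda}^T\widehat{\bm{a}}\,\bm{\Lambda}=\widehat{\bm{\Lambda}^T\bm{a}}$, and then recover both components by applying the constant projector along $\bm{D}_3$ and commuting it through $\bm{\Lambda}^T$; all steps check out, and both routes rest on the same variational input, Eqs.~\eqref{eq-tangent-variation} and \eqref{eq-var-lambdaprime}. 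What your organization buys is a reusable closed-form expression for the full $\delta\bm{W}$, from which the split (and any further linearization) follows by plain linear algebra; what the paper's componentwise computation buys is that it works directly with the director expressions it uses later and never needs the conjugation identity. One point you should make explicit: your projection step uses $\bm{W}_\perp=(\bm{I}-\bm{D}_3\otimes\bm{D}_3)\bm{W}$ and $W_\parallel=\bm{D}_3\cdot\bm{W}$, \ie the convected split is taken along the fixed $\bm{D}_3$ rather than along $\bm{d}_3$ as Eq.~\eqref{eq-omegas-split} literally reads; this is the correct reading, since by Eq.~\eqref{eq-omegas-perp} $\bm{W}_\perp=\bm{\Lambda}^T\bm{w}_\perp$ is orthogonal to $\bm{D}_3$ and $W_\parallel=w_\parallel=\bm{W}\cdot\bm{D}_3$, and it is exactly what makes your projector variation-independent — a one-line justification would close that small gap.
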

\begin{proof}
	The convected angular velocities of the one-parameter curve of configurations
	$\bm{\Lambda}_\iota$ are
	\begin{equation}
	\bm{W}_{\perp,\iota} = \bm{D}_3 \times ( \bm{\Lambda}^T_{\iota} \dot{\bm{d}}_{3,\iota})
	\quad\textrm{and}\quad
	W_{\parallel,_\iota} =  \bm{d}_{2,\iota} \cdot  \dot{\bm{d}}_{1,\iota}\, ,
	\label{eq-str-proof-0}
	\end{equation}
	where $\bm{d}_{i,\iota} = \bm{\Lambda}_\iota \bm{D}_i$. 
	The variation of the angular velocity perpendicular to $\bm{D}_3$ is obtained from its
	definition employing some algebraic manipulations and
	expression~\eqref{eq-var-lambdaprime} as follows:
	\begin{equation}
	\begin{split}
	\delta \bm{W}_\perp
	&=
	\left. \pd{}{\iota} \right|_{\iota=0}
	\left(
	\bm{D}_3\times\left(\bm{\Lambda}^T_\iota \dot{\bm{d}}_{3,\iota}\right) 
	\right)
	\\
	&=
	\bm{D}_3 \times
	\left(
	\delta \bm{\Lambda}^T \dot{\bm{\Lambda}} \bm{D}_3 + \bm{\Lambda}^T
	\delta \dot{\bm{\Lambda}} \bm{D}_3
	\right)
	\\
	&=
	\bm{\Lambda}^T
	\left(
	\bm{d}_3 \times
	\left( \widehat{\delta \dot{\bm{\theta}}}\times \bm{d}_3 \right)
	\right)
	\\
	&=
	\bm{\Lambda}^T
	\left(
	\delta \dot{\bm{\theta}} - \left(\delta \dot{\bm{\theta}} \cdot \bm{d}_3\right) \bm{d}_3
	\right)
	\\
	&=
	\bm{\Lambda}^T
	(\mbs{I}-\bm{d}_3 \otimes \bm{d}_3) \delta \dot{\bm{\theta}}    
	\ .
	\end{split}
	\label{eq-str-proof-2}
	\end{equation}
	The variation of the angular velocity parallel to $\bm{D}_3$  follows similar steps:
	\begin{equation}
	\begin{split}
	\delta W_\parallel
	&=
	\left. \pd{}{\iota} \right|_{\iota=0}
	\left( \bm{d}_{2,\iota}\cdot \dot{\bm{d}}_{1,\iota} \right)
	\\
	&=
	\widehat{\delta \bm{\theta}} \bm{\Lambda} \bm{D}_2\cdot \dot{\bm{\Lambda}} \bm{D}_1
	+
	\bm{\Lambda} \bm{D}_2 \cdot
	\left(
	\widehat{\delta\dot{\bm{\theta}}}\bm{\Lambda} + \widehat{\delta\bm{\theta}}\dot{\bm{\Lambda}}
	\right)
	\bm{D}_1
	\\
	&=
	\bm{d}_1\times \bm{d}_2 \cdot \delta \dot{\bm{\theta}}
	\\
	&=
	\bm{d}_3 \cdot \delta \dot{\bm{\theta}}\ .    
	\end{split}
	\label{eq-str-proof-3}
	\end{equation}
\end{proof}

\subsection{Governing equations and invariants}
Here, we derive the governing equations of the rotating rigid body model
and the concomitant conservation laws.
Hamilton's principle of stationary action states that the governing equations
are the Euler-Lagrange equations of the action functional
\begin{equation}
	S = \int^T_0 K\, \textrm{d}t\, ,
\end{equation}
with unknown fields $(\bm{\Lambda}, \dot{\bm{\Lambda}})\in TQ$. 
\begin{theorem}
  The equations of motion, \ie the Euler-Lagrange equations,
  for the standard rotating rigid body model in split form are:

\begin{subequations}
	\begin{align}
 		\covder{\bm{\pi}_{\perp}}{\dot{\bm{d}}_3}+\pi_{\parallel}\dot{\bm{d}}_3 & = \bm{0}\, ,\\
 		\dot{\pi}_{\parallel}+\bm{\pi}_\perp\cdot\dot{\bm{d}}_3 &= 0\, .
	\end{align}
	\label{eq-standard}
\end{subequations}

\noindent The pertaining initial conditions are:

\begin{equation}
\bm{\Lambda}(0) = \bar{\bm{\Lambda}}\, ,\quad \bm{w}_\perp(0) = \bar{\bm{w}}_\perp\, ,\quad w_\parallel(0) = \bar{w}_\parallel\, .
\end{equation}
\end{theorem}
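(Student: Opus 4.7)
The natural approach is to apply Hamilton's principle of stationary action to the functional $S=\int_0^T K\,\mathrm{d}t$, using the split form of the kinetic energy, and then interpret the resulting Euler--Lagrange equation through the decomposition $\reals^3\cong T_{\bm{d}_3}S^2\oplus\mathrm{span}(\bm{d}_3)$ established earlier.

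The first step is to compute the first variation of $K$. Starting from the convected representation $K=\tfrac{1}{2}\bm{W}_\perp\cdot\bm{J}_\perp\bm{W}_\perp+\tfrac{1}{2}J_\parallel W_\parallel^2$ and invoking the expressions for $\delta\bm{W}_\perp$ and $\delta W_\parallel$ supplied by the preceding theorem, together with the push-forward relations $\bm{\Lambda}\bm{\Pi}_\perp=\bm{\pi}_\perp$ and $\Pi_\parallel=\pi_\parallel$, and the fact that $(\bm{I}-\bm{d}_3\otimes\bm{d}_3)$ acts as the identity on the already-tangent vector $\bm{\pi}_\perp$, the expression should collapse to the compact pairing $\delta K=\bm{\pi}\cdot\delta\dot{\bm{\theta}}$ with $\bm{\pi}=\bm{\pi}_\perp+\pi_\parallel\bm{d}_3$. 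Integrating by parts in time and applying the homogeneous boundary conditions on $\delta\bm{\theta}$ then leaves $\delta S=-\int_0^T\dot{\bm{\pi}}\cdot\delta\bm{\theta}\,\mathrm{d}t$, and the fundamental lemma of the calculus of variations forces the classical spatial identity $\dot{\bm{\pi}}=\bm{0}$.

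The second step is to split this single vector identity along the isomorphism. Expanding $\dot{\bm{\pi}}=\dot{\bm{\pi}}_\perp+\dot{\pi}_\parallel\bm{d}_3+\pi_\parallel\dot{\bm{d}}_3$ and projecting with $(\bm{I}-\bm{d}_3\otimes\bm{d}_3)$ gives $(\bm{I}-\bm{d}_3\otimes\bm{d}_3)\dot{\bm{\pi}}_\perp+\pi_\parallel\dot{\bm{d}}_3=\bm{0}$, since $\dot{\bm{d}}_3$ is already tangent at $\bm{d}_3$; by the very definition of the covariant derivative on $S^2$, the first summand is $\nabla_{\dot{\bm{d}}_3}\bm{\pi}_\perp$, which establishes the first equation. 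Projecting instead onto $\bm{d}_3$ and using the identity $\bm{d}_3\cdot\dot{\bm{\pi}}_\perp=-\bm{\pi}_\perp\cdot\dot{\bm{d}}_3$ obtained by differentiating the tangency $\bm{\pi}_\perp\cdot\bm{d}_3=0$ yields the second equation. The initial conditions then fix a unique solution of this first-order system.

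I expect the main obstacle to be the algebraic collapse of $\delta K$ into the pairing $\bm{\pi}\cdot\delta\dot{\bm{\theta}}$, as it is this clean form that enables a single integration by parts and the direct recovery of $\dot{\bm{\pi}}=\bm{0}$; reaching it requires carefully tracking the interplay between the split of the inertia tensor, the pushforward between convected and spatial descriptions, and the tangency $\bm{\pi}_\perp\cdot\bm{d}_3=0$. Everything downstream is routine projection onto reciprocal subspaces and a one-line differentiation of an orthogonality constraint.
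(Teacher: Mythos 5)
Your route coincides with the paper's own: the paper likewise dispatches the variational step as a ``systematic calculation of $\delta S$'' from the variations $(\delta\bm{W}_\perp,\delta W_\parallel)$, and then obtains the split form exactly as you do, by decomposing the Euler identity $\dot{\bm{\pi}}=\bm{0}$ with the projector $\bm{I}-\bm{d}_3\otimes\bm{d}_3$ and recognizing $(\bm{I}-\bm{d}_3\otimes\bm{d}_3)\dot{\bm{\pi}}_\perp=\covder{\bm{\pi}_\perp}{\dot{\bm{d}}_3}$. Your collapse of $\delta K$ to $\bm{\pi}\cdot\delta\dot{\bm{\theta}}$ is correct and merely supplies the detail the paper omits.

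One point to settle before asserting the second equation: projecting $\dot{\bm{\pi}}=\bm{0}$ onto $\bm{d}_3$ gives $\dot{\pi}_\parallel+\dot{\bm{\pi}}_\perp\cdot\bm{d}_3=0$, which is precisely what the paper's proof displays; combined with your identity $\dot{\bm{\pi}}_\perp\cdot\bm{d}_3=-\bm{\pi}_\perp\cdot\dot{\bm{d}}_3$ it reads $\dot{\pi}_\parallel-\bm{\pi}_\perp\cdot\dot{\bm{d}}_3=0$, which differs in sign from the printed second equation $\dot{\pi}_\parallel+\bm{\pi}_\perp\cdot\dot{\bm{d}}_3=0$ unless $\bm{\pi}_\perp\cdot\dot{\bm{d}}_3$ happens to vanish (as it does for transversely isotropic inertia, but not in general). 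So your derivation is sound and matches the paper's actual computation, but it does not ``yield the second equation'' verbatim; either report the normal component as $\dot{\pi}_\parallel+\dot{\bm{\pi}}_\perp\cdot\bm{d}_3=0$ or note explicitly that the sign in the theorem's display is inconsistent with the projection of $\dot{\bm{\pi}}=\bm{0}$ --- an issue residing in the statement itself rather than in your argument.
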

\begin{proof}
The theorem follows from the systematic calculation of $\delta S$, the variation of the action, based on the variation of the convected angular velocities of Eq. \eqref{eq-variation-convected}, thus we omit a detailed derivation. Eq. \eqref{eq-standard}, which is presented in its split form, is equivalent to Euler's equations which state that the spatial angular momentum is preserved, \ie $\dot{\bm{\pi}}=\bm{0}$. This is easily proven as follows
\begin{equation}
	\begin{split}
	\bm{0}
	&=
	\dot{\bm{\pi}}\\
	&= 
	(\mbs{I}-\bm{d}_3 \otimes \bm{d}_3)\dot{\bm{\pi}}+(\bm{d}_3\cdot\dot{\bm{\pi}})\bm{d}_3\\
	& = 
	(\mbs{I}-\bm{d}_3 \otimes \bm{d}_3)(\dot{\bm{\pi}}_\perp+\dot{\pi}_\parallel\bm{d}_3+\pi_\parallel\dot{\bm{d}}_3)+\bm{d}_3\cdot(\dot{\bm{\pi}}_\perp+\dot{\pi}_\parallel\bm{d}_3+\pi_\parallel\dot{\bm{d}}_3)\bm{d}_3\\
	& = \underbrace{\covder{\bm{\pi}_{\perp}}{\dot{\bm{d}}_3}+\pi_{\parallel}\dot{\bm{d}}_3}_{\in\: T^{*}_{\bm{d}_3}S^2}+\underbrace{(\dot{\pi}_\parallel+\dot{\bm{\pi}}_\perp\cdot\bm{d}_3)\bm{d}_3}_{\in\:\mathrm{span}(\bm{d}_3)}\, .
	\end{split}
\end{equation}
\end{proof}

\begin{theorem}
	The conservation laws of the rotating rigid body are:
	\begin{subequations}
	\begin{align}
	K        &= \frac{1}{2}\bm{W}\cdot\bm{J}\bm{W} = \frac{1}{2}\bm{w}\cdot\bm{j}\bm{w} = \text{const.}\, ,\\
    \bm{\pi} &= \bm{j}\bm{w} = \bm{\Lambda}\bm{J}\bm{W} = \text{const.}\, .         
	\end{align}
\end{subequations}	
\end{theorem}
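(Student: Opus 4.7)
The plan is to handle each invariant separately, leveraging heavily the proof of the previous theorem. The angular momentum claim comes essentially for free: that derivation exhibited the split form of $\dot{\bm{\pi}}$ in the complementary subspaces $T^{*}_{\bm{d}_3}S^2$ and $\mathrm{span}(\bm{d}_3)$, and the split equations of motion precisely set both components to zero. Reading the same decomposition in reverse gives $\dot{\bm{\pi}} = \bm{0}$ on any solution trajectory, so $\bm{\pi}$ is constant. The two written expressions $\bm{\pi} = \bm{j}\bm{w} = \bm{\Lambda}\bm{J}\bm{W}$ agree by definition once one substitutes $\bm{j} = \bm{\Lambda}\bm{J}\bm{\Lambda}^T$ together with $\bm{w} = \bm{\Lambda}\bm{W}$.

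For the kinetic energy I would work in the convected picture, where the inertia $\bm{J}$ is time-independent. Writing $K = \frac{1}{2}\bm{W}\cdot\bm{J}\bm{W}$, symmetry of $\bm{J}$ yields $\dot{K} = \dot{\bm{W}}\cdot\bm{J}\bm{W} = \dot{\bm{W}}\cdot\bm{\Pi}$. Pulling back the identity $\dot{\bm{\pi}} = \bm{0}$ along the motion, via $\bm{\pi} = \bm{\Lambda}\bm{\Pi}$ and $\dot{\bm{\Lambda}} = \hat{\bm{w}}\bm{\Lambda}$, produces the classical convected Euler equation $\bm{J}\dot{\bm{W}} + \bm{W}\times\bm{\Pi} = \bm{0}$. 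Substituting gives $\dot{K} = \bm{W}\cdot\bm{J}\dot{\bm{W}} = -\bm{W}\cdot(\bm{W}\times\bm{\Pi}) = 0$ by the scalar triple product identity. The remaining equality $\frac{1}{2}\bm{w}\cdot\bm{j}\bm{w} = \frac{1}{2}\bm{W}\cdot\bm{J}\bm{W}$ is immediate from the push-forward relations above.

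I do not anticipate any genuine obstacle; the only point deserving care is the choice of frame in which to differentiate. Attempting the same calculation directly in the spatial form $K = \frac{1}{2}\bm{w}\cdot\bm{j}\bm{w}$ forces one to invoke the auxiliary identity $\dot{\bm{j}} = \hat{\bm{w}}\bm{j} - \bm{j}\hat{\bm{w}}$ and then perform a brief rearrangement using $\hat{\bm{w}}\bm{w} = \bm{0}$ to recover the same triple-product cancellation. Choosing the convected frame sidesteps the time-derivative of $\bm{j}$ entirely and keeps the derivation to two lines, which is why I would frame it that way.
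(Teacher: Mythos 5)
Your proof is correct. The paper offers no argument for this theorem---it is dismissed as ``a standard result''---and what you supply is exactly the standard reasoning it alludes to: conservation of $\bm{\pi}$ is read off from the equivalence, already established in the proof of the preceding theorem, between the split equations and Euler's equation $\dot{\bm{\pi}}=\bm{0}$ (the decomposition $\reals^3\cong T^{*}_{\bm{d}_3}S^2\oplus\mathrm{span}(\bm{d}_3)$ being a direct sum, both components vanishing is equivalent to $\dot{\bm{\pi}}=\bm{0}$), while conservation of $K$ follows from the pulled-back convected Euler equation $\bm{J}\dot{\bm{W}}+\bm{W}\times\bm{\Pi}=\bm{0}$ and the vanishing scalar triple product, with the identities $\bm{j}=\bm{\Lambda}\bm{J}\bm{\Lambda}^T$, $\bm{w}=\bm{\Lambda}\bm{W}$ giving the remaining equalities. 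Your observation that differentiating in the convected frame avoids invoking $\dot{\bm{j}}=\hat{\bm{w}}\bm{j}-\bm{j}\hat{\bm{w}}$ is also sound.
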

\begin{proof}
This is an standard result and thus, we omit further details.
\end{proof}

\begin{remark}
  To include external moments acting on the standard rotating rigid body
  it is necessary to calculate the associated virtual work as follows
  \begin{equation}
    \delta W = \delta \bm{\theta}_\perp\cdot\bm{m}^{\mathrm{ext}}_\perp+\delta\theta_\parallel m^{\mathrm{ext}}_\parallel\, ,
  \end{equation}
  and add this contribution to the variation of the action.
\end{remark}

\subsection{Model equations based on directors}
Here, we present an alternative set of governing equations for the rotating rigid body model that will be used later to formulate a structure preserving algorithm. For this purpose, let us define the following configuration space
\begin{equation}
	Q:=\{\bm{q}=(\bm{d}_1,\bm{d}_2,\bm{d}_3)\in S^2\times S^2\times S^2\mid \bm{d}_i\cdot\bm{d}_j=0, i\neq j\}\cong SO(3)\,,
\end{equation}
whose tangent space at the point $\bm{q}$ is given by
\begin{equation}
T_{\bm{q}}Q:=\{\dot{\bm{q}}=(\dot{\bm{d}}_1,\dot{\bm{d}}_2,\dot{\bm{d}}_3)\mid \dot{\bm{d}}_i=\bm{\omega}\times\bm{d}_i,\bm{\omega}\in\reals^3\}\,.
\end{equation}

Now, we start by defining the rigid body as the bounded set ${\cal B}_0\subset\mathbb{R}^3$
of points
\begin{equation}
  \mbs{X} = \theta^1 \mbs{D}_1 + \theta^2 \mbs{D}_2 + \theta^3 \mbs{D}_3
  \label{eq-reference-conf}
\end{equation}
where $(\theta^1,\theta^2,\theta^3)$ are the material coordinates of the
point and $\{ \mbs{D}_i\}_{i=1}^3$ are three orthogonal directors, with the
third one oriented in the direction of the principal axis of inertia
and such that $\mbs{D}_3=\mbs{D}_1\times \mbs{D}_2$. 
The position of the point $\mbs{X}$ at time $t\in[0,T]$ is denoted as $\mbs{x}(t)\in\reals^3$
and given by
\begin{equation}
  \mbs{x}(t)
  =
  \bm{\varphi}(\theta^1,\theta^2,\theta^3;t)
  =
  \theta^{1}\bm{d}_1(t)+
  \theta^{2}\bm{d}_2(t)+
  \theta^{3}\bm{d}_3(t)
\label{eq_position_vector_rigidbody}
\end{equation}
with $(\bm{d}_1,\bm{d}_2,\bm{d}_3) = \bm{q} \in Q$, for all $t$. On this basis, there must be a rotation tensor $\mbs{\Lambda}(t)= \mbs{d}_i(t)\otimes \mbs{D}_i$, where we have employed the sum convention for repeated indices, such
that $\mbs{d}_i(t) = \mbs{\Lambda}(t) \mbs{D}_i$. The material velocity of the particle $\mbs{X}$ is the vector $\dot{\bm{x}}(t)\in \reals^3$
that can be written as
\begin{equation}
  \dot{\bm{x}}(t)
  =
  \dot{\bm{\varphi}}(\theta^1,\theta^2,\theta^3;t)
  =
  \theta^{1}\dot{\bm{d}}_1(t)+
  \theta^{2}\dot{\bm{d}}_2(t)+
  \theta^{3}\dot{\bm{d}}_3(t)
  \label{eq_velocity_vector_rigidbody}
\end{equation}
with $(\dot{\bm{d}}_1,\dot{\bm{d}}_2,\dot{\bm{d}}_3)=\dot{\bm{q}}\in T_{\bm{q}}Q$ representing three director velocity vectors.

To construct the dynamic equations of the model, assume the body
${\cal B}_0$ has a density $\rho_0$ per unit of reference volume and
hence its total kinetic energy, or Lagrangian, can be formulated as
\begin{equation}
  K = \int_{{\cal B}_0} \frac{\rho_0}{2}|\dot{\mbs{x}}|^2
  \,\mathrm{d} {\cal B}_0\ .
  \label{eq-director-K}
\end{equation}
To employ Hamilton's principle of stationary action, but
restricting the body directors to remain orthonormal at
all time, we define the constrained action
\begin{equation}
  S
  =\int_0^T
  \left(
    K
    - \mbs{h}(\mbs{d}_1,\mbs{d}_2,\mbs{d}_3) \cdot \mbs{\lambda}
\right) 
  \,\mathrm{d} t\ ,
  \label{eq-constrained-L}
\end{equation}
where $K$ is given by Eq.~\eqref{eq-director-K}, $\mbs{\lambda}\in\reals^3$
is a vector of Lagrange multipliers, and $\mbs{h}$ is of the form
\begin{equation}
  \bm{h}(\mbs{d}_1,\mbs{d}_2,\mbs{d}_3)
  =
  \begin{pmatrix}
    \bm{d}_2\cdot\bm{d}_3\\
    \bm{d}_1\cdot\bm{d}_3\\
    \bm{d}_1\cdot\bm{d}_2
  \end{pmatrix},
  \label{eq_internal_constraint}
\end{equation}
such that $\mbs{h}(\mbs{d}_1,\mbs{d}_2,\mbs{d}_3)=\mbs{0}$ expresses
the directors' orthonormality.

\begin{theorem}
The alternative equations of motion, \ie the Euler-Lagrange equations,
for the standard rotating rigid body model are:

\begin{subequations}
\begin{align}
\dot{\bm{\pi}}^1(\bm{d}_1,\bm{d}_2,\bm{d}_3)+\bm{H}_1(\bm{d}_1,\bm{d}_2,\bm{d}_3)^{T}\bm{\lambda}&=\bm{0}\,,\\
\dot{\bm{\pi}}^2(\bm{d}_1,\bm{d}_2,\bm{d}_3)+\bm{H}_2(\bm{d}_1,\bm{d}_2,\bm{d}_3)^{T}\bm{\lambda}&=\bm{0}\,,\\
\dot{\bm{\pi}}^3(\bm{d}_1,\bm{d}_2,\bm{d}_3)+\bm{H}_3(\bm{d}_1,\bm{d}_2,\bm{d}_3)^{T}\bm{\lambda}&=\bm{0}\,,\\
\mbs{h}(\mbs{d}_1,\mbs{d}_2,\mbs{d}_3) &= \bm{0}\,.
\end{align}
\label{eq_weak_form_rigid_body}
\end{subequations}

\noindent The generalized momenta $(\bm{\pi}^1, \bm{\pi}^2, \bm{\pi}^3) = \bm{p}\in T_{\bm{q}}^{*}Q$
are defined as
\begin{equation}
  \mbs{\pi}^i
  =
  \mathscr{J}^{i1}\dot{\bm{d}}_1+\mathscr{J}^{i2}\dot{\bm{d}}_2+\mathscr{J}^{i3}\dot{\bm{d}}_3\, ,
  \label{eq-dir-momenta}
\end{equation}
where Euler's inertia coefficients are
\begin{equation}
  \mathscr{J}^{ij} = \mathscr{J}^{ij}
  =
  \int_{\mathcal{B}_0}\varrho_0\theta^i\theta^j \textrm{d}\mathcal{B}_0 \,,
  \label{eq-euler-inertia}
\end{equation}
for $i$ and $j$ from $1$ to $3$. In addition, the \textit{splitting} of the inertia tensor implies $\mathscr{J}^{13}=\mathscr{J}^{23} = 0$ and $\bm{H}_i \in L(T_{\bm{d}_i}S^2, \reals^n)$
stands for $\frac{\partial \bm{h}}{\partial \bm{d}_i}$.\\

\noindent The pertaining initial conditions are:

\begin{equation}
\bm{d}_1(0)=\bar{\bm{d}}_1\, ,\quad
\bm{d}_2(0)=\bar{\bm{d}}_2\, ,\quad
\bm{d}_3(0)=\bar{\bm{d}}_3\, ,\quad
\bm{w}_\perp(0) = \bar{\bm{d}}_3\times\dot{\bar{\bm{d}}}_3\, ,\quad
w_\parallel(0) = \bar{w}_\parallel\,
\end{equation}
\end{theorem}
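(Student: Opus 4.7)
The plan is to apply Hamilton's principle of stationary action to the constrained Lagrangian $L=K-\mbs{h}\cdot\mbs{\lambda}$ and read off the Euler-Lagrange equations from $\delta S=0$.

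First, I would rewrite the kinetic energy in terms of the director rates. Differentiating Eq.~\eqref{eq_position_vector_rigidbody} gives $\dot{\mbs{x}}=\theta^{i}\dot{\mbs{d}}_{i}$ and hence $|\dot{\mbs{x}}|^{2}=\theta^{i}\theta^{j}\dot{\mbs{d}}_{i}\cdot\dot{\mbs{d}}_{j}$. Substituting into Eq.~\eqref{eq-director-K}, the kinetic energy takes the form
\begin{equation*}
K=\tfrac{1}{2}\,\mathscr{J}^{ij}\,\dot{\mbs{d}}_{i}\cdot\dot{\mbs{d}}_{j},
\end{equation*}
with the coefficients $\mathscr{J}^{ij}$ as in Eq.~\eqref{eq-euler-inertia}. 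Since $\mbs{D}_{3}$ is a principal direction of inertia, $\mathscr{J}^{13}=\mathscr{J}^{23}=0$, so the conjugate momentum $\bm{\pi}^{i}=\partial K/\partial\dot{\mbs{d}}_{i}$ coincides (using the symmetry $\mathscr{J}^{ij}=\mathscr{J}^{ji}$) with the expression in Eq.~\eqref{eq-dir-momenta}.

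Next, I would compute $\delta S$ for arbitrary variations $\delta\mbs{d}_{i}$ and $\delta\mbs{\lambda}$ that vanish at the temporal endpoints. The variation of $K$ contributes $\sum_{i}\bm{\pi}^{i}\cdot\delta\dot{\mbs{d}}_{i}$ to the integrand, which after integration by parts in $t$ becomes $-\sum_{i}\dot{\bm{\pi}}^{i}\cdot\delta\mbs{d}_{i}$ with vanishing boundary contributions. The variation of the constraint term contributes $\sum_{i}(\bm{H}_{i}^{T}\mbs{\lambda})\cdot\delta\mbs{d}_{i}+\mbs{h}\cdot\delta\mbs{\lambda}$. Enforcing $\delta S=0$ for arbitrary $\delta\mbs{\lambda}$ recovers the holonomic constraint $\mbs{h}=\mbs{0}$, and then applying the fundamental lemma of the calculus of variations to each $\delta\mbs{d}_{i}$ separately yields the three dynamic equations of the theorem.

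The point that requires genuine care is that the admissible variations $\delta\mbs{d}_{i}$ are not entirely arbitrary in $\reals^{3}$: each director is confined to $S^{2}$, so $\delta\mbs{d}_{i}\cdot\mbs{d}_{i}=0$. Consequently the Euler-Lagrange identities should be read as equalities on the tangent spaces $T_{\mbs{d}_i}S^{2}$, which is precisely why the theorem writes $\bm{H}_{i}\in L(T_{\mbs{d}_i}S^{2},\reals^{3})$. I expect this tangent-space bookkeeping, together with the verification that no extra multipliers enforcing the unit-length constraints $\mbs{d}_{i}\cdot\mbs{d}_{i}=1$ are needed (they are absorbed into the fact that the variations already live on $TS^{2}$), to be the main (if modest) obstacle; once this is made explicit, the remaining derivation is a direct application of the calculus of variations with Lagrange multipliers.
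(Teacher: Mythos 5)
Your proposal is correct and follows exactly the route the paper has in mind: the paper's own proof is the single sentence ``The theorem follows from the systematic calculation of $\delta S$,'' and your sketch simply carries out that calculation. Your remark about the variations being constrained to $T_{\mbs{d}_i}S^2$, and hence the Euler--Lagrange identities holding on the tangent/cotangent spaces rather than in all of $\reals^3$, correctly identifies the one nontrivial bookkeeping step.
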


\begin{proof}
The theorem follows from the systematic calculation of $\delta S$.	
\end{proof}
The reparametrized equations presented above are totally equivalent to the commonly used equations for the standard rotating rigid body. Consequently, the conservation laws described previously apply directly to this equivalent model. For a in-depth discussion on this subject, the reader may consult \cite{Romero2002}.

\begin{remark}
  As before, to include external moments acting on the standard rotating rigid body,
  the following additional terms need to be added to the variation of the action
	\begin{equation}
	\delta W = \delta W_1+\delta W_2+\delta W_3
	\quad
	\mathrm{with}
	\quad
	\delta W_i = \frac12(\bm{d}_i\times\delta\bm{d}_i)
	\cdot
	(\bm{m}^{\mathrm{ext}}_\perp+ m^{\mathrm{ext}}_\parallel\bm{d}_3)\, .
	\end{equation}
\end{remark}

\section{Rotating rigid body based on the \emph{non-twisting} frame}
\label{sec-nonholonomic-formulation}

In this section, we introduce the nonholonomic rotating rigid body, which incorporates the
non-integrable constraint that is necessary to set the \emph{non-twisting} frame according to
Eq. \eqref{eq-non-twisting}. This is a non-variational model, since it cannot be derived directly
from a variational principle.  For this purpose, we modify Eq. \eqref{eq-standard} to account the
non-integrable condition $W_\parallel=w_\parallel=0$ according to the usual nonholonomic
approach. We also introduce the concomitant conservation laws. Additionally, we present an
alternative formulation that relies on constrained directors, whose particular mathematical
structure enables the application of structure-preserving integration schemes.  

\subsection{Governing equations and invariants}

\begin{theorem}
The nonholonomic equations of motion, \ie Lagrange-D'Alembert equations, for the rotating body model based on the \emph{non-twisting} frame are:	

\begin{subequations}
	\begin{align}
		\covder{\bm{\pi}_{\perp}}{\dot{\bm{d}}_3}+\pi_{\parallel}\dot{\bm{d}}_3 & = \bm{0}\,,\\
		\dot{\pi}_{\parallel}+\bm{\pi}_\perp\cdot\dot{\bm{d}}_3+\mu &= 0\,, \\
		\bm{w}\cdot\bm{d}_3&=0\,.
	\end{align}
	\label{eq-nonholonomic}
\end{subequations}

The pertaining initial conditions are:
\begin{equation}
\bm{\Lambda}(0) = \bar{\bm{\Lambda}}\, ,\quad \bm{w}_\perp(0) = \bar{\bm{w}}_\perp\, ,\quad w_\parallel(0) = 0.    
\end{equation}
Moreover, Eq. \eqref{eq-nonholonomic} can be rewritten as

\begin{subequations}
	\begin{align}
	\covder{\bm{\pi}_{\perp}}{\dot{\bm{d}}_3} & = \bm{0}\, , \\
	\bm{\pi}_\perp\cdot\dot{\bm{d}}_3+\mu &= 0\, , \\
	w_\parallel &=0\, ,
	\end{align}
	\label{eq-nonholonomic-red}
\end{subequations}
where $\bm{\pi}_\perp\in T^{*}_{\bm{d}_3}S^2$ must satisfy the parallel transport along the curve $\bm{d}_3\in S^2$.
\end{theorem}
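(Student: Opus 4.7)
The plan is to follow the same variational setup as in the standard case but replace Hamilton's principle by the Lagrange-D'Alembert principle with Chetaev's rule. The nonholonomic constraint $\bm{w}\cdot\bm{d}_3 = 0$ is affine (in fact linear) in the angular velocity, and since the virtual rotation $\delta\bm{\theta}$ introduced in Eq.~\eqref{eq-one-parameter} plays the infinitesimal role of $\bm{w}$, the admissible variations are those satisfying $\bm{d}_3\cdot\delta\bm{\theta}=0$. Equivalently, I would introduce a scalar multiplier $\mu$ and require
\begin{equation*}
\delta S + \int_0^T \mu\,(\bm{d}_3\cdot\delta\bm{\theta})\,\mathrm{d}t = 0
\end{equation*}
for all $\delta\bm{\theta}$ with $\delta\bm{\theta}(0)=\delta\bm{\theta}(T)=\bm{0}$.

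For the computation of $\delta S$ I would reuse the variation formulas \eqref{eq-variation-convected}, inserted into the split kinetic energy \eqref{eq-kinetic-density-convected}; this reproduces verbatim the calculation underlying the standard theorem and yields on the left-hand side the split expression of $\dot{\bm{\pi}}$ appearing in the proof of Eq.~\eqref{eq-standard}. The only change is the extra term $\mu\,\bm{d}_3$ on the right-hand side contributed by the multiplier. Decomposing along $T^{*}_{\bm{d}_3}S^2 \oplus \mathrm{span}(\bm{d}_3)$ via the isomorphism \eqref{eq-r3-isomorphism}, and noting that $\mu\,\bm{d}_3$ has trivial tangential component, I obtain the first equation of \eqref{eq-nonholonomic} unchanged and the second equation with the additional $\mu$ term. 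Adjoining the constraint $\bm{w}\cdot\bm{d}_3=0$ then closes the system.

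To obtain the reduced form \eqref{eq-nonholonomic-red} I would simply exploit the constraint. Since $\pi_\parallel = j_\parallel w_\parallel$ and $w_\parallel=0$, it follows that $\pi_\parallel\equiv 0$ and hence $\dot{\pi}_\parallel = 0$. Substituting into \eqref{eq-nonholonomic} makes the term $\pi_\parallel\,\dot{\bm{d}}_3$ vanish from the tangential equation, leaving $\covder{\bm{\pi}_\perp}{\dot{\bm{d}}_3}=\bm{0}$, and reduces the parallel equation to $\bm{\pi}_\perp\cdot\dot{\bm{d}}_3+\mu = 0$, which simply determines $\mu$ algebraically. The parallel-transport claim is then immediate from the material of Section~\ref{subs-stwo}: under the Riemannian identification $T_{\bm{d}_3}S^2 \cong T^{*}_{\bm{d}_3}S^2$, a field along the curve $\bm{d}_3(t)$ is parallel-transported precisely when its covariant derivative along $\dot{\bm{d}}_3$ vanishes, which is the first reduced equation.

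The step I expect to require most care is the Chetaev identification, \ie justifying rigorously that the admissible virtual rotations satisfy $\bm{d}_3\cdot\delta\bm{\theta}=0$ and that the resulting reaction force is directed along $\bm{d}_3$ with magnitude $\mu$. Once this is settled, the derivation is essentially a specialization of the computation behind the standard theorem, followed by a substitution. A minor consistency check worth including is that the nonholonomic initial data $w_\parallel(0)=0$ is compatible with the algebraic consequence $\pi_\parallel\equiv 0$, so no initial-layer effects arise and the reduction is globally valid in time.
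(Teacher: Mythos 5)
Your proposal is correct and follows essentially the same route as the paper's own proof: both apply the Lagrange-D'Alembert principle with Chetaev's rule, compute the reaction moment as $\mu\,\partial(\bm{w}\cdot\bm{d}_3)/\partial\bm{w}=\mu\,\bm{d}_3$, observe that this lies entirely in $\mathrm{span}(\bm{d}_3)$ so only the parallel balance is modified, and obtain the reduced system from $w_\parallel=0\Rightarrow\pi_\parallel=j_\parallel w_\parallel=0$. The only difference is one of exposition: the paper states the virtual-work computation tersely and leaves the split and the parallel-transport interpretation implicit, while you spell out the decomposition along $T^*_{\bm{d}_3}S^2\oplus\mathrm{span}(\bm{d}_3)$ and the Riemannian identification explicitly.
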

\begin{proof}
The first part follows from the inclusion of the force associated to the presence of the nonholonomic restriction given by
\begin{equation}
	g = \bm{w}\cdot\bm{d}_3 = 0
\end{equation}
which ensures that the rotating frame renders no twist at all. The virtual work performed by the force associated to the presence of this nonholonomic restriction can be computed as
\begin{equation}
	\delta \mathcal{W}_\mathrm{nh} =
	\mu\frac{\partial\left(\bm{w}\cdot\bm{d}_3\right)}{\partial\bm{w}}\cdot\delta\bm{\theta} =
	\delta\bm{\theta}\cdot\left(\mu\bm{d}_3\right) =
	\delta\bm{\theta}_\parallel\cdot\left(\mu\bm{d}_3\right) \,
\end{equation}
where 
$\mu \in \reals$ denotes the corresponding Lagrange multiplier.
The second part follows from noticing that $w_\parallel=\bm{w}\cdot\bm{d}_3=0$ implies $\pi_\parallel=0$.
\end{proof}

\begin{theorem}
	The conservation laws of the rotating rigid body based on the \emph{non-twisting} frame are:
	\begin{subequations}
		\begin{align}
		K     &= \frac{1}{2}\bm{W}\cdot\bm{J}\bm{W} = \frac{1}{2}\bm{w}\cdot\bm{j}\bm{w} = \text{const.}\, ,\\
		\Pi^1 &= \bm{D}_1\cdot\bm{\Pi} = \bm{d}_1\cdot\bm{\pi} = \text{const.}\, ,\\
		\Pi^2 &= \bm{D}_2\cdot\bm{\Pi} = \bm{d}_2\cdot\bm{\pi} = \text{const.}\, .		
		\end{align}
	\end{subequations}
\end{theorem}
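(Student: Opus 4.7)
The plan is to exploit two consequences of the nonholonomic constraint $w_\parallel=0$: through the inertia splitting one has $\pi_\parallel = j_\parallel w_\parallel = 0$, and substituting $\pi_\parallel = 0$ into the first line of Eq.~\eqref{eq-nonholonomic} yields $\nabla_{\dot{\bm{d}}_3}\bm{\pi}_\perp = \bm{0}$. By the identity $(\bm{I}-\bm{d}_3\otimes\bm{d}_3)\dot{\bm{\pi}} = \nabla_{\dot{\bm{d}}_3}\bm{\pi}_\perp + \pi_\parallel\dot{\bm{d}}_3$ that was used in the proof accompanying Eq.~\eqref{eq-standard}, this forces $\dot{\bm{\pi}}\in\mathrm{span}(\bm{d}_3)$. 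This localization of $\dot{\bm{\pi}}$, which simply says that the only reaction on the body is the axial constraint reaction, is the structural fact that drives both conservation statements.

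For the kinetic energy, I would compute $\dot K = \bm{W}\cdot\dot{\bm{\Pi}}$ using the constancy and symmetry of $\bm{J}$. From $\bm{\Pi}=\bm{\Lambda}^T\bm{\pi}$ and $\dot{\bm{\Lambda}}=\widehat{\bm{w}}\bm{\Lambda}$ one obtains the standard identity $\dot{\bm{\Pi}} = \bm{\Lambda}^T\dot{\bm{\pi}} - \bm{W}\times\bm{\Pi}$, which after dotting with $\bm{W}$ collapses to $\dot K = \bm{w}\cdot\dot{\bm{\pi}}$, the skew term being eliminated by $\bm{W}\cdot(\bm{W}\times\bm{\Pi})=0$. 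Since $\dot{\bm{\pi}}\parallel\bm{d}_3$ by the previous paragraph and $\bm{w}\cdot\bm{d}_3 = w_\parallel = 0$ by the nonholonomic constraint, $\dot K$ vanishes identically.

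For the two nonholonomic momenta, I would differentiate $\Pi^i = \bm{d}_i\cdot\bm{\pi}$ for $i=1,2$ and show that each of the two summands is zero. The piece $\bm{d}_i\cdot\dot{\bm{\pi}}$ vanishes because $\dot{\bm{\pi}}$ lies in $\mathrm{span}(\bm{d}_3)$ while $\bm{d}_i\perp\bm{d}_3$. For $\dot{\bm{d}}_i\cdot\bm{\pi}$, the unit-norm identities $\dot{\bm{d}}_i\cdot\bm{d}_i=0$ together with the non-twisting relations $\dot{\bm{d}}_1\cdot\bm{d}_2 = \dot{\bm{d}}_2\cdot\bm{d}_1 = 0$ of Eq.~\eqref{eq-non-twisting} force $\dot{\bm{d}}_i\in\mathrm{span}(\bm{d}_3)$, whereas $\bm{\pi}=\bm{\pi}_\perp\in T^*_{\bm{d}_3}S^2$ because $\pi_\parallel=0$; these two subspaces are orthogonal, so the inner product vanishes. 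The equivalence of the convected and spatial expressions, $\bm{D}_i\cdot\bm{\Pi}=\bm{d}_i\cdot\bm{\pi}$, is immediate from $\bm{d}_i=\bm{\Lambda}\bm{D}_i$ and $\bm{\Pi}=\bm{\Lambda}^T\bm{\pi}$. The only subtlety is the short bridge $w_\parallel=0 \Rightarrow \pi_\parallel=0$ through the inertia splitting, which is invoked at both points in the argument; once recognized, both laws follow from purely algebraic manipulations using the non-twisting kinematics and the orthogonality of the moving triad.
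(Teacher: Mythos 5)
Your argument is correct, and it reaches the two conservation laws by a slightly different, more extrinsic route than the paper. For the momenta, the paper works intrinsically on the sphere: it expands $\bm{\pi}_\perp=\Pi^1\bm{d}_1+\Pi^2\bm{d}_2$ and feeds this into $\covder{\bm{\pi}_\perp}{\dot{\bm{d}}_3}=\bm{0}$, using that the non-twisting condition makes $\bm{d}_1,\bm{d}_2$ parallel-transported along $\bm{d}_3$, so that $\dot{\Pi}^1\bm{d}_1+\dot{\Pi}^2\bm{d}_2=\bm{0}$. You instead differentiate $\Pi^i=\bm{d}_i\cdot\bm{\pi}$ in the ambient $\reals^3$ and kill both terms with two span statements: the tangential equation of \eqref{eq-nonholonomic} (together with $\pi_\parallel=j_\parallel w_\parallel=0$) gives $\dot{\bm{\pi}}\in\mathrm{span}(\bm{d}_3)$, while unit norm plus \eqref{eq-non-twisting} gives $\dot{\bm{d}}_i\in\mathrm{span}(\bm{d}_3)$ and $\bm{\pi}=\bm{\pi}_\perp\perp\bm{d}_3$. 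This is the same geometric content (parallel transport of $\bm{d}_1,\bm{d}_2$ and of $\bm{\pi}_\perp$) expressed extrinsically, and it is arguably more elementary since it never invokes the covariant derivative beyond the identity already established in the proof of \eqref{eq-standard}. For the energy, the paper tests the weak form with the velocity ($\delta\bm{\theta}_\perp=\bm{w}_\perp$, $\delta\bm{\theta}_\parallel=\bm{w}_\parallel=\bm{0}$) to get $0=\bm{w}\cdot\dot{\bm{\pi}}=\dot K$, whereas you obtain $\dot K=\bm{w}\cdot\dot{\bm{\pi}}$ from $\dot{\bm{\Pi}}=\bm{\Lambda}^T\dot{\bm{\pi}}-\bm{W}\times\bm{\Pi}$ and conclude with $\dot{\bm{\pi}}\parallel\bm{d}_3$, $\bm{w}\cdot\bm{d}_3=0$; both hinge on the same mechanism, namely that the constraint reaction $\mu\bm{d}_3$ is workless on motions satisfying the constraint. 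Your bridge $w_\parallel=0\Rightarrow\pi_\parallel=0$ is exactly the paper's own observation, so no gap there; the only thing worth noting is that the first equation of \eqref{eq-nonholonomic} already states that the tangential projection of $\dot{\bm{\pi}}$ vanishes, so $\dot{\bm{\pi}}\in\mathrm{span}(\bm{d}_3)$ holds even before setting $\pi_\parallel=0$.
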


\begin{proof}
 To prove the conservation of kinetic energy, let us consider the following equilibrium statement
\begin{equation}
	\delta \bm{\theta}_\perp\cdot\left(\covder{\bm{\pi}_{\perp}}{\dot{\bm{d}}_3}+\pi_{\parallel}\dot{\bm{d}}_3\right)+
	\delta\bm{\theta}_\parallel\cdot\left( \left(\bm{\pi}_\perp\cdot\dot{\bm{d}}_3+\mu\right)\bm{d}_3\right)+
	\delta \mu \left( \bm{w}\cdot\bm{d}_3 \right)
	= 0\, ,
\end{equation}
where $\delta\bm{\theta}_\perp \in T_{\bm{d}_3}S^2$, $\delta\bm{\theta}_\parallel \in N_{\bm{d}_3}S^2$ and $\delta \mu$ are admissible variations. Now by choosing $\delta \bm{\theta}_\perp = \bm{w}_\perp$, $\delta \bm{\theta}_\parallel = \bm{w}_\parallel$ and $\delta \mu = 0$, we have that 
\begin{equation}
\begin{split}
0 & = 
\bm{w}_\perp\cdot\left(\covder{\bm{\pi}_{\perp}}{\dot{\bm{d}}_3}+\pi_{\parallel}\dot{\bm{d}}_3\right)+
\bm{w}_\parallel\cdot\left( \left(\bm{\pi}_\perp\cdot\dot{\bm{d}}_3+\mu\right)\bm{d}_3\right)\\
& =\bm{w}\cdot\dot{\bm{\pi}} \\
& = \bm{W}\cdot \dot{\bm{\Pi}}\\
& =\pd{}{t}\left(\frac{1}{2}\bm{W}\cdot\bm{J}\bm{W}\right)\\ 
&= \dot{K}\, ,
 \end{split}
 \label{eq-energy_preservation}
\end{equation}
which shows that the kinetic energy is preserved by the motion.

To prove the conservation of the first and second components of the material angular momentum, let us consider the fact that 
\begin{equation}
\begin{split}
\bm{\pi}_\perp &= (\bm{d}_1\cdot\bm{\pi})\bm{d}_1+(\bm{d}_2\cdot\bm{\pi})\bm{d}_2\\
			   &= \Pi^1\bm{d}_1+\Pi^2\bm{d}_2\, .		   
\end{split}
\label{eq-pi-perp}			   
\end{equation}
Now by introducing the former expression into the first statement of Eq. \eqref{eq-nonholonomic-red}, we have that   
\begin{equation}
\begin{split}
	\bm{0} &= \covder{\bm{\pi}_{\perp}}{\dot{\bm{d}}_3}\\
		   &= \covder{\Pi^1\bm{d}_1+\Pi^2\bm{d}_2}{\dot{\bm{d}}_3}\\
   		   &= \covder{\Pi^1\bm{d}_1}{\dot{\bm{d}}_3}+\covder{\Pi^2\bm{d}_2}{\dot{\bm{d}}_3}\\
		   &= \dot{\Pi}^1\bm{d}_1+\dot{\Pi}^2\bm{d}_2\, ,
\end{split}
\label{eq-material_perpendicular_momenta_preservation}
\end{equation}
in which the parallel transport of $\bm{d}_1$ and $\bm{d}_2$, both in $T_{\bm{d}_3}S^2$, has been
accounted for. This shows that the first and second components of the angular momentum are preserved by the motion.
\end{proof}

\begin{remark}
	To include external moments acting on the rotating rigid body based on the \emph{non-twisting} frame, it is necessary to compute the associated virtual work as follows
	\begin{equation}
	\delta W = \delta \bm{\theta}_\perp\cdot\bm{m}^{\mathrm{ext}}_\perp\, .
	\end{equation}
\end{remark}

\subsection{Alternative governing equations}

Here, we present an alternative formulation for the rotating rigid body based on the \emph{non-twisting} frame that relies on constrained directors. The extension of the standard rotating rigid body model to the one relying on the \emph{non-twisting} frame requires the introduction of the constraint given by
\begin{equation}
	g = (1-a)\dot{\bm{d}}_1\cdot\bm{d}_2-a\dot{\bm{d}}_2\cdot\bm{d}_1 = 0\, ,
\end{equation}
in which $a\in [0,1]$ is a parameter that can be freely chosen for convenience. This will be used later on for the proof of the conservation properties of the specialized structure preserving algorithm.
\begin{theorem}
	The alternative nonholonomic equations of motion, \ie the Lagrange-D'Alembert equations,
	for the rotating rigid body model based on the \emph{non-twisting} frame are:
	
	\begin{subequations}
		\begin{align}
		\dot{\bm{\pi}}^1(\bm{d}_1,\bm{d}_2,\bm{d}_3)+\bm{H}_1(\bm{d}_1,\bm{d}_2,\bm{d}_3)^{T}\bm{\lambda}+(1-a)\mu\bm{d}_2&=\bm{0}\,,\\
		\dot{\bm{\pi}}^2(\bm{d}_1,\bm{d}_2,\bm{d}_3)+\bm{H}_2(\bm{d}_1,\bm{d}_2,\bm{d}_3)^{T}\bm{\lambda}-a\mu\bm{d}_1&=\bm{0}\,,\\
		\dot{\bm{\pi}}^3(\bm{d}_1,\bm{d}_2,\bm{d}_3)+\bm{H}_3(\bm{d}_1,\bm{d}_2,\bm{d}_3)^{T}\bm{\lambda}&=\bm{0}\,,\\
		\mbs{h}(\mbs{d}_1,\mbs{d}_2,\mbs{d}_3) &= \bm{0}\,,\\
				(1-a)\dot{\bm{d}}_1\cdot\bm{d}_2-a\dot{\bm{d}}_2\cdot\bm{d}_1 &= 0\, .
		\end{align}
		\label{eq_weak_form_rigid_body_alternative}
	\end{subequations}
	
	\noindent The pertaining initial conditions are:
	
	\begin{equation}
	\bm{d}_1(0)=\bar{\bm{d}}_1\, ,\quad
	\bm{d}_2(0)=\bar{\bm{d}}_2\, ,\quad
	\bm{d}_3(0)=\bar{\bm{d}}_3\, ,\quad
	\bm{w}_\perp(0) = \bar{\bm{d}}_3\times\dot{\bar{\bm{d}}}_3\, ,\quad
	w_\parallel(0) = 0\,
	\end{equation}
\end{theorem}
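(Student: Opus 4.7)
The plan is to extend the derivation of the standard director-based equations \eqref{eq_weak_form_rigid_body} by appending to the variation of the action the Lagrange-D'Alembert virtual work of the non-twisting constraint. First I would start from the constrained action \eqref{eq-constrained-L}, whose multipliers $\bm{\lambda}$ already enforce orthonormality of the triad via $\bm{h}=\bm{0}$. Taking independent variations $\delta\bm{d}_1,\delta\bm{d}_2,\delta\bm{d}_3$ and using $\partial K/\partial\dot{\bm{d}}_i = \bm{\pi}^i$ together with $\partial K/\partial\bm{d}_i = \bm{0}$, the computation of $\delta S$ reproduces identically the $\dot{\bm{\pi}}^i + \bm{H}_i^T \bm{\lambda}$ terms that already appear in the unconstrained model.

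Next I would account for the non-twisting restriction through the Lagrange-D'Alembert principle. A brief kinematic check confirms that the proposed constraint is admissible for every $a\in[0,1]$: using $\dot{\bm{d}}_i = \bm{w}\times\bm{d}_i$ one obtains $\dot{\bm{d}}_1\cdot\bm{d}_2 = w_\parallel$ and $\dot{\bm{d}}_2\cdot\bm{d}_1 = -w_\parallel$, so $g = (1-a)w_\parallel + a\,w_\parallel = w_\parallel$ reduces to the same non-twisting condition enforced in \eqref{eq-nonholonomic}. Reading the coefficients of $\dot{\bm{d}}_1$ and $\dot{\bm{d}}_2$ in $g$ and invoking Chetaev's rule, the virtual work of the associated reaction is
\begin{equation}
\delta \mathcal{W}_{\mathrm{nh}} = \mu\bigl[(1-a)\bm{d}_2\cdot\delta\bm{d}_1 - a\,\bm{d}_1\cdot\delta\bm{d}_2\bigr]\, ,
\end{equation}
which, once added to $\delta S$, contributes the extra term $(1-a)\mu\bm{d}_2$ to the $\bm{d}_1$-balance and $-a\mu\bm{d}_1$ to the $\bm{d}_2$-balance while leaving the $\bm{d}_3$-balance unaffected. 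Stationarity with respect to $\bm{\lambda}$ and $\mu$ then returns $\bm{h}=\bm{0}$ and $g=0$, closing the system.

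The main obstacle is conceptual rather than computational: one has to justify why the free parameter $a$ yields one and the same physical model rather than a family of inequivalent ones. The key observation is that, since $g$ collapses to $w_\parallel$ on the orthonormal manifold, the combined reaction force couples only to the $\bm{d}_3$-component of the angular velocity, so $a$ merely redistributes a single scalar Lagrange multiplier between the first two director equations. Equivalence with the split-form statement \eqref{eq-nonholonomic} can then be exhibited by projecting each director equation onto $T_{\bm{d}_i}S^2$ and $\mathrm{span}(\bm{d}_i)$ and identifying the multipliers, in the same fashion as in the holonomic case treated in \cite{Romero2002}. The freedom in $a$ will become useful only at the discrete level, where it allows the energy-momentum integrator of the next section to be tuned so as to preserve the newly identified invariants.
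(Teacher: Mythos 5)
Your proposal is correct and follows essentially the same route as the paper: the governing equations are obtained by adding to the variation of the director-based constrained action the Chetaev-type virtual work $\delta\mathcal{W}_{\mathrm{nh}} = \mu\,\partial g/\partial\dot{\bm{d}}_1\cdot\delta\bm{d}_1 + \mu\,\partial g/\partial\dot{\bm{d}}_2\cdot\delta\bm{d}_2 = (1-a)\mu\,\bm{d}_2\cdot\delta\bm{d}_1 - a\mu\,\bm{d}_1\cdot\delta\bm{d}_2$, exactly as in the paper's proof. Your additional remarks (that $g$ collapses to $w_\parallel=0$ for every $a$, and that $a$ only redistributes the single multiplier $\mu$) are sound and merely make explicit what the paper leaves implicit.
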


\begin{proof}
The theorem follows from the computation of the virtual work associated to the presence of the nonholonomic restriction, namely
\begin{equation}
\delta \mathcal{W}_\mathrm{nh} = \mu \frac{\partial g}{\partial \dot{\bm{d}}_1}\cdot\delta \bm{d}_1 +
\mu \frac{\partial g}{\partial \dot{\bm{d}}_2}\cdot\delta \bm{d}_2 = \delta\bm{d}_1\cdot((1-a)\mu\bm{d}_2)+\delta\bm{d}_2\cdot(-a\mu\bm{d}_1)\, ,
\end{equation}
where $\mu \in \reals$ denotes the corresponding Lagrange multiplier.
\end{proof}

\begin{remark}
	To include external moments acting on the rotating rigid body based on the \emph{non-twisting} frame, it is necessary to compute the associated virtual work as follows
	\begin{equation}
	\delta W = \delta W_1+\delta W_2+\delta W_3
	\quad
	\mathrm{with}
	\quad
	\delta W_i = \frac12(\bm{d}_i\times\delta\bm{d}_i)
	\cdot\bm{m}^{\mathrm{ext}}_\perp\, .
	\end{equation}
\end{remark}

\section{Structure-preserving time integration}
\label{sec-integration}

A fundamental aspect to produce acceptable numerical
results in the context of nonlinear systems is the preservation
of mechanical invariants whenever possible, \eg first integrals
of motion. These conservation
properties ensure that beyond the approximation errors,
the computed solution remains consistent with respect to the
underlying physical essence. Here then, we chose the family of integration methods
that is derived by direct discretization of the equations of
motion.

\subsection{Basic energy-momentum algorithm}

Next, we describe the application of the energy-momentum integration algorithm \cite{Simo1991,Simo1992} to the ``standard rotating rigid body'' case. For this purpose, the following nomenclature is necessary:

\begin{equation}
\bm{q}\!=\!
\begin{pmatrix}
\bm{d}_1\\
\bm{d}_2\\
\bm{d}_3\\
\end{pmatrix},\:\:
\bm{p}\!=\!
\begin{pmatrix}
\bm{\pi}^1\\
\bm{\pi}^2\\
\bm{\pi}^3\\
\end{pmatrix}\:\:\:\textrm{and}\:\:\:
\bm{Q}^\textrm{ext}\!=\!
\begin{pmatrix}
\bm{f}^{1, \textrm{ext}}\\
\bm{f}^{2, \textrm{ext}}\\
\bm{f}^{3, \textrm{ext}}\\
\end{pmatrix}.
\end{equation}
While $\bm{q}$ is the vector of generalized coordinates, $\bm{p}$ collects the generalized momenta and $\bm{Q}^\textrm{ext}$ contains the generalized external loads, if present. The discrete version of Eq. (\ref{eq_weak_form_rigid_body}) can be expressed at time $n+\frac{1}{2}$ as
\begin{equation}
\bigl\langle \delta \bm{q}_{n+\frac{1}{2}},
\dot{\bm{p}}_d(\bm{q}_n,\dot{\bm{q}}_n,\bm{q}_{n+1})-\bm{Q}^\textrm{ext}_{n+\frac{1}{2}}+\bm{H}_d^T(\bm{q}_n,\bm{q}_{n+1}) \bm{\lambda}_{n+\frac{1}{2}}
\bigr\rangle+
\bigl\langle \delta \bm{\lambda}_{n+1}, \bm{h}(\bm{q}_{n+1})
\bigr\rangle
= 0\,,
\end{equation}
where $\left\langle\cdot,\cdot\right\rangle$ stands for the dual pairing.

A key point to achieve the desired preservation properties, is to define the momentum terms by using the midpoint rule, \ie
\begin{subequations}
	\begin{align}
		\bm{p}_d(\bm{q}_n,\bm{q}_{n+1})&=\frac{1}{h}\bm{M}(\bm{q}_{n+1}-\bm{q}_n),\\
		\dot{\bm{p}}_d(\bm{q}_n,\dot{\bm{q}}_n,\bm{q}_{n+1})&=\frac{2}{h^2}\bm{M}(\bm{q}_{n+1}-\bm{q}_{n}-h\dot{\bm{q}}_{n})\, ,
	\end{align}
\end{subequations}
where $\bm{q}_n$ and $\dot{\bm{q}}_n$ are known from the previous step, $\bm{q}_{n+1}$ are $\dot{\bm{q}}_{n+1}$ are unknown, and $\dot{\bm{q}}_{n+1}$ is computed as $\frac{2}{h}(\bm{q}_{n+1}-\bm{q}_n)-\dot{\bm{q}}_n$ once $\bm{q}_{n+1}$ has been determined by means of an iterative procedure, typically the Newtown-Raphson method.\\

The mass matrix takes the form
\begin{equation}
\bm{M}=
\begin{bmatrix}
\mathscr{J}^{11}\bm{I}_{3\times 3} & \mathscr{J}^{12}\bm{I}_{3\times 3} & \bm{0}_{3\times 3} \\
\mathscr{J}^{12}\bm{I}_{3\times 3} & \mathscr{J}^{22}\bm{I}_{3\times 3} & \bm{0}_{3\times 3} \\
\bm{0}_{3\times 3} & \bm{0}_{3\times 3} & \mathscr{J}^{33}\bm{I}_{3\times 3} \\
\end{bmatrix}
\end{equation}
and $\mathscr{J}^{ij}$ for $i$ and $j$ running from $1$ to $3$ being defined above. This very simple construction satisfies, only for the standard rigid body case, the preservation of linear and angular momenta in combination with the kinetic energy in absence of external loads.\\

The discrete version of the Jacobian matrix of the constraints can be computed with the \textit{average vector field} \cite{Gebhardt2019a,Gebhardt2019b} as
\begin{equation}
\bm{H}_d(\bm{q}_n,\bm{q}_{n+1})=\frac{1}{2}\int_{-1}^{+1} \left.\frac{\partial \bm{h}}{\partial \bm{q}} \right|_{\bm{q}(\xi)} \mathrm{d}\xi,
\end{equation}
where $\bm{q}(\xi)$ is defined as $\frac{1}{2}(1-\xi)\bm{q}_{n}+\frac{1}{2}(1+\xi)\bm{q}_{n+1}$ for $\xi\in[-1,+1]$. The algorithmic Jacobian matrix defined in this way satifies for any admissible solution the discrete version of the hidden constraints, \ie 
\begin{equation}
\bm{H}_d(\bm{q}_{n}, \bm{q}_{n+1})(\bm{q}_{n+1}-\bm{q}_{n}) =\bm{0}.
\end{equation}

\begin{theorem} The discrete conservation laws of the energy-momentum integration algorithm specialized to the standard rotating rigid body are:
	\begin{subequations}
	\begin{align}
	K_{n+1} - K_n &= 0 \, ,\\
	\bm{\pi}_{n+1} - \bm{\pi}_n &= 0\, .         
	\end{align}
	\end{subequations}
\end{theorem}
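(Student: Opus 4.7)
Both identities will be obtained, as is standard for energy--momentum schemes, by testing the discrete variational equation with carefully chosen admissible variations --- the discrete counterparts of the infinitesimal symmetries that produce the continuous invariants. Throughout, I assume $\bm{Q}^{\mathrm{ext}}_{n+1/2}=\bm{0}$ and lean on two properties of the average--vector--field Jacobian: the exact telescoping $\bm{H}_d(\bm{q}_n,\bm{q}_{n+1})(\bm{q}_{n+1}-\bm{q}_n)=\bm{h}(\bm{q}_{n+1})-\bm{h}(\bm{q}_n)$, which follows from the fundamental theorem of calculus applied to the line integral defining $\bm{H}_d$; and a direct evaluation of that same integral, which for the bilinear $\bm{h}$ of Eq.~\eqref{eq_internal_constraint} gives $\bm{H}_d$ as $\partial\bm{h}/\partial\bm{q}$ evaluated at the midpoint directors $\bm{d}_{i,n+1/2}:=\tfrac{1}{2}(\bm{d}_{i,n}+\bm{d}_{i,n+1})$.

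For the energy law, the plan is to test with $\delta\bm{q}_{n+1/2} = \bm{q}_{n+1}-\bm{q}_n$ and $\delta\bm{\lambda}_{n+1}=\bm{0}$. The constraint term vanishes because both $\bm{q}_n$ and $\bm{q}_{n+1}$ are admissible, so $\bm{H}_d(\bm{q}_{n+1}-\bm{q}_n)=\bm{h}(\bm{q}_{n+1})-\bm{h}(\bm{q}_n)=\bm{0}$. The momentum term is then handled by the usual midpoint identity: using $\bm{p}=\bm{M}\dot{\bm{q}}$, $\dot{\bm{q}}_{n+1} = \tfrac{2}{h}(\bm{q}_{n+1}-\bm{q}_n)-\dot{\bm{q}}_n$, and the given expression for $\dot{\bm{p}}_d$, a short algebraic manipulation yields $(\bm{q}_{n+1}-\bm{q}_n)\cdot\dot{\bm{p}}_d = K_{n+1}-K_n$ for any symmetric $\bm{M}$. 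Combining the two contributions gives $K_{n+1}=K_n$.

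For angular momentum, the plan is to test with the rotational variation $\delta\bm{d}_i = \bm{\eta}\times\bm{d}_{i,n+1/2}$ for arbitrary $\bm{\eta}\in\reals^3$. The constraint contribution vanishes by a discrete SO(3)--invariance argument: because $\bm{H}_d$ is built from the midpoint directors and $\bm{h}$ depends only on the scalar products $\bm{d}_i\cdot\bm{d}_j$, a direct expansion of $\bm{\eta}\cdot\sum_i\bm{d}_{i,n+1/2}\times(\bm{H}_d^T\bm{\lambda}_{n+1/2})_i$ produces pairs $\bm{d}_{i,n+1/2}\times\bm{d}_{j,n+1/2}$ and $\bm{d}_{j,n+1/2}\times\bm{d}_{i,n+1/2}$ that cancel in pairs. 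For the kinetic part, using $h\dot{\bm{p}}_d=\bm{p}_{n+1}-\bm{p}_n$, the test produces $\tfrac{1}{h}\bm{\eta}\cdot\sum_i\bm{d}_{i,n+1/2}\times(\bm{\pi}^i_{n+1}-\bm{\pi}^i_n)$, which must be rearranged into $\tfrac{1}{h}\bm{\eta}\cdot(\bm{\pi}_{n+1}-\bm{\pi}_n)$ with the spatial angular momentum $\bm{\pi}_m:=\sum_i \bm{d}_{i,m}\times\bm{\pi}^i_m$.

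The main technical obstacle is precisely this last rearrangement, which reduces to proving the discrete identity
\begin{equation}
\sum_i (\bm{d}_{i,n+1}-\bm{d}_{i,n})\times(\bm{\pi}^i_{n+1}+\bm{\pi}^i_n)=\bm{0}\,.
\end{equation}
Substituting $\bm{\pi}^i = \sum_j\mathscr{J}^{ij}\dot{\bm{d}}_j$ and the midpoint identity $\dot{\bm{d}}_{j,n+1}+\dot{\bm{d}}_{j,n}=\tfrac{2}{h}(\bm{d}_{j,n+1}-\bm{d}_{j,n})$ converts the left--hand side into $\tfrac{2}{h}\sum_{i,j}\mathscr{J}^{ij}(\bm{d}_{i,n+1}-\bm{d}_{i,n})\times(\bm{d}_{j,n+1}-\bm{d}_{j,n})$; the Euler--inertia symmetry $\mathscr{J}^{ij}=\mathscr{J}^{ji}$ combined with the antisymmetry of the cross product then kills the sum term by term. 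Arbitrariness of $\bm{\eta}$ finally yields $\bm{\pi}_{n+1}=\bm{\pi}_n$.
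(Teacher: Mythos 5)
Your proposal is correct. Note that the paper does not actually prove this theorem: it simply declares it a standard result and omits the details, so your argument serves as a reconstruction rather than a parallel of a written proof. Both halves check out. For energy, testing with $\delta\bm{q}_{n+\frac12}=\bm{q}_{n+1}-\bm{q}_n=\frac{h}{2}(\dot{\bm{q}}_{n+1}+\dot{\bm{q}}_n)$ kills the constraint term by the telescoping property of the average-vector-field Jacobian and turns the inertia term into $\frac12\dot{\bm{q}}_{n+1}\cdot\bm{M}\dot{\bm{q}}_{n+1}-\frac12\dot{\bm{q}}_{n}\cdot\bm{M}\dot{\bm{q}}_{n}$; this is precisely the mechanism the paper itself uses later in the discrete nonholonomic theorem. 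For momentum, your observation that $\bm{H}_d$ coincides with the constraint Jacobian evaluated at the midpoint directors (because $\bm{h}$ is bilinear) makes the rotational test variation $\delta\bm{d}_i=\bm{\eta}\times\bm{d}_{i,n+\frac12}$ annihilate the multiplier term through pairwise cancellation of cross products, and the key discrete identity $\sum_i(\bm{d}_{i,n+1}-\bm{d}_{i,n})\times(\bm{\pi}^i_{n+1}+\bm{\pi}^i_n)=\bm{0}$ follows exactly as you say from $\dot{\bm{d}}_{j,n+1}+\dot{\bm{d}}_{j,n}=\frac{2}{h}(\bm{d}_{j,n+1}-\bm{d}_{j,n})$ together with the symmetry of $\mathscr{J}^{ij}$ and the antisymmetry of the cross product, yielding exact (not merely $\Order{h}{2}$) conservation of $\bm{\pi}=\sum_i\bm{d}_i\times\bm{\pi}^i$. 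It is worth remarking that your exact algebraic cancellation is tighter than the route the paper takes for the analogous nonholonomic discrete momenta, where Taylor expansions with $\Order{h}{2}$ remainders are invoked; in the unconstrained (purely holonomic) case treated here, no such approximation is needed, and your argument shows this cleanly.
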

\begin{proof}
	This is an standard result and thus, we omit further details.
\end{proof}

\subsection{Specialized energy-momentum algorithm}
\noindent The energy-momentum integration algorithm can be further specialized to the nonholonomic case, where the discrete governing equations are:

\begin{equation}
\begin{gathered}
\bigl\langle \delta \bm{q}_{n+\frac{1}{2}},
\dot{\bm{p}}_d(\bm{q}_n,\dot{\bm{q}}_n,\bm{q}_{n+1})-\bm{Q}^\textrm{ext}_{n+\frac{1}{2}}+\\
\bm{H}_d^T(\bm{q}_n,\bm{q}_{n+1}) \bm{\lambda}_{n+\frac{1}{2}}+\bm{G}_d^T(\bm{q}_n,\bm{q}_{n+1})\bm{\mu}_{n+\frac{1}{2}} 
\bigr\rangle+
\bigl\langle \delta \bm{\lambda}_{n+1}, \bm{h}(\bm{q}_{n+1})
\bigr\rangle
= 0\, ,\\
\bm{G}_d(\bm{q}_n,\bm{q}_{n+1})(\bm{q}_{n+1}-\bm{q}_n) = \bm{0}\, .
\end{gathered}
\label{eq-discrete-nonholonomic}
\end{equation}

Once again, we can use the \textit{average vector field} to compute
\begin{equation}
\bm{G}_d(\bm{q}_n,\bm{q}_{n+1})=\frac{1}{2}\int_{-1}^{+1} \bm{G}(\bm{q}(\xi)) \mathrm{d}\xi
\end{equation}
that arises from the nonholonomic constraint, where $\bm{q}(\xi)$ is defined as before. In this way, the nonholonomic constraint is identically satisfied at the midpoint, \ie 
\begin{equation}
\bm{G}_d(\bm{q}_{n}, \bm{q}_{n+1})(\bm{q}_{n+1}-\bm{q}_{n}) =\bm{0}\, .
\end{equation}
\begin{theorem} The discrete conservation laws of the energy-momentum integration algorithm specialized to the rotating rigid body based on the \emph{non-twisting} frame are:
	\begin{subequations}
		\begin{align}
		K_{n+1} - K_n &= 0 \, ,\\
		\Pi^1_{n+1} - \Pi^1_n &= 0\, ,\\
		\Pi^2_{n+1} - \Pi^2_n &= 0\, .
		\end{align}
	\end{subequations}
\end{theorem}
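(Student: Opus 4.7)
My plan is to mirror the continuous proof of Theorem~4.2 inside the discrete variational statement~(\ref{eq-discrete-nonholonomic}), using three structural features of the scheme: the midpoint form of $\dot{\bm{p}}_d$, the average-vector-field identities $\bm{H}_d(\bm{q}_n,\bm{q}_{n+1})(\bm{q}_{n+1}-\bm{q}_n)=\bm{0}$ and $\bm{G}_d(\bm{q}_n,\bm{q}_{n+1})(\bm{q}_{n+1}-\bm{q}_n)=\bm{0}$, and the adjustable parameter $a\in[0,1]$ appearing in the constraint $g$. The kinetic energy law will follow by testing with a position increment, whereas the two nonholonomic momenta will follow from testing with discrete infinitesimal rotations of the midpoint frame.

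For the energy identity, I set $\delta\bm{q}_{n+\frac{1}{2}}=\bm{q}_{n+1}-\bm{q}_n$, $\delta\bm{\lambda}=\bm{0}$, $\delta\bm{\mu}=\bm{0}$ in (\ref{eq-discrete-nonholonomic}) with $\bm{Q}^{\mathrm{ext}}=\bm{0}$. Expanding $\dot{\bm{p}}_d$ with the midpoint rule and eliminating $\dot{\bm{q}}_{n+1}$ via $\dot{\bm{q}}_{n+1}=\tfrac{2}{h}(\bm{q}_{n+1}-\bm{q}_n)-\dot{\bm{q}}_n$, the quadratic form telescopes:
\begin{equation*}
(\bm{q}_{n+1}-\bm{q}_n)\cdot\dot{\bm{p}}_d=\tfrac{1}{2}\dot{\bm{q}}_{n+1}\cdot\bm{M}\dot{\bm{q}}_{n+1}-\tfrac{1}{2}\dot{\bm{q}}_n\cdot\bm{M}\dot{\bm{q}}_n=K_{n+1}-K_n\, .
\end{equation*}
Both multiplier contributions vanish by the two AVF identities, yielding $K_{n+1}-K_n=0$.

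For $\Pi^1$ and $\Pi^2$, I would test with the midpoint rotation $\delta\bm{d}_{i,n+\frac{1}{2}}=\delta\bm{\phi}\times\bm{d}_{i,n+\frac{1}{2}}$ for $i=1,2,3$ and a constant $\delta\bm{\phi}\in\reals^3$. This variation preserves midpoint orthonormality and therefore lies in the kernel of $\bm{H}_d$, killing the holonomic reaction. A short computation using $(\bm{a}\times\bm{b})\cdot\bm{c}=\bm{a}\cdot(\bm{b}\times\bm{c})$ and the midpoint orthogonality $\bm{d}_{i,n+\frac{1}{2}}\cdot\bm{d}_{j,n+\frac{1}{2}}=0$ for $i\neq j$ reduces the nonholonomic reaction to $\mu_{n+\frac{1}{2}}\,\delta\bm{\phi}\cdot\bm{d}_{3,n+\frac{1}{2}}$, \emph{independently of the value of~$a$}. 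Restricting $\delta\bm{\phi}$ to $\mathrm{span}(\bm{d}_{1,n+\frac{1}{2}},\bm{d}_{2,n+\frac{1}{2}})$ then eliminates this contribution as well, and what remains of the variational statement reduces, via the algebraic identity $\bm{\pi}=\sum_i\bm{d}_i\times\bm{\pi}^i$ and the midpoint Leibniz rule $\bm{a}_{n+1}\bm{b}_{n+1}-\bm{a}_n\bm{b}_n=\bm{a}_{n+\frac{1}{2}}(\bm{b}_{n+1}-\bm{b}_n)+(\bm{a}_{n+1}-\bm{a}_n)\bm{b}_{n+\frac{1}{2}}$, to $\delta\bm{\phi}\cdot(\bm{\pi}_{n+1}-\bm{\pi}_n)=0$. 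Choosing $\delta\bm{\phi}=\bm{d}_{\alpha,n+\frac{1}{2}}$ for $\alpha=1,2$ and invoking the midpoint Leibniz rule once more on the product $\Pi^\alpha=\bm{d}_\alpha\cdot\bm{\pi}$ delivers the two conservation laws.

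The hard part is the final step, namely that the residual Leibniz correction $(\bm{d}_{\alpha,n+1}-\bm{d}_{\alpha,n})\cdot\bm{\pi}_{n+\frac{1}{2}}$ genuinely vanishes at the discrete level. In the continuous setting this is immediate because $\dot{\bm{d}}_\alpha\in\mathrm{span}(\bm{d}_3)$ whereas $\bm{\pi}\in\mathrm{span}(\bm{d}_1,\bm{d}_2)$ along the nonholonomic flow; in the discrete setting the corresponding orthogonalities are enforced only at the time-step levels, and closing them at the midpoint forces one to use the AVF average of $\bm{G}_d$ jointly with the freedom in the parameter~$a$, which is precisely why the constraint was written in the asymmetric form $(1-a)\dot{\bm{d}}_1\cdot\bm{d}_2-a\dot{\bm{d}}_2\cdot\bm{d}_1$ in the first place. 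Verifying that the cross terms cancel exactly, rather than only to leading order in $h$, constitutes the real technical content of the theorem.
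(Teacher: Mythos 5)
Your kinetic-energy argument is identical to the paper's: your test variation $\delta\bm{q}_{n+\frac{1}{2}}=\bm{q}_{n+1}-\bm{q}_n$ equals $\tfrac{h}{2}(\dot{\bm{q}}_{n+1}+\dot{\bm{q}}_n)$ under the update rule, which is precisely the variation used in the paper, and the telescoping of the quadratic form is the same. For the momenta, however, your route is genuinely different from the paper's, and it is worth contrasting the two. You test with an exact infinitesimal rotation of the midpoint frame, $\delta\bm{d}_{i,n+\frac12}=\delta\bm{\phi}\times\bm{d}_{i,n+\frac12}$ for all three $i$; this kills the holonomic reaction $\bm{H}_d^T\bm{\lambda}$ \emph{exactly} because each entry of $\bm{H}_d$ is the midpoint evaluation of a bilinear gradient, and it collapses the nonholonomic reaction to $\mu\,\delta\bm{\phi}\cdot(\bm{d}_{1,n+\frac12}\times\bm{d}_{2,n+\frac12})$ independently of $a$, which you then eliminate by restricting $\delta\bm{\phi}$. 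The paper instead uses the lopsided variation $\tfrac{h}{2}(\bm{0},\bm{d}_{3,n+1}+\bm{d}_{3,n},-\bm{d}_{2,n+1}-\bm{d}_{2,n},\bm{0},\bm{0})$, which is \emph{not} an exact midpoint rotation; because of this, the $a$-dependence does not cancel on its own and the paper must choose $a=1$ to kill the reaction on $\delta\bm{d}_1$ (and $a=0$ for $\Pi^2$). That is the actual role of the free parameter $a$ in the paper, and your observation that an exact rotation variation makes $a$ irrelevant is a real structural insight you should flag, rather than trying to force $a$ back into your own argument.

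Two caveats. First, your appeal to ``midpoint orthogonality $\bm{d}_{i,n+\frac12}\cdot\bm{d}_{j,n+\frac12}=0$'' is false: the holonomic constraints are imposed at the time levels $n$ and $n+1$, so $\bm{d}_{i,n+\frac12}\cdot\bm{d}_{j,n+\frac12}=\tfrac14(\bm{d}_{i,n}\cdot\bm{d}_{j,n+1}+\bm{d}_{i,n+1}\cdot\bm{d}_{j,n})$, which is $\mathcal{O}(h^2)$ but not zero. You do not actually need it, since $\bm{d}_{1,n+\frac12}\times\bm{d}_{2,n+\frac12}$ is orthogonal to both $\bm{d}_{1,n+\frac12}$ and $\bm{d}_{2,n+\frac12}$ regardless, but as written the step is wrong. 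Second, and more importantly, you correctly single out the residual Leibniz term $(\bm{d}_{\alpha,n+1}-\bm{d}_{\alpha,n})\cdot\bm{\pi}_{n+\frac12}$ as the real difficulty, and you do not close it. You should be aware that the paper does not close it exactly either: after arriving at an expression of the form $\Pi^1_{n+1}-\Pi^1_n+\mathcal{O}(h^2)$, the paper concludes with $\approx\Pi^1_{n+1}-\Pi^1_n$, i.e.\ a Taylor argument to second order rather than an exact cancellation. So the gap you have flagged is present in the paper's own proof. In that respect your diagnosis of where the technical burden lies is correct; you just have not discharged it, and neither has the source you were asked to reproduce.
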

\begin{proof}
To prove the conservation of kinetic energy, let us consider the following discrete variation
\begin{equation}
(\delta \bm{q}_{n+\frac{1}{2}}, \delta \bm{\lambda}_{n+1}) = \frac{h}{2}(\dot{\bm{q}}_{n+1}+\dot{\bm{q}}_n, \bm{0})\, .
\end{equation}

 By inserting the previous discrete variation in Eq. \eqref{eq-discrete-nonholonomic}, we get  
\begin{equation}
\begin{split}
0 &= \frac{1}{2}(\dot{\bm{q}}_{n+1}+\dot{\bm{q}}_n)\cdot \bm{M}(\dot{\bm{q}}_{n+1}-\dot{\bm{q}}_n) \\
  & = \frac{1}{2}\dot{\bm{q}}_{n+1}\cdot \bm{M}\dot{\bm{q}}_{n+1}-\frac{1}{2}\dot{\bm{q}}_n\cdot \bm{M}\dot{\bm{q}}_n \\
  &= K_{n+1}-K_n\, .\\
\end{split}
\end{equation}

 For the first component of the angular momentum, \ie $\Pi^1$, we need to consider the following discrete variation

\begin{equation}
\begin{split}
(\delta \bm{q}_{n+\frac{1}{2}}, \delta \bm{\lambda}_{n+1}) & = (\delta \bm{d}_{1, n+\frac{1}{2}}, \delta \bm{d}_{2, n+\frac{1}{2}}, \delta \bm{d}_{3, n+\frac{1}{2}}, \delta \bm{\lambda}_{n+1}) \\
& =  \frac{h}{2}( \bm{0}, \bm{d}_{3, n+1}+\bm{d}_{3, n}, -\bm{d}_{2, n+1}-\bm{d}_{2, n}, \bm{0}, \bm{0})\, , 
\end{split}
\end{equation}
and let $a$ be equal to $1$. By inserting the previous discrete variation in Eq. \eqref{eq-discrete-nonholonomic}, we get  
\begin{equation}
\frac{1}{2}(\bm{d}_{3, n+1}+\bm{d}_{3, n}) \cdot (\bm{\pi}^2_{n+1}-\bm{\pi}^2_n)-\frac{1}{2}(\bm{d}_{2, n+1}+\bm{d}_{2, n}) \cdot (\bm{\pi}^3_{n+1}-\bm{\pi}^3_n) = 0\, ,
\label{eq-discrete-nonholonomic-1}
\end{equation}
where
\begin{equation}
(\bm{d}_{3, n+1}+\bm{d}_{3, n}) \cdot (\bm{\pi}^2_{n+1}-\bm{\pi}^2_n) = 
\bm{d}_{3, n+1}\cdot\bm{\pi}^2_{n+1}-\bm{d}_{3, n}\cdot\bm{\pi}^2_{n}+\bm{d}_{3, n}\cdot\bm{\pi}^2_{n+1}-\bm{d}_{3, n+1}\cdot\bm{\pi}^2_{n}\, .
\end{equation}
By using Taylor's approximations, we have that 
\begin{equation}
\bm{d}_{3, n+1}\cdot\bm{\pi}^2_{n} = \bm{d}_{3, n}\cdot\bm{\pi}^2_{n}+h(\dot{\bm{d}}_{3, n}\cdot\bm{\pi}^2_{n})+\Order{h}{2}\,
\end{equation}
and
\begin{equation}
\bm{d}_{3, n}\cdot\bm{\pi}^2_{n+1} = \bm{d}_{3, n+1}\cdot\bm{\pi}^2_{n+1}-h(\dot{\bm{d}}_{3, n+1}\cdot\bm{\pi}^2_{n+1})+\Order{h}{2}\, .
\end{equation}
Then
\begin{equation}
\begin{split}
	\bm{d}_{3, n}\cdot\bm{\pi}^2_{n+1}-\bm{d}_{3, n+1}\cdot\bm{\pi}^2_{n} 
	 & = \bm{d}_{3, n+1}\cdot\bm{\pi}^2_{n+1}-\bm{d}_{3, n}\cdot\bm{\pi}^2_{n}+h(\dot{\bm{d}}_{3,n}\cdot\bm{\pi}^2_{n+1}+\dot{\bm{d}}_{3,n+1}\cdot\bm{\pi}^2_{n})+\Order{h}{2}\\
	& = \bm{d}_{3, n+1}\cdot\bm{\pi}^2_{n+1}-\bm{d}_{3, n}\cdot\bm{\pi}^2_{n}+\Order{h}{2}\, ,
\end{split}	
\end{equation}
insomuch as
\begin{equation}
	\dot{\bm{d}}_{3,n+1}\cdot\bm{\pi}^2_{n}+\dot{\bm{d}}_{3,n}\cdot\bm{\pi}^2_{n+1} 
	= 2\dot{\bm{d}}_{3,n+\frac{1}{2}}\cdot\bm{\pi}^2_{n+\frac{1}{2}}+\Order{h}{}
	= \Order{h}{}\, ,
\end{equation}
which can be easily shown by considering that
\begin{equation}
\begin{split}
\dot{\bm{d}}_{3}\cdot\bm{\pi}^2 
&=\dot{\bm{d}}_{3}\cdot(\mathscr{J}^{12}\dot{\bm{d}}_{1}+\mathscr{J}^{22}\dot{\bm{d}}_{2})\\
&=\mathscr{J}^{12}\dot{\bm{d}}_{3}\cdot\dot{\bm{d}}_{1}+\mathscr{J}^{22}\dot{\bm{d}}_{3}\cdot\dot{\bm{d}}_{2}\\
& = 0\, ,
\end{split}
\end{equation}
since the angular velocity $\bm{w}_{n+\frac{1}{2}}$ has the form $\bm{d}_{3,n+\frac{1}{2}}\times\dot{\bm{d}}_{3,n+\frac{1}{2}}$ due to the satisfaction of the \emph{non-twisting} condition, see Subsection \ref{subs-non-twisting-frame}.\\

 In this way, the first term of Eq. \eqref{eq-discrete-nonholonomic-1} becomes	
\begin{equation}
\frac{1}{2}(\bm{d}_{3, n+1}+\bm{d}_{3, n}) \cdot (\bm{\pi}^2_{n+1}-\bm{\pi}^2_n) 
= \bm{d}_{3, n+1}\cdot\bm{\pi}^2_{n+1}-\bm{d}_{3, n}\cdot\bm{\pi}^2_{n}+\Order{h}{2}\, ,
\end{equation}
and with same reasoning, the second term of Eq. \eqref{eq-discrete-nonholonomic-1} turns to be
\begin{equation}
\frac{1}{2}(\bm{d}_{2, n+1}+\bm{d}_{2, n}) \cdot (\bm{\pi}^3_{n+1}-\bm{\pi}^3_n) 
= \bm{d}_{2, n+1}\cdot\bm{\pi}^3_{n+1}-\bm{d}_{2, n}\cdot\bm{\pi}^3_{n}+\Order{h}{2}\,.
\end{equation}
By replacing the two previous expressions in \eqref{eq-discrete-nonholonomic-1}, we have that
\begin{equation}
\begin{split}
	0 &= (\bm{d}_{3, n+1}\cdot\bm{\pi}^2_{n+1}-\bm{d}_{2, n+1}\cdot\bm{\pi}^3_{n+1})-(\bm{d}_{3, n}\cdot\bm{\pi}^2_{n}-\bm{d}_{2, n}\cdot\bm{\pi}^3_{n})+\Order{h}{2}\\
	  &= \Pi^1_{n+1}-\Pi^1_{n}+\Order{h}{2}\\
	  &\approx \Pi^1_{n+1}-\Pi^1_{n}\, ,
\end{split}
\end{equation}
which is true since
\begin{equation}
\begin{split}
\Pi^1 &= \bm{d}_1\cdot\bm{\pi}\\
&= \bm{d}_1\cdot\left(\bm{d}_1\times\bm{\pi}^1+\bm{d}_2\times\bm{\pi}^2+\bm{d}_3\times\bm{\pi}^3\right) \\
&= \bm{d}_3\cdot \bm{\pi}^2-\bm{d}_2\cdot \bm{\pi}^3\, .
\end{split}
\end{equation}
Finally, for the second component of the angular momentum, \ie $\Pi^2$, we need to consider the following discrete variation
\begin{equation}
\begin{split}
(\delta \bm{q}_{n+\frac{1}{2}}, \delta \bm{\lambda}_{n+1}) & = (\delta \bm{d}_{1, n+\frac{1}{2}}, \delta \bm{d}_{2, n+\frac{1}{2}}, \delta \bm{d}_{3, n+\frac{1}{2}}, \delta \bm{\lambda}_{n+1}) \\
& =  \frac{h}{2}(-\bm{d}_{3, n+1}-\bm{d}_{3, n}, \bm{0}, \bm{d}_{2, n+1}+\bm{d}_{2, n}, \bm{0}, \bm{0})\,
\end{split}
\end{equation}
and let $a$ be equal to zero. Then, the rest of the proof follows as before.
\end{proof}


\section{Numerical results}
\label{sec-results}

In this section, we present numerical results of the motion of a rotating rigid body based on the \emph{non-twisting} frame with (non-physical) inertia

\begin{equation}
[J_{ij}] =
\begin{bmatrix}
3 & -1/7 & 0 \\
-1/7 &    4 & 0 \\
0 &    0 & 5
\end{bmatrix}
\,\, \mathrm{~Kg\,m^2}
\,\,
\textrm{or equivalently}
\,\,
[\mathscr{J}_{ij}] =
\begin{bmatrix}
3 &  1/7 & 0 \\
1/7 &    2 & 0 \\
0 &    0 & 1
\end{bmatrix}
\,\, \mathrm{~Kg\,m^2}\, .
\end{equation}
Next, we evaluate the qualitative properties of the proposed numerical setting in a reduced picture. For the first case, we consider the dynamic response to an initial condition different from the trivial one. For the second case, we consider the dynamic response to a vanishing load. The third and last case is a combination of both, \ie initial condition different from the trivial one and a vanishing  load. All the three cases were numerically solved in the time interval $[0, 5]\,\,\textrm{s}$ with a time step size of $h = 0.005\,\,\textrm{s}$ and relative tolerance $10^{-10}$.

\subsection{Case 1 - response to nonzero initial conditions}

For this first case we consider
\begin{equation}
\bm{\Lambda}(0) = \bm{I}\, , \quad \bm{\omega}_\perp(0) = 6\bm{d}_1(0)-18\bm{d}_2(0)
\,\,\mathrm{~rad/s}\,
\end{equation}
and
\begin{equation}
\bm{m}^{\mathrm{ext}}_\perp(t) = \bm{0}
\,\,\mathrm{~Kg\,m^2/s^2}.
\end{equation}
Figure \ref{fig-angular-momentum-ic} presents the time history for the spatial and material
components of the angular momentum. On the left, we can observe that the components of the spatial
angular momentum (SAM) oscillate with constant amplitude and frequency and therefore, they are not
constant as in the case of the standard rotating rigid body. On the right we can observe that the
components of the material angular momentum (MAM) are identically preserved. While the first and second components are constant and different from zero, the third one is zero as expected from the imposition of the nonholonomic restriction $\omega_\parallel = 0\,\,\mathrm{rad/s}$. As shown before for the analytical setting as well as for the numerical setting, this non-intuitive behavior results from the fact that the dynamics of the system is not taking place in the environment space, but on the 2-sphere. Therefore, this behavior is truly native on the 2-sphere since the directors $\bm{d}_1$ and $\bm{d}_2$ in $T_{\bm{d}_3}S^2$ are being parallel transported along the time-parameterized solution curve $\bm{d}_3$.\\

Figure \ref{fig-energy-precision-ic} presents the time history for the kinetic energy and
the second quotient of precision as defined in the Appendix. On the left we can observe that the kinetic energy is identically preserved as expected. On the right we see that the second quotient of precision is approximately $4$, see also Table \ref{tab-sqp_ic}, which means that the integrator is really achieving second-order accuracy.\\

Figure \ref{fig-trajectory-ic} shows the trajectory followed by $\bm{d}_3$, which as expected takes
place on a plane that separates the sphere into two half spheres. Such trajectory minimizes locally
the distance on $S^2$ and thus, this is geodesic. Finally, and to summarize the excellent performance
of the numerical setting, Table \ref{tab-invariants_ic} presents the stationary values for the
motion invariants, \ie the first and second components of the material angular momentum and kinetic energy.

\begin{figure}[h!]
	\centering{}
	\begin{tabular}{cc}
		\includegraphics[width=0.45\textwidth]{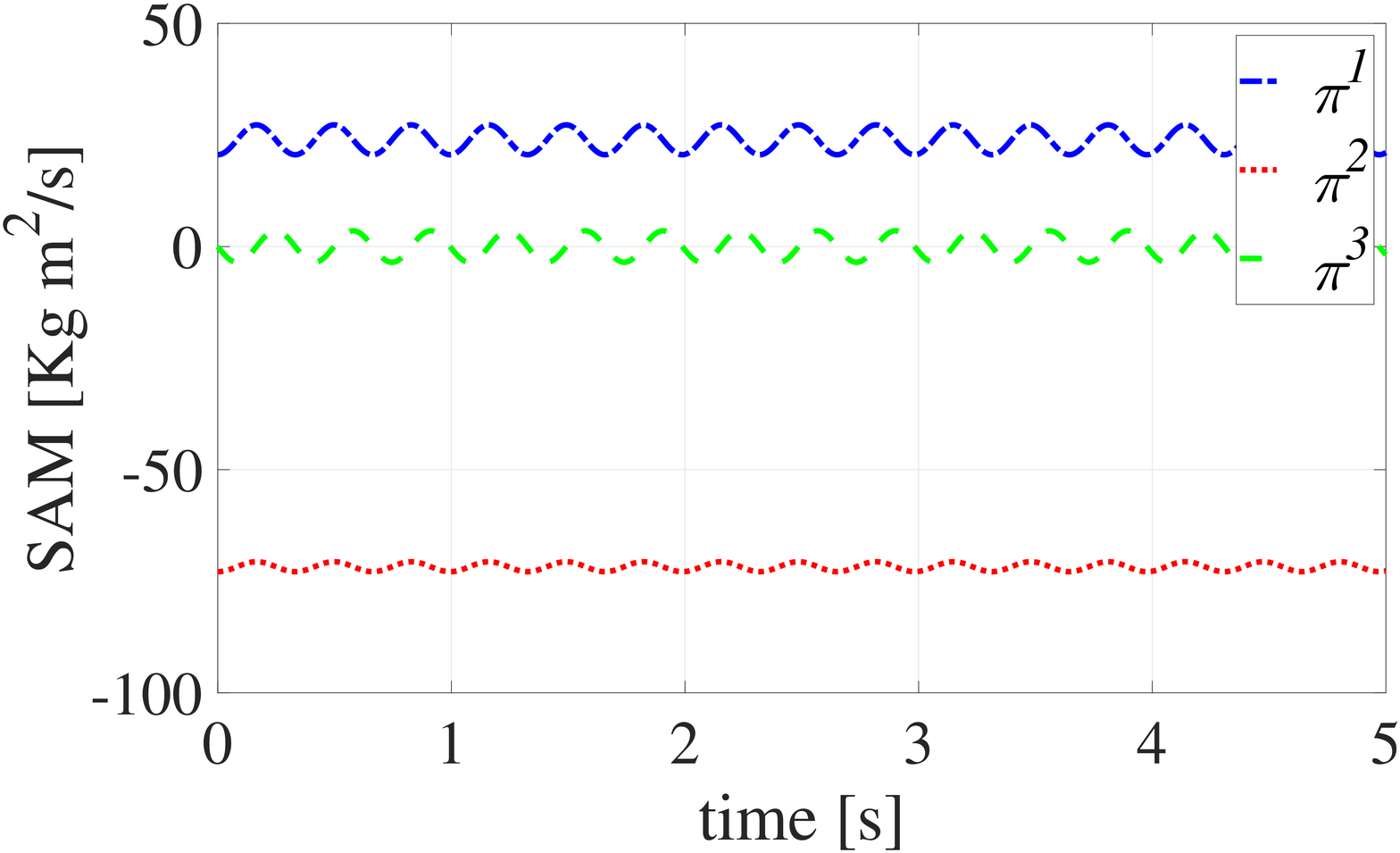} &	\includegraphics[width=0.45\textwidth]{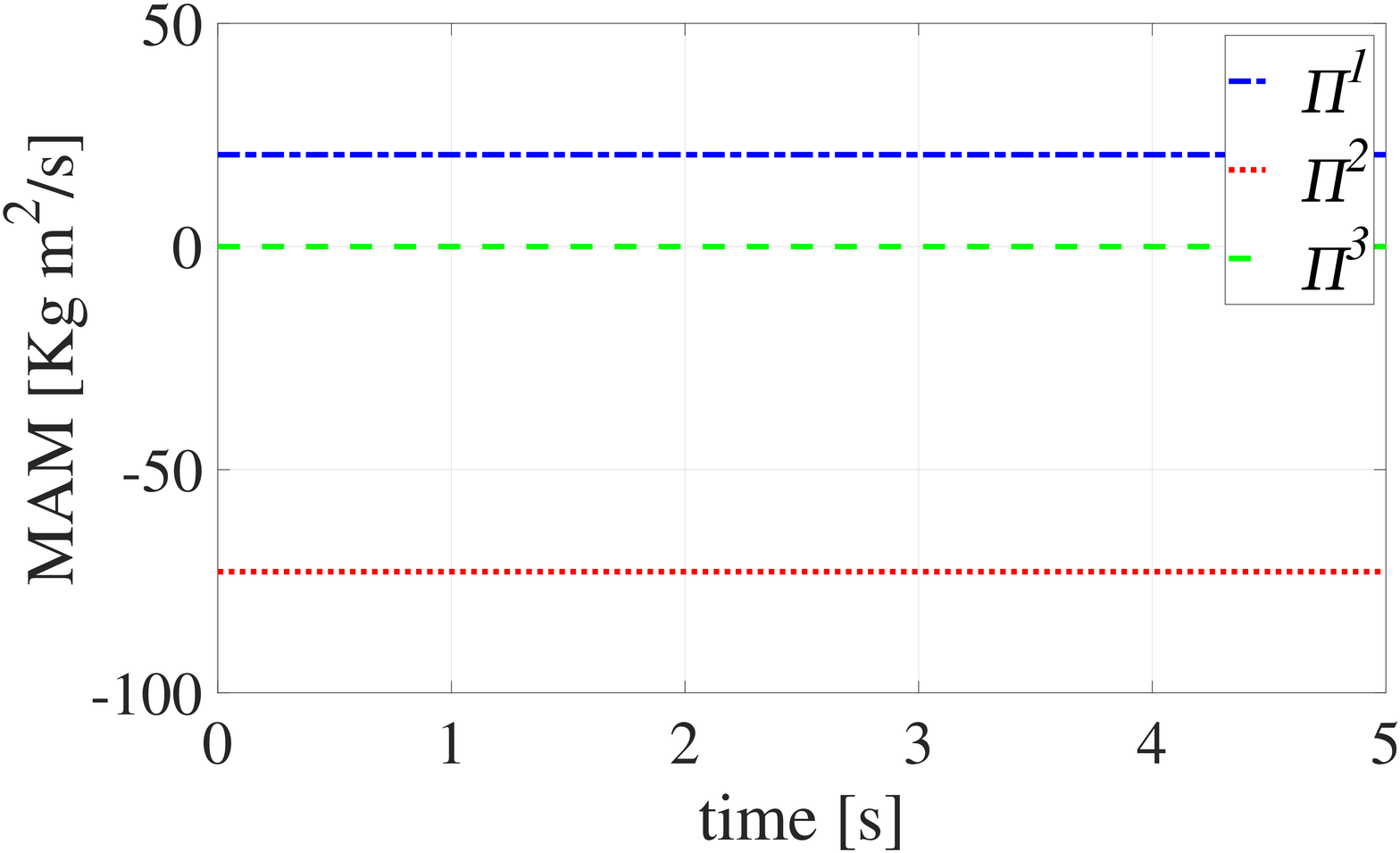}
	\end{tabular}
	\caption{Case 1 - SAM components (left) and MAM components (right).}
	\label{fig-angular-momentum-ic}
\end{figure} 

\begin{figure}[h!]
	\centering{}
	\begin{tabular}{cc}
		\includegraphics[width=0.45\textwidth]{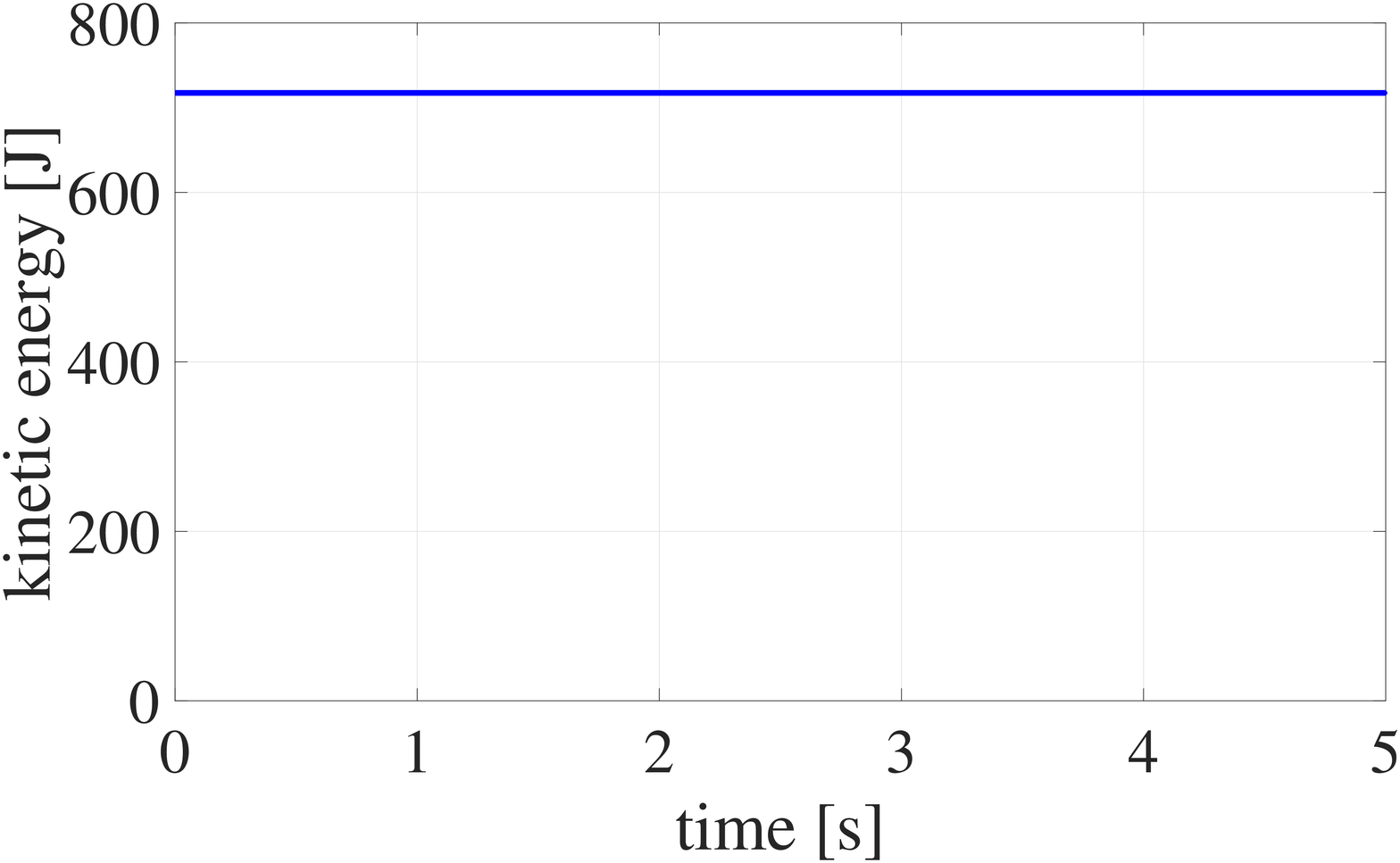} &	\includegraphics[width=0.45\textwidth]{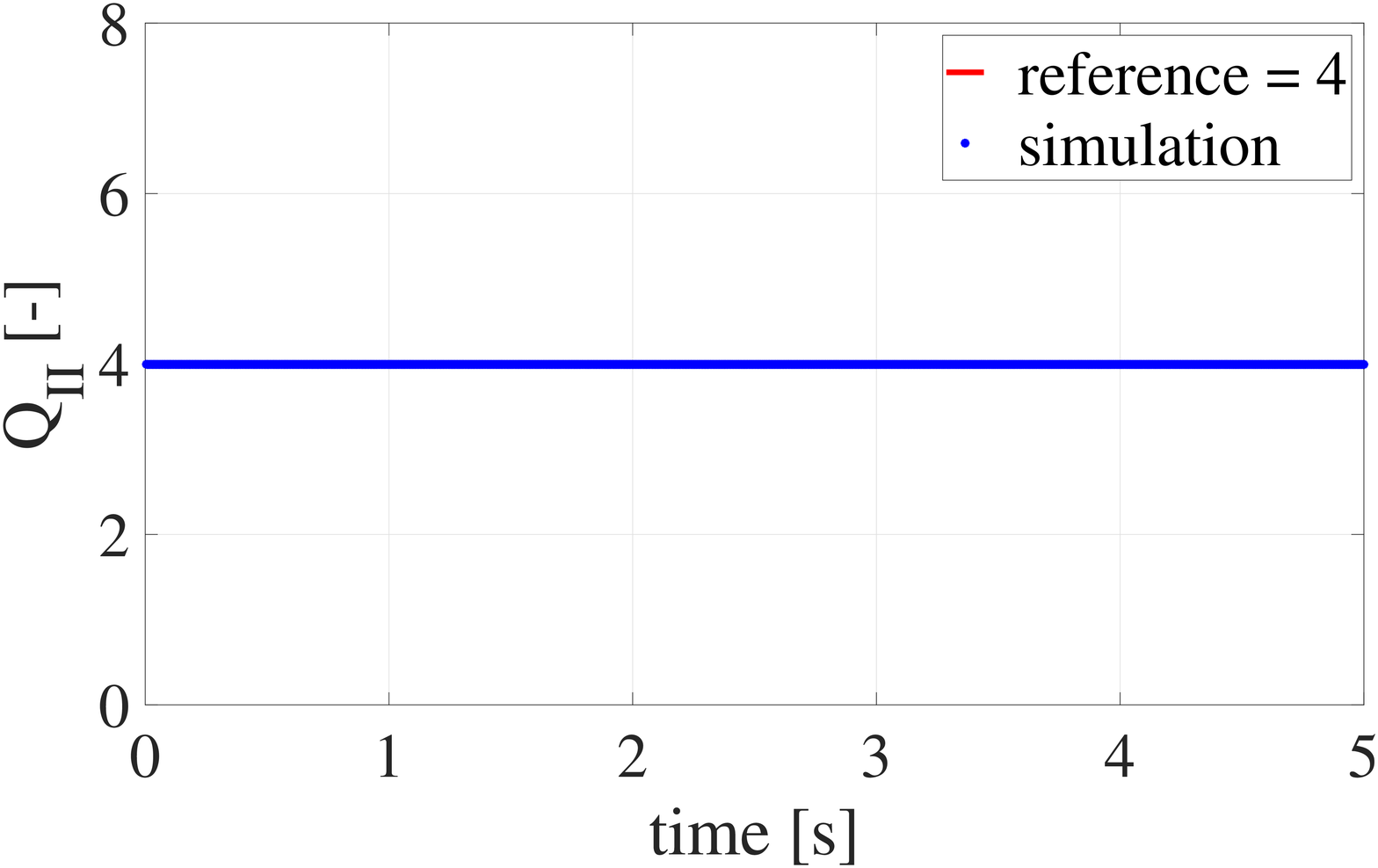}
	\end{tabular}
	\caption{Case 1 - kinetic energy (left) and second quotient of precision (right).}
	\label{fig-energy-precision-ic}
\end{figure} 

\begin{figure}[h!]
	\centering{}
	\includegraphics[width=0.5\textwidth]{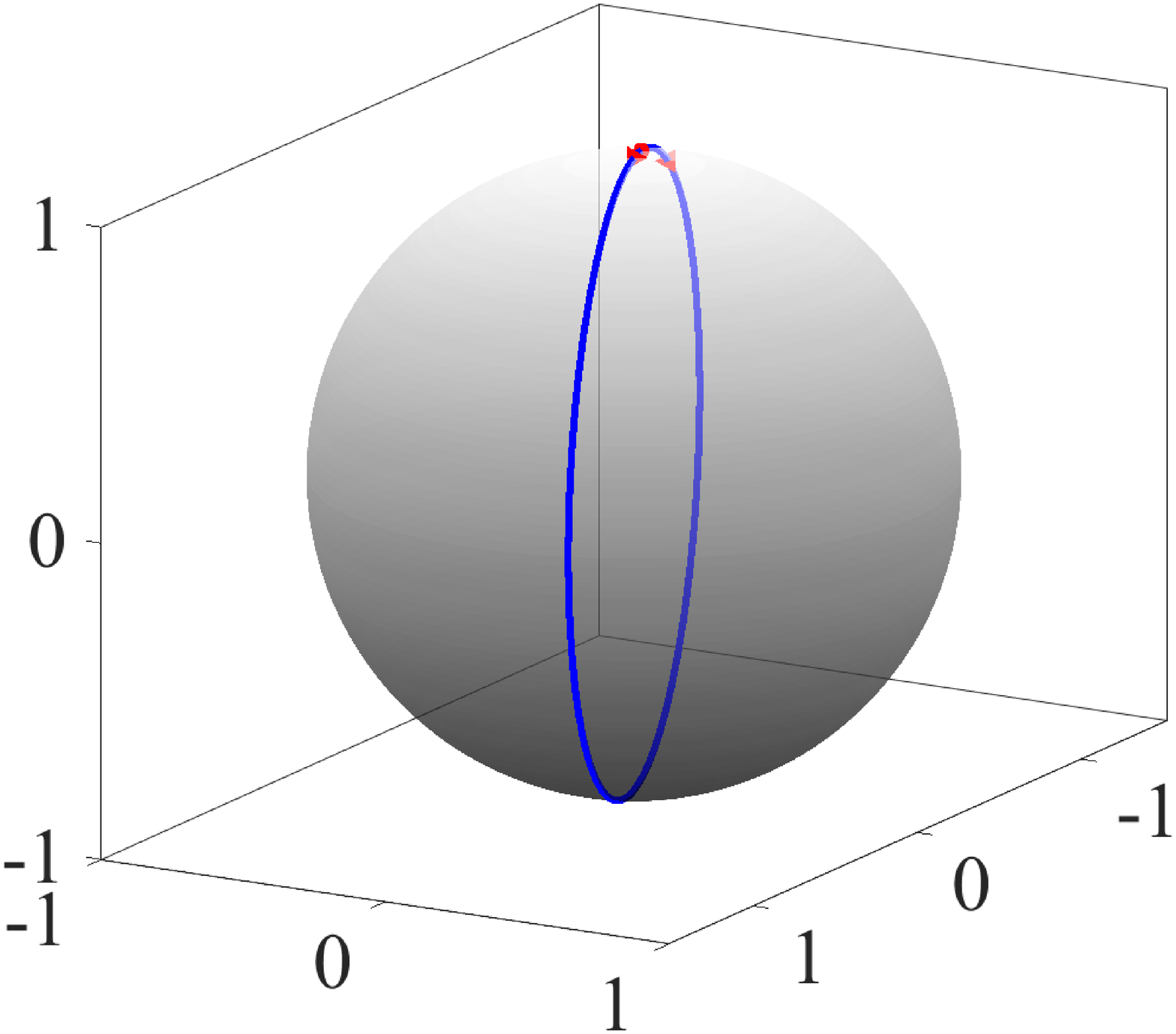}
	\caption{Case 1 - trajectory of $\bm{d}_3$ on $S^2$. }
	\label{fig-trajectory-ic}
\end{figure}

\begin{table}[!ht]
	\centering{}
	\begin{tabular}{|c|c|}
		\hline
		$t$        & $Q_{\mathrm{II}}$ \tabularnewline
		$[$s$]$    &             $[-]$ \tabularnewline
		\hline
		$1.000000$ &        $3.994933$ \tabularnewline
		\hline
		$2.000000$ &        $3.994869$ \tabularnewline
		\hline
		$3.000000$ &        $3.994787$ \tabularnewline
		\hline
		$4.000000$ &        $3.994663$ \tabularnewline
		\hline
		$5.000000$ &        $3.994504$ \tabularnewline		
		\hline		
	\end{tabular}\caption{Case 1 - second quotient of precision.}
	\label{tab-sqp_ic}
\end{table}

\begin{table}[!ht]
	\centering{}
	\begin{tabular}{|c|c|c|c|}
		\hline
		$t$        & $\Pi^1$             & $\Pi^2$             & $K$ \tabularnewline
		$[$s$]$    & $[$Kg$\,$m$^2$/s$]$ & $[$Kg$\,$m$^2$/s$]$ & $[$J$]$ \tabularnewline
		\hline
		$1.000000$ & $20.571428$        & $-72.857142$        & $717.428571$ \tabularnewline
		\hline
		$2.000000$ & $20.571428$        & $-72.857142$        & $717.428571$ \tabularnewline
		\hline
		$3.000000$ & $20.571428$        & $-72.857142$        & $717.428571$ \tabularnewline
		\hline
		$4.000000$ & $20.571428$        & $-72.857142$        & $717.428571$ \tabularnewline
		\hline
		$5.000000$ & $20.571428$        & $-72.857142$        & $717.428571$ \tabularnewline		
		\hline		
	\end{tabular}\caption{Case 1 - motion invariants - stationary values.}
	\label{tab-invariants_ic}
\end{table}

\subsection{Case 2 - response to a vanishing load}

For this second case we consider
\begin{equation}
\bm{\Lambda}(0) = \bm{I}\, , \quad \bm{\omega}_\perp(0) = \bm{0}\,\,\mathrm{~rad/s}
\end{equation}
and
\begin{equation}
\bm{m}^{\mathrm{ext}}_\perp(t) =-f(t)(2472.5\bm{d}_1(t)+1075\bm{d}_2(t))\,\,\mathrm{~ Kgm^2/s^2} ,
\end{equation}
where
\begin{equation}
f(t)=\left\{ \begin{array}{ccc}
2t & \mathrm{for} & 0\leq t<0.5\\
2-2t & \mathrm{for} & 0.5\leq t<1\\
0 & \mathrm{for} & t\geq1
\end{array}\right.
\label{eq-loads_time_variation}
\end{equation}

Figure \ref{fig-angular-momentum-ml} presents the time history for the spatial and material
components of the angular momentum, where the applied material load $\bm{m}^{\mathrm{ext}}_\perp$ is
active only during the first second of simulation. On the left figure, we observe that the
components of the spatial angular momentum vary starting from zero since the rotating rigid body is
initially at rest. After the load vanishes, the components of the spatial angular momentum
oscillate with constant amplitude and frequency and therefore, they are not constant, but indicate a
steady state. On the right figure we can observe that the components of the material angular momentum also
vary from zero, except the third one that remains always equal to zero. After the material load
vanishes, the components of the material angular momenta  are identically preserved. Once again, the
first and second components are constant and different from zero.

Figure \ref{fig-energy-precision-ml} presents the time history for the kinetic energy and the second
quotient of precision. On the left, we can observe that the kinetic energy vary during the first
second, where the applied material load is active. After this vanishes, the kinetic energy is
identically preserved. On the right figure we confirm again that the second quotient of precision is
approximately $4$, see also Table \ref{tab-sqp_ml}, which means that the integrator is second-order
accurate even during the time in which the applied material load is active.

Figure \ref{fig-trajectory-ml} shows the trajectory followed by $\bm{d}_3$, which due to the fixed relation among components of the applied material load  takes place on a plane that separates the sphere in two half spheres. Such trajectory minimizes locally the distance on $S^2$ and thus, this is geodesic as well. Table \ref{tab-invariants_ml} presents the stationary values for the motion invariants.

\begin{figure}[h!]
	\centering{}
	\begin{tabular}{cc}
		\includegraphics[width=0.45\textwidth]{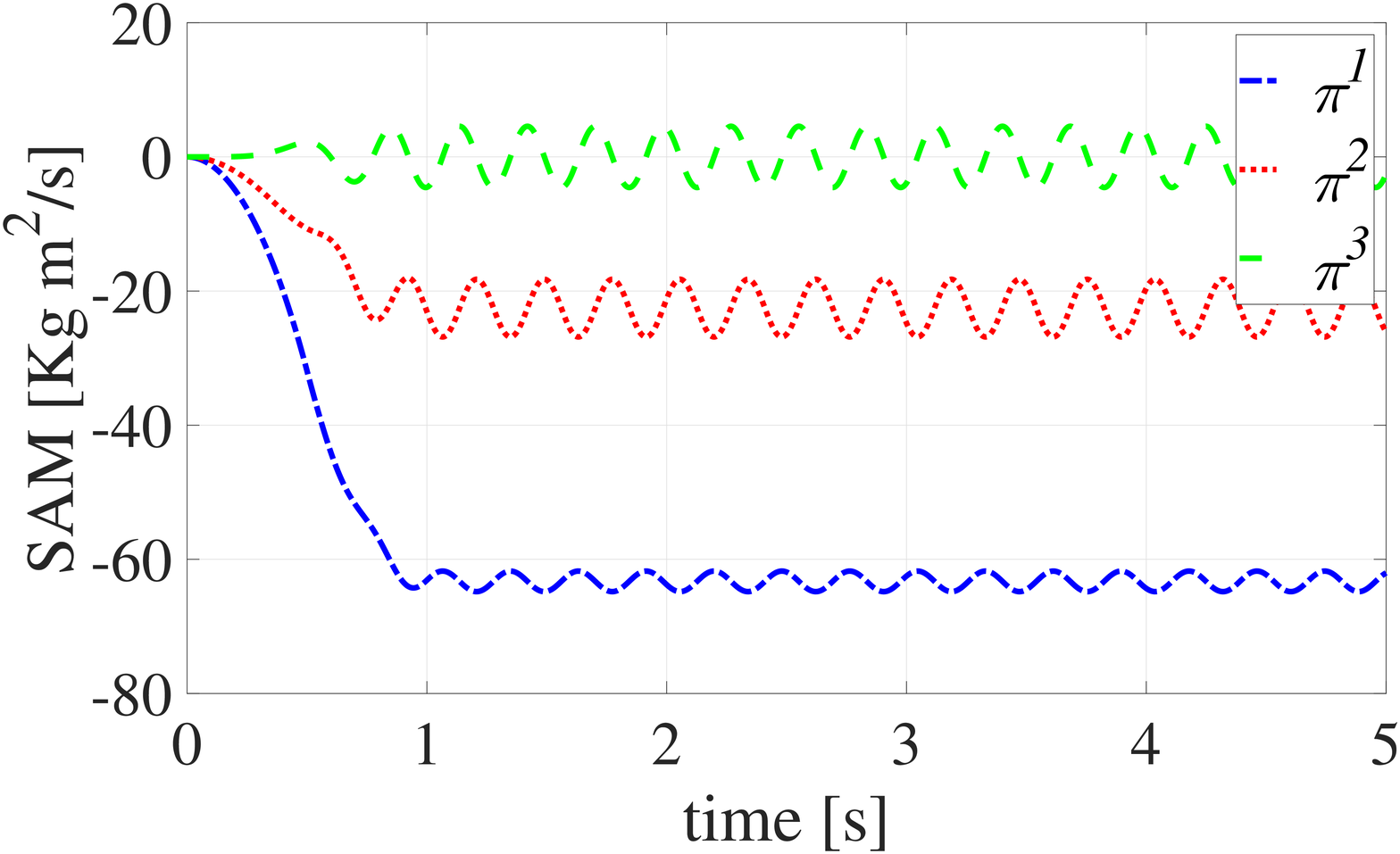} &	\includegraphics[width=0.45\textwidth]{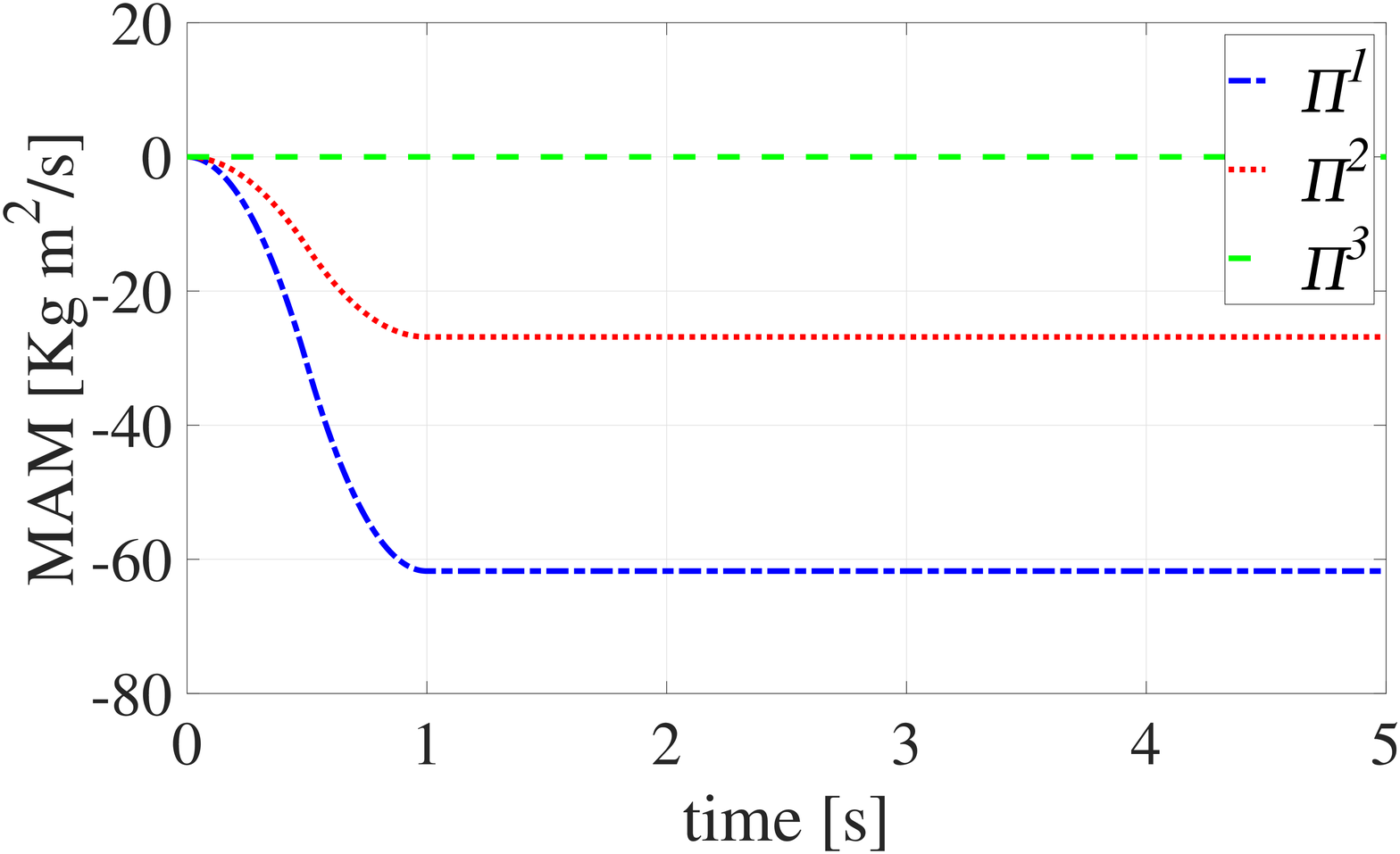}
	\end{tabular}
	\caption{Case 2 - SAM components (left) and MAM components (right).}
	\label{fig-angular-momentum-ml}
\end{figure} 

\begin{figure}[h!]
	\centering{}
	\begin{tabular}{cc}
		\includegraphics[width=0.45\textwidth]{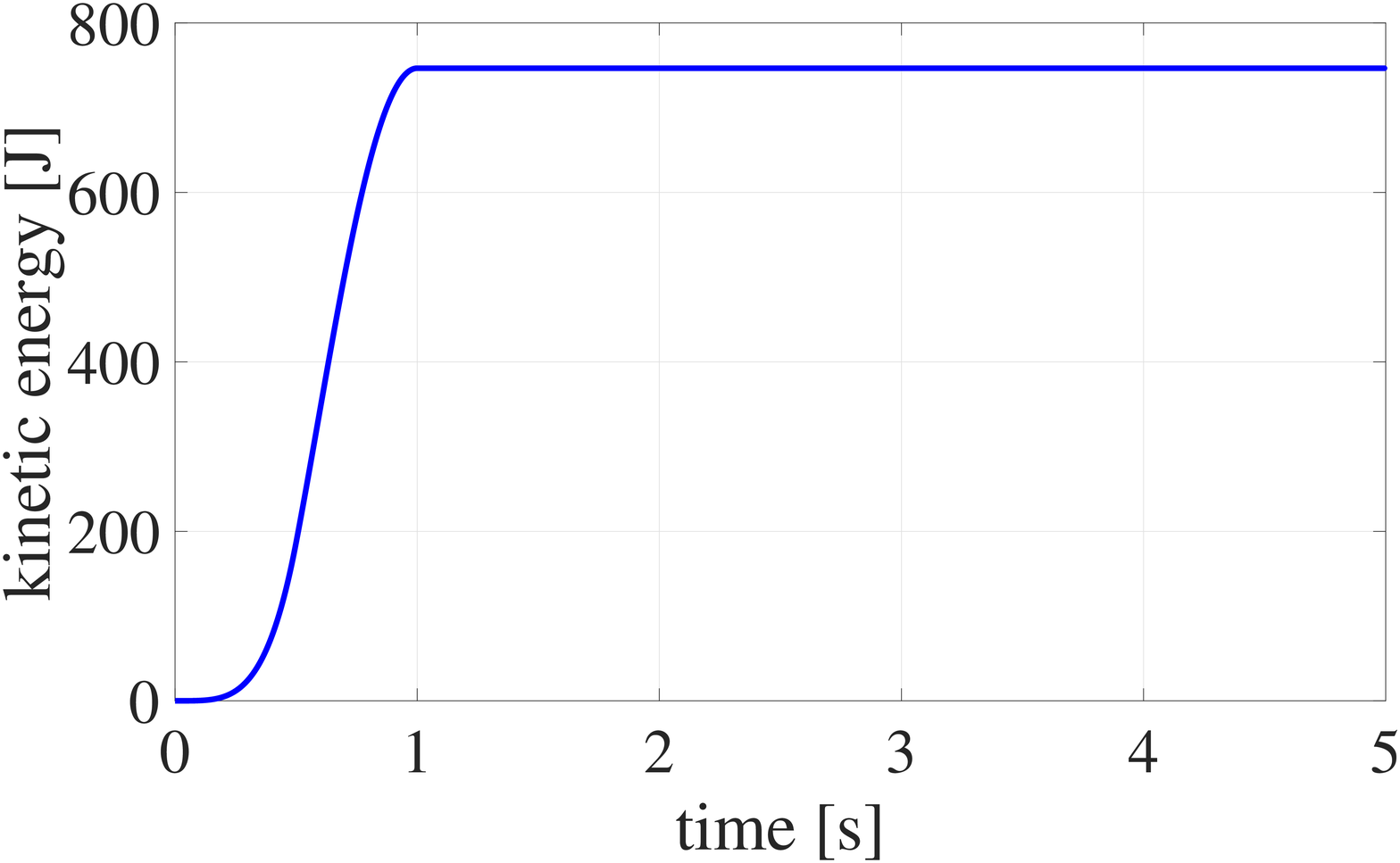} &	\includegraphics[width=0.45\textwidth]{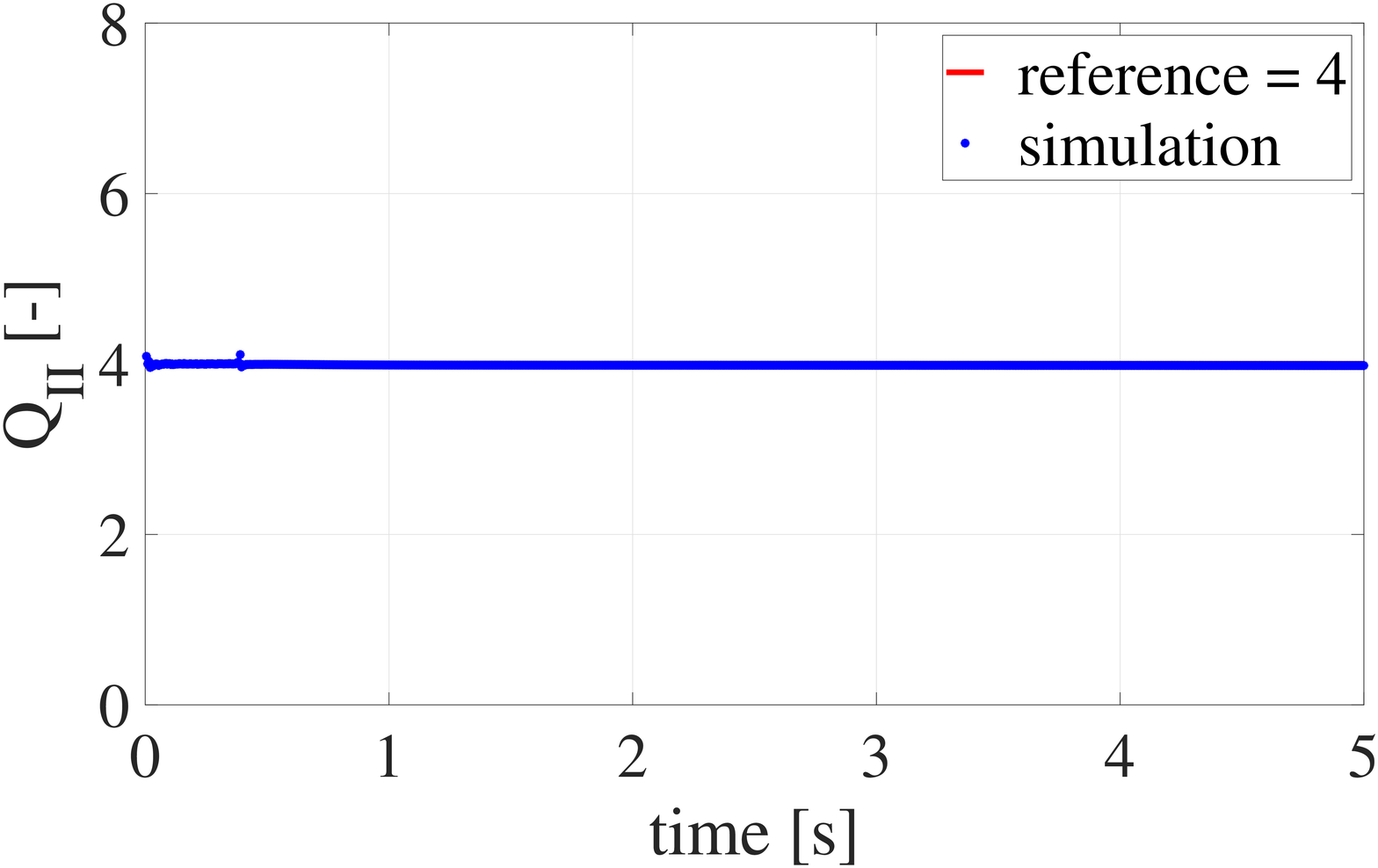}
	\end{tabular}
	\caption{Case 2 - kinetic energy (left) and second quotient of precision (right).}
	\label{fig-energy-precision-ml}
\end{figure}

\begin{figure}[h!]
	\centering{}
	\includegraphics[width=0.5\textwidth]{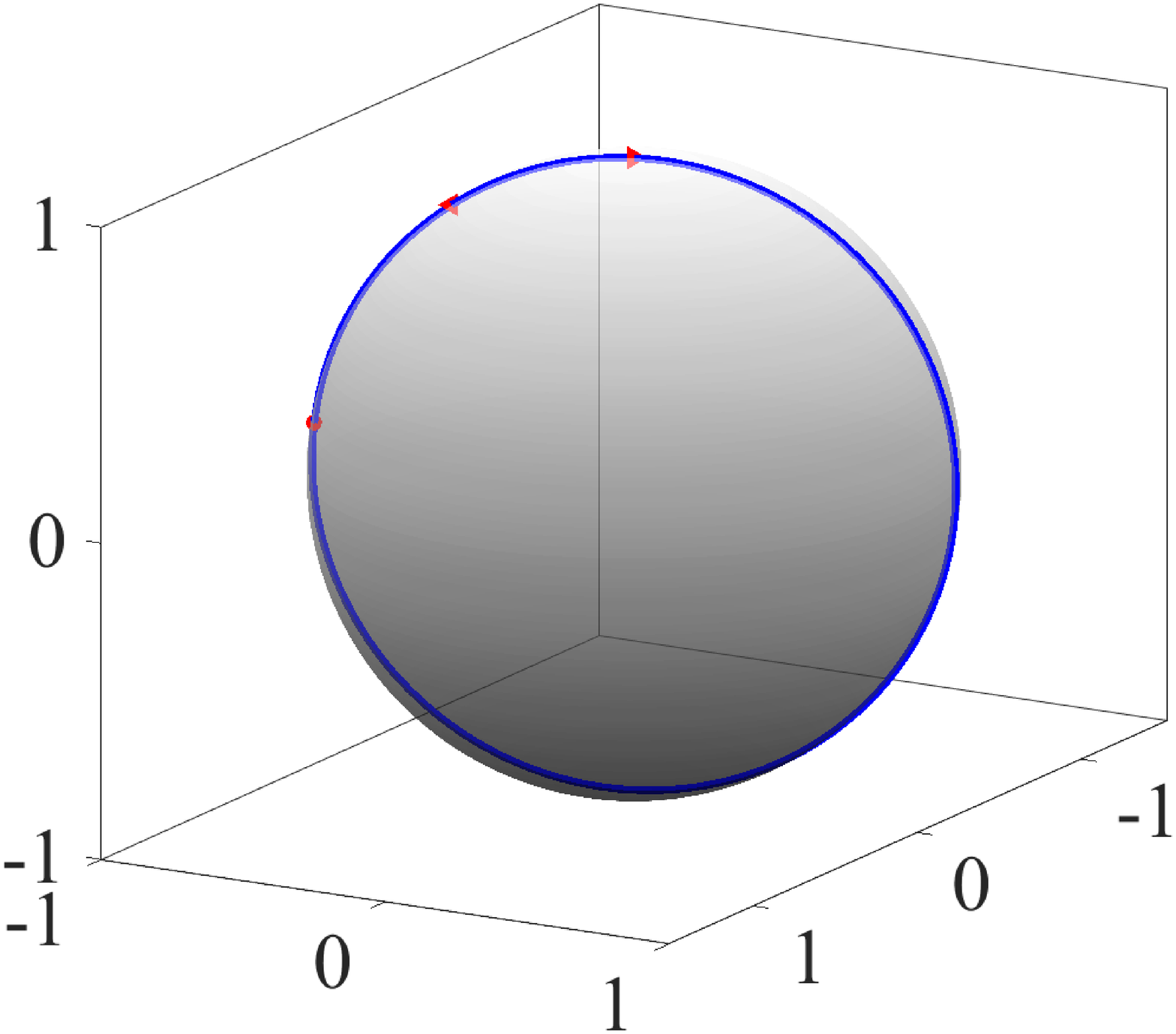}
	\caption{Case 2 - trajectory of $\bm{d}_3$ on $S^2$. }
	\label{fig-trajectory-ml}
\end{figure}

\begin{table}[!ht]
	\centering{}
	\begin{tabular}{|c|c|}
		\hline
		$t$        & $Q_{\mathrm{II}}$ \tabularnewline
		$[$s$]$    &             $[-]$ \tabularnewline
		\hline
		$1.000000$ &        $3.987886$ \tabularnewline
		\hline
		$2.000000$ &        $3.985379$ \tabularnewline
		\hline
		$3.000000$ &        $3.984407$ \tabularnewline
		\hline
		$4.000000$ &        $3.983255$ \tabularnewline
		\hline
		$5.000000$ &        $3.981796$ \tabularnewline		
		\hline		
	\end{tabular}\caption{Case 2 - second quotient of precision.}
	\label{tab-sqp_ml}
\end{table}

\begin{table}[!ht]
	\centering{}
	\begin{tabular}{|c|c|c|c|}
		\hline
		$t$        & $\Pi^1$             & $\Pi^2$             & $K$ \tabularnewline
		$[$s$]$    & $[$Kg$\,$m$^2$/s$]$ & $[$Kg$\,$m$^2$/s$]$ & $[$J$]$ \tabularnewline
		\hline
		$1.000000$ & $-61.749731$        & $-26.845155$        & $746.591699$ \tabularnewline
		\hline
		$2.000000$ & $-61.749731$        & $-26.845155$        & $746.591699$ \tabularnewline
		\hline
		$3.000000$ & $-61.749731$        & $-26.845155$        & $746.591699$ \tabularnewline
		\hline
		$4.000000$ & $-61.749731$        & $-26.845155$        & $746.591699$ \tabularnewline
		\hline
		$5.000000$ & $-61.749731$        & $-26.845155$        & $746.591699$ \tabularnewline		
		\hline		
	\end{tabular}
	\caption{Case 2 - motion invariants - stationary values.}
	\label{tab-invariants_ml}
\end{table}

\subsection{Case 3 - response to nonzero initial conditions and a vanishing load}

For this last case we consider
\begin{equation}
\bm{\Lambda}(0) = \bm{I}\, , \quad \bm{\omega}_\perp(0) = 1.5\bm{d}_1+4.5\bm{d}_2\,\,\mathrm{~rad/s}
\end{equation}
and
\begin{equation}
\bm{m}^{\mathrm{ext}}_\perp(t) =-f(t)(1236.25\bm{d}_1(t)+537.5\bm{d}_2(t))\,\,\mathrm{~ Kgm^2/s^2}
\end{equation}
with $f(t)$ defined as in Eq. \eqref{eq-loads_time_variation}.\\

Figure \ref{fig-angular-momentum-c} presents the time history for the spatial and material
components of the angular momentum. On the left, we can observe that the components of the spatial
angular momentum vary starting from the values corresponding to the initial condition adopted. After
the material load vanishes, the components of the spatial angular momentum oscillate with constant
amplitude and frequency indicating a steady state. On the right, we  observe that the components of
the material angular momentum also vary from the values corresponding to the initial condition
adopted, except the third one that remains always equal to zero. After the material load vanishes
the components of the material angular momenta are identically preserved as in the previous cases.

Figure \ref{fig-energy-precision-c} presents the time history for the kinetic energy and the second
quotient of precision. On the left we can observe that the kinetic energy vary during the first
second, where the applied material load is active. After this vanishes, the kinetic energy is
identically preserved. To the right we see that the second quotient of precision is approximately
$4$, see also Table \ref{tab-sqp_c}.

Figure \ref{fig-trajectory-c} shows the trajectory followed by $\bm{d}_3$. During the first second,
the trajectory does not render a distance minimizing curve on $S^2$. This is due to the combination
of initial condition adopted and the load applied that produces a change in the direction of the
axis of rotation. After the material load vanishes, the trajectory describes a circle of radius
$1$. Table \ref{tab-invariants_c} provides the stationary values for the motion invariants.

\begin{figure}[h!]
	\centering{}
	\begin{tabular}{cc}
		\includegraphics[width=0.45\textwidth]{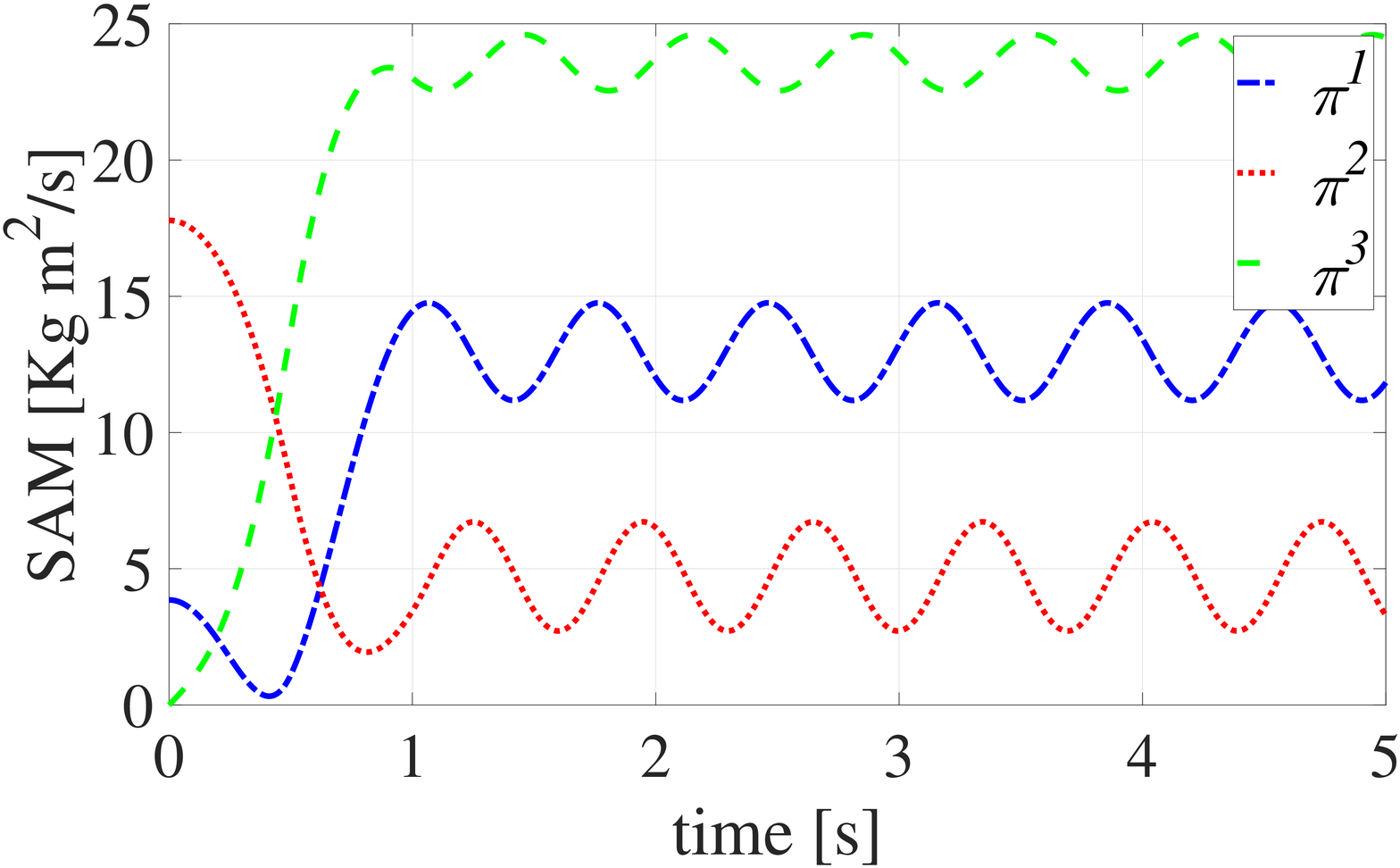} &	\includegraphics[width=0.45\textwidth]{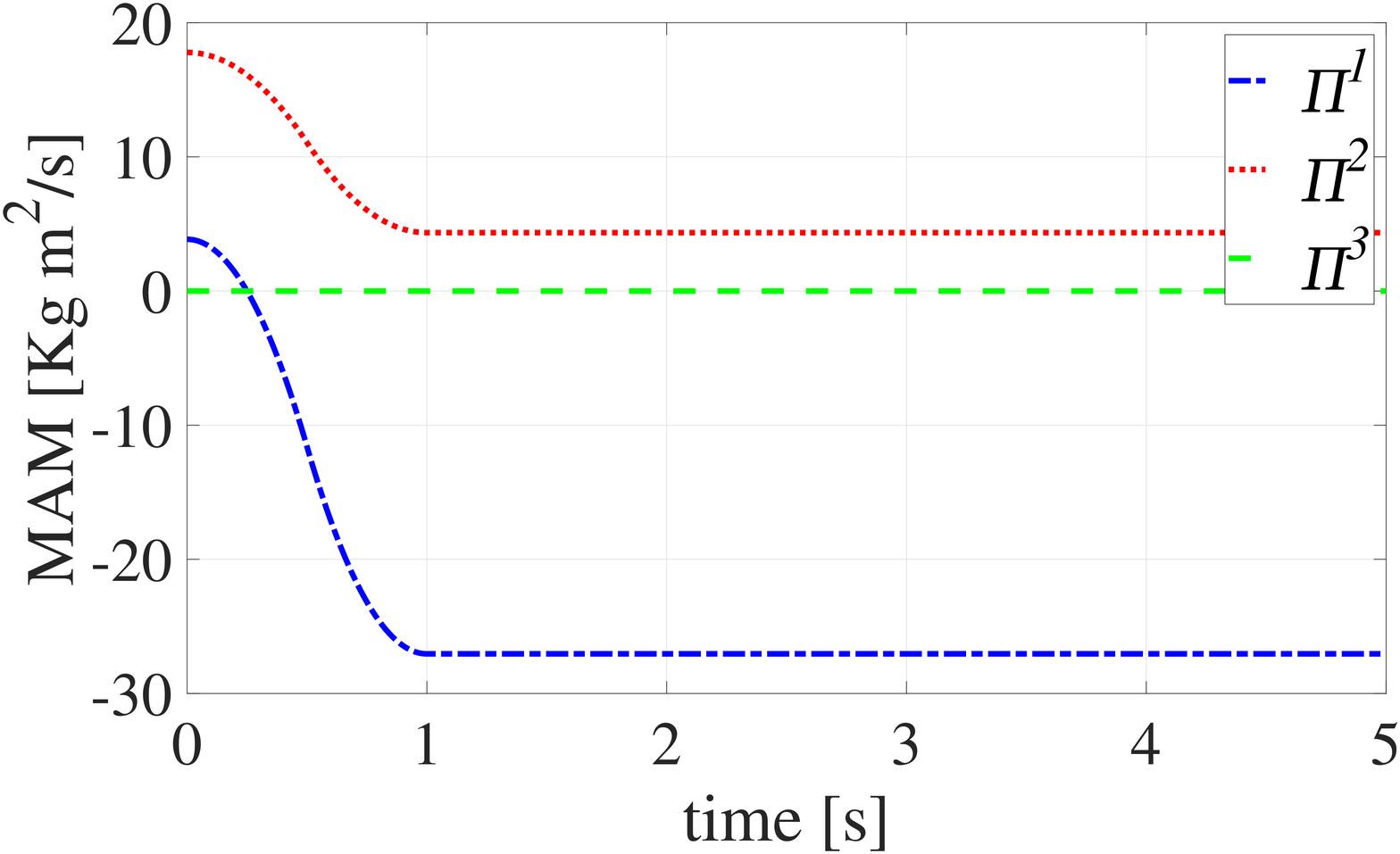}
	\end{tabular}
	\caption{Case 3 - SAM components (left) and MAM components (right).}
	\label{fig-angular-momentum-c}
\end{figure} 

\begin{figure}[h!]
	\centering{}
	\begin{tabular}{cc}
		\includegraphics[width=0.45\textwidth]{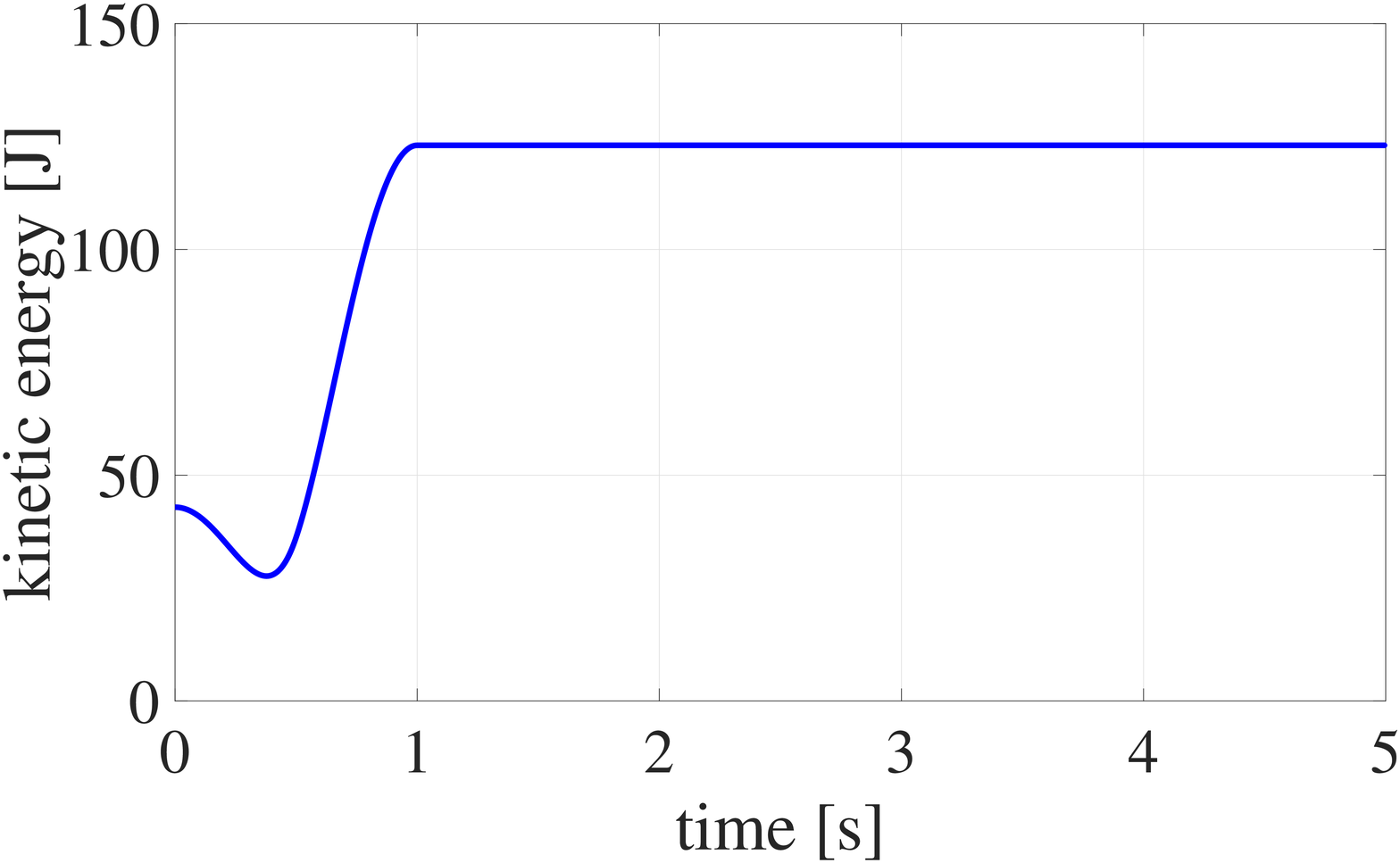} &	\includegraphics[width=0.45\textwidth]{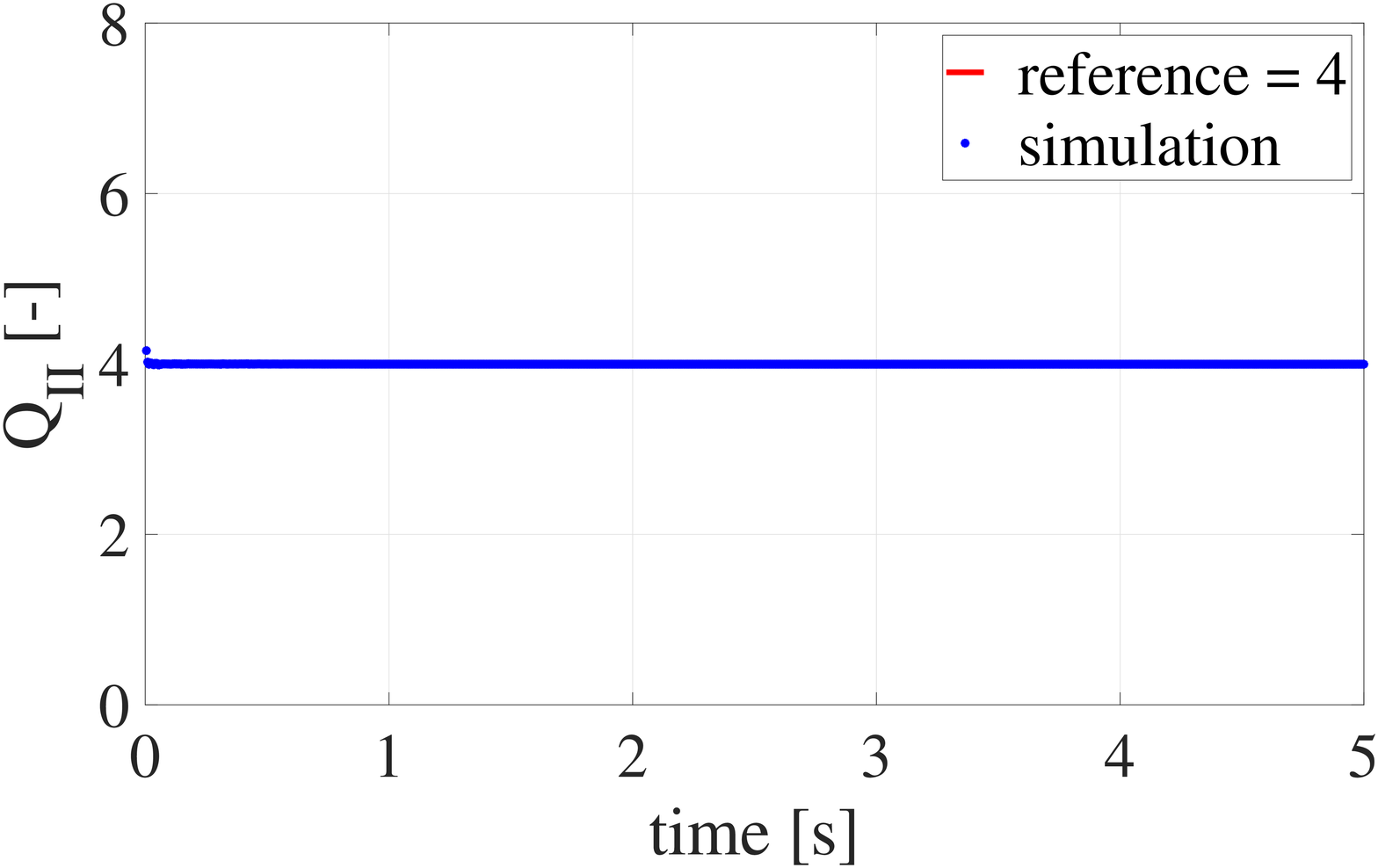}
	\end{tabular}
	\caption{Case 3 - kinetic energy (left) and second quotient of precision (right).}
	\label{fig-energy-precision-c}
\end{figure}

\begin{figure}[h!]
	\centering{}
	\includegraphics[width=0.5\textwidth]{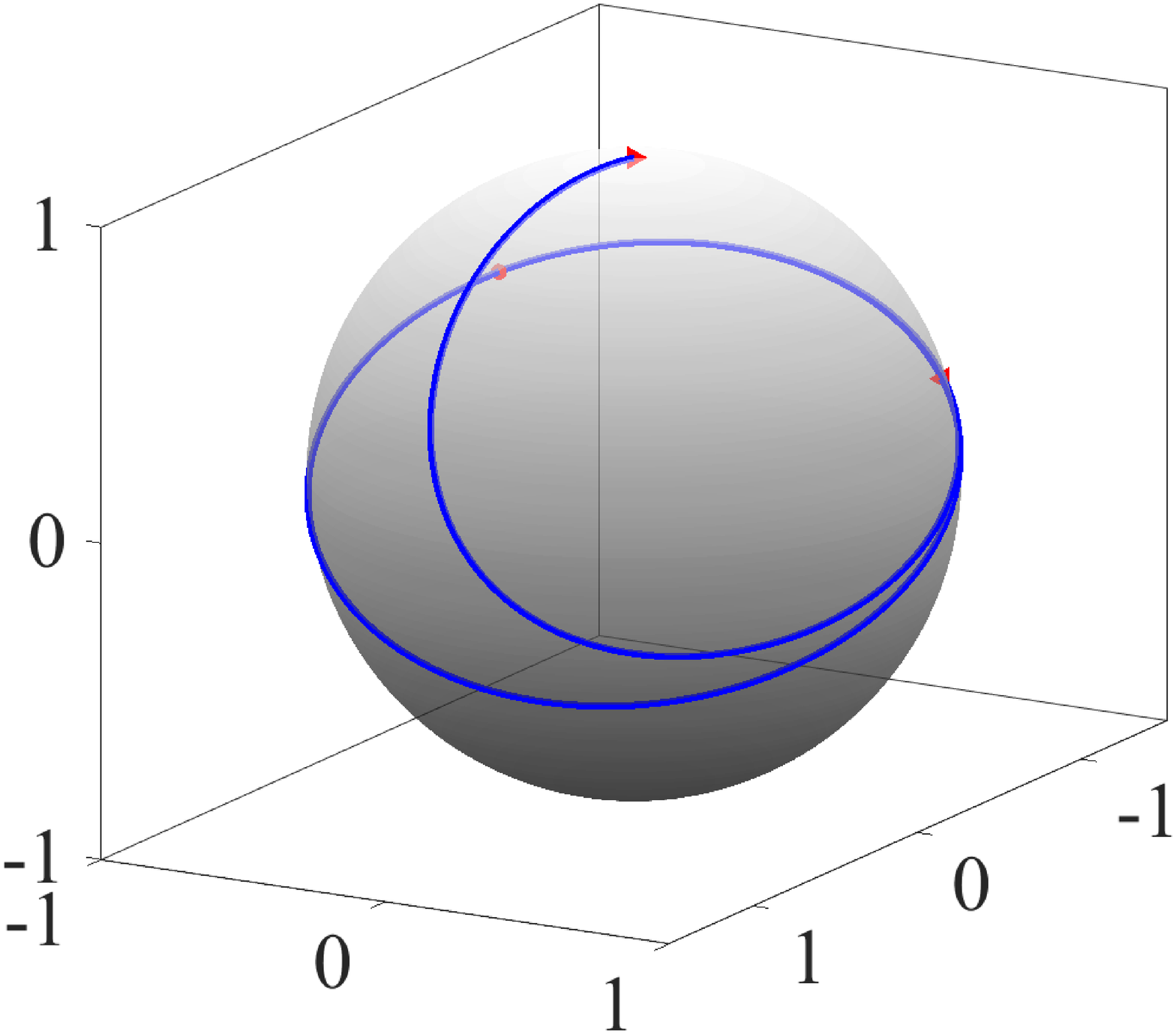}
	\caption{Case 3 - trajectory of $\bm{d}_3$ on $S^2$. }
	\label{fig-trajectory-c}
\end{figure}

\begin{table}[!ht]
	\centering{}
	\begin{tabular}{|c|c|}
		\hline
		$t$        & $Q_{\mathrm{II}}$ \tabularnewline
		$[$s$]$    &             $[-]$ \tabularnewline
		\hline
		$1.000000$ &        $3.998023$ \tabularnewline
		\hline
		$2.000000$ &        $3.997424$ \tabularnewline
		\hline
		$3.000000$ &        $3.997327$ \tabularnewline
		\hline
		$4.000000$ &        $3.997289$ \tabularnewline
		\hline
		$5.000000$ &        $3.997247$ \tabularnewline		
		\hline		
	\end{tabular}\caption{Case 3 - second quotient of precision.}
	\label{tab-sqp_c}
\end{table}

\begin{table}[!ht]
	\centering{}
	\begin{tabular}{|c|c|c|c|}
		\hline
		$t$        & $\Pi^1$             & $\Pi^2$             & $K$ \tabularnewline
		$[$s$]$    & $[$Kg$\,$m$^2$/s$]$ & $[$Kg$\,$m$^2$/s$]$ & $[$J$]$ \tabularnewline
		\hline
		$1.000000$ & $-27.042702$        & $4.351472$        & $123.059926$ \tabularnewline
		\hline
		$2.000000$ & $-27.042702$        & $4.351472$        & $123.059926$ \tabularnewline
		\hline
		$3.000000$ & $-27.042702$        & $4.351472$        & $123.059926$ \tabularnewline
		\hline
		$4.000000$ & $-27.042702$        & $4.351472$        & $123.059926$ \tabularnewline
		\hline
		$5.000000$ & $-27.042702$        & $4.351472$        & $123.059926$ \tabularnewline		
		\hline		
	\end{tabular}
	\caption{Case 3 - motion invariants - stationary values.}
	\label{tab-invariants_c}
\end{table}


\section{Summary}
\label{sec-summary}
This article describes the governing equations of the rotating rigid body
in a nonholonomic context, and discusses their relation with other,
well-known, equivalent models based on rotations and orthonormal vectors.
The equations obtained are non-variational and possess first
invariants of motion. Some of them, \ie the nonholonomic momenta (first and second components of the material angular momentum), are neither evident from the standard descriptions nor intuitive. To the best of our knowledge, there is no work in the literature that reports similar observations and thus, it represents a main innovation of the current work.

Complementing the rigorous mathematical analyis done for the proposed model, an implicit, second order accurate, energy and momentum conserving algorithm is presented, which discretizes in time the rigid body, nonholomic equations. Such a time integration scheme preserves exactly the energy and nonholonomic momenta and thus, this represents also a main innovation of the current work. Finally, simple examples, which make use of all elements of the approach proposed, are provided and confirm the excellent conservation properties and second order accuracy of the new scheme.


\appendix

\section{Second quotient of precision}

To investigate the correctness
of the integration scheme, we can employ the second quotient of precision \cite{Kreiss2014}. Any numerical solution of an initial
value problem can be written as
\begin{equation}
\bm{\xi}(t,h,k)=
\bm{\xi}(t)+
\left(\frac{h}{k}\right)\bm{\psi}_{1}(t)+
\left(\frac{h}{k}\right)^{2}\bm{\psi}_{2}(t)+
\ldots+
\left(\frac{h}{k}\right)^{n}\bm{\psi}_{n}(t)+
\mathcal{O}(h^{n+1})
\end{equation}
with $\bm{\xi}(t)$ representing the exact solution of the initial
value problem under consideration and $\bm{\psi}_{i}$ for $i=1,\ldots,n$ representing
smooth functions that only depend on the time parameter $t$. $h$ stands for the time step and $k$ is a positive
integer number that enables defining finer solutions computed from the
original resolution.

Let us introduce the second quotient of precision given by
\begin{equation}
Q_{\textrm{II}}(t)=\frac{\left\| \bm{\xi}(t,h,1)-\bm{\xi}(t,h,2)\right\| }{\left\| \bm{\xi}(t,h,2)-\bm{\xi}(t,h,4)\right\| }\,.
\end{equation}
For the numerator, we have that
\begin{equation}
\begin{aligned}
\left\| \bm{\xi}(t,h,1)-\bm{\xi}(t,h,2)\right\| 
& = \left\| \left(\frac{h}{1}\right)^{n}\bm{\psi}_{n}(t)-\left(\frac{h}{2}\right)^{n}\bm{\psi}_{n}(t)+\mathcal{O}(h^{n+1})\right\| \\
& = \left(\frac{2^{n}-1}{2^{n}}\right)h^{n}\left\| \bm{\psi}_{n}(t)\right\| +\mathcal{O}(h^{n+1})
\end{aligned}
\end{equation}
meanwhile, for the denominator, we have that
\begin{equation}
\begin{aligned}
\left\| \bm{\xi}(t,h,2)-\bm{\xi}(t,h,4)\right\|
& = \left\| \left(\frac{h}{2}\right)^{n}\bm{\psi}_{n}(t)-\left(\frac{h}{4}\right)^{n}\bm{\psi}_{n}(t)+\mathcal{O}(h^{n+1})\right\| \\
& = \left(\frac{2^{n}-1}{4^{n}}\right)h^{n}\left\| \bm{\psi}_{n}(t)\right\| +\mathcal{O}(h^{n+1})\,.
\end{aligned}
\end{equation}
If we have time steps that are sufficiently small, it can be showed that the second quotient of precision is very close to take the value $2^{n}$,
\ie
\begin{equation}
Q_{\textrm{II}}(t)=
\frac{\left(\frac{2^{n}-1}{2^{n}}\right)h^{n}\left\| \bm{\psi}_{n}(t)\right\| +\mathcal{O}(h^{n+1})}{\left(\frac{2^{n}-1}{4^{n}}\right)h^{n}\left\| \bm{\psi}_{n}(t)\right\| +\mathcal{O}(h^{n+1})}=
2^{n}+\mathcal{O}(h^{n+1})
\approx
2^{n}\,.
\label{eq:second_quotient}
\end{equation}

In this work, we consider an integration algorithm whose accuracy is warranted to be of second order and thus, $Q_{\textrm{II}}(t)\approx4$. Be aware that for the computation of
precision quotients, $h$ needs to be chosen small enough. Such a selection strongly depends on the case considered.
Moreover, when $\|\bm{\psi}_{n}(t)\|$
is very small, the test may fail even if correctly implemented. Therefore, we recommend to play with initial conditions and time resolutions for achieving correct pictures.

\bibliographystyle{ieeetr}
\bibliography{bib}

\begin{thebibliography}{10}

\bibitem{Goldstein:1980up}
H.~Goldstein, {\em {Classical Mechanics}}.
\newblock Addison-Wesley, 2nd~ed., 1980.

\bibitem{Arnold:1989wt}
V.~I. Arnold, {\em {Mathematical methods of classical mechanics}}.
\newblock Springer, 1989.

\bibitem{Simo:1988uc}
J.~C. Simo, J.~E. Marsden, and P.~S. Krishnaprasad, ``{The Hamiltonian
  structure of nonlinear elasticity: the material and convective
  representations of solids, rods, and plates},'' {\em Archive for Rational
  Mechanics and Analysis}, vol.~104, pp.~125--183, 1988.

\bibitem{Antman:1995wm}
S.~S. Antman, {\em {Nonlinear problems of elasticity}}.
\newblock New York: Springer, 1995.

\bibitem{Mielke:1988hk}
A.~Mielke and P.~Holmes, ``{Spatially complex equilibria of buckled rods},''
  {\em Archive for Rational Mechanics and Analysis}, vol.~101, pp.~319--348,
  1988.

\bibitem{Simo1991}
J.~C. Simo and K.~K. Wong, ``Unconditionally stable algorithms for rigid body
  dynamics that exactly preserve energy and momentum,'' {\em International
  Journal for Numerical Methods in Engineering}, vol.~31, pp.~19--52, 1991.

\bibitem{Rom08}
I.~Romero, ``{Formulation and performance of variational integrators for
  rotating bodies},'' {\em Computational Mechanics}, vol.~42, pp.~825--836,
  2008.

\bibitem{Simo:1986ws}
J.~C. Simo and L.~Vu-Quoc, ``{A three-dimensional finite-strain rod model. Part
  II: Computational aspects},'' {\em Computer Methods in Applied Mechanics and
  Engineering}, vol.~58, pp.~79--116, 1986.

\bibitem{Sansour:1995fi}
C.~Sansour and H.~Bednarczyk, ``{The Cosserat surface as a shell model, theory
  and finite-element formulation},'' {\em Computer Methods in Applied Mechanics
  and Engineering}, vol.~120, pp.~1--32, 1995.

\bibitem{Jelenic:1998uu}
G.~Jeleni{\'c} and M.~A. Crisfield, ``{Interpolation of rotational variables in
  nonlinear dynamics of 3d beams},'' {\em International Journal for Numerical
  Methods in Engineering}, vol.~43, pp.~1193--1222, 1998.

\bibitem{Romero:2002wb}
I.~Romero and F.~Armero, ``{Numerical integration of the stiff dynamics of
  geometrically exact shells: an energy-dissipative momentum-conserving
  scheme},'' {\em International Journal for Numerical Methods in Engineering},
  vol.~54, pp.~1043--1086, 2002.

\bibitem{Betsch:2003vl}
P.~Betsch and P.~Steinmann, ``{Constrained dynamics of geometrically exact
  beams},'' {\em Computational Mechanics}, vol.~31, pp.~49--59, 2003.

\bibitem{Romero:2017uv}
I.~Romero and M.~Arnold, ``{Computing with rotations: algorithms and
  applications},'' in {\em Encyclopedia of Computational Mechanics} (E.~Stein,
  R.~de~Borst, and T.~J.~R. Hughes, eds.), John Wiley {\&} Sons, 2017.

\bibitem{Romero:2002uw}
I.~Romero and F.~Armero, ``{An objective finite element approximation of the
  kinematics of geometrically exact rods and its use in the formulation of an
  energy-momentum conserving scheme in dynamics},'' {\em International Journal
  for Numerical Methods in Engineering}, vol.~54, pp.~1683--1716, 2002.

\bibitem{Betsch:2002wk}
P.~Betsch and P.~Steinmann, ``{Frame-indifferent beam finite elements based
  upon the geometrically exact beam theory},'' {\em International Journal for
  Numerical Methods in Engineering}, vol.~54, pp.~1775--1788, 2002.

\bibitem{BetschLeyendecker06}
P.~Betsch and S.~Leyendecker, ``{The discrete null space method for the energy
  consistent integration of constrained mechanical systems. Part II: Multibody
  dynamics},'' {\em International Journal for Numerical Methods in
  Engineering}, vol.~67, pp.~499--552, 2006.

\bibitem{AmericanMathematica:6j0BCmRM}
R.~L. Bishop, ``{There is more than one way to frame a curve},'' {\em The
  American Mathematical Monthly}, vol.~82, pp.~246--251, 1975.

\bibitem{Shi:1994ei}
Y.~Shi and J.~E. Hearst, ``{The Kirchhoff elastic rod, the nonlinear
  Schr{\"o}dinger equation, and DNA supercoiling},'' {\em The Journal of
  Chemical Physics}, vol.~101, pp.~5186--5200, 1994.

\bibitem{McMillen:2002iz}
T.~McMillen and A.~Goriely, ``{Tendril perversion in intrinsically curved
  rods},'' {\em Journal Nonlinear Science}, vol.~12, pp.~241--281, 2002.

\bibitem{Audoly:2007ko}
B.~Audoly, N.~Clauvelin, and S.~Neukirch, ``{Elastic knots},'' {\em Physical
  Review Letters}, vol.~99, pp.~137--4, 2007.

\bibitem{Romero2019}
I.~Romero and C.~G. Gebhardt, ``Variational principles for nonlinear
  {K}irchhoff rods,'' {\em Acta Mechanica}, 2019.

\bibitem{Betsch2006}
P.~Betsch, ``Energy-consistent numerical integration of mechanical systems with
  mixed holonomic and nonholonomic constraints,'' {\em Computer Methods in
  Applied Mechanics and Engineering}, vol.~195, pp.~7020--7035, 2006.

\bibitem{Hedrih2019}
K.~R. Hedrih, ``Rolling heavy ball over the sphere in real {R}n3 space,'' {\em
  Nonlinear Dynamics}, vol.~97, pp.~63--82, 2019.

\bibitem{Lee2009}
T.~Lee, M.~Leok, and N.~H. McClamroch, ``Lagrangian mechanics and variational
  integrators on two-spheres,'' {\em International Journal for Numerical
  Methods in Engineering}, vol.~79, pp.~1147--1174, 2009.

\bibitem{Lee2018}
T.~Lee, M.~Leok, and N.~H. McClamroch, {\em {Global Formulations of Lagrangian
  and Hamiltonian Dynamics on Manifolds}}.
\newblock Springer International Publishing, 2018.

\bibitem{McLachlan:1999tm}
R.~I. McLachlan, G.~R.~W. Quispel, and N.~Robidoux, ``{Geometric integration
  using discrete gradients},'' {\em Philosophical Transactions of the Royal
  Society of London. Series A. Mathematical, Physical Sciences and
  Engineering}, vol.~357, pp.~1021--1045, 1999.

\bibitem{Celledoni:2012ff}
E.~Celledoni, V.~Grimm, R.~I. McLachlan, D.~I. McLaren,
  D.~O{\textquoteright}Neale, B.~Owren, and G.~R.~W. Quispel, ``{Preserving
  energy resp. dissipation in numerical PDEs using the {\textquoteleft}Average
  Vector Field{\textquoteright} method},'' {\em Journal of Computational
  Physics}, vol.~231, pp.~6770--6789, 2012.

\bibitem{Gebhardt2019c}
C.~G. Gebhardt, I.~Romero, and R.~Rolfes, ``{A new conservative/dissipative
  time integration scheme for nonlinear mechanical systems},'' {\em
  Computational Mechanics}, 2019.

\bibitem{Marsden:1994wk}
J.~E. Marsden and T.~S. Ratiu, {\em {Introduction to Mechanics and Symmetry}}.
\newblock New York: Springer, first~ed., 1994.

\bibitem{Eisenberg1979}
M.~Eisenberg and R.~Guy, ``{A proof of the hairy ball theorem},'' {\em The
  American Mathematical Monthly}, vol.~86, pp.~571--574, 1979.

\bibitem{Simo:1989wu}
J.~C. Simo and D.~D. Fox, ``{On a stress resultant geometrically exact shell
  model. I. Formulation and optimal parametrization},'' {\em Computer Methods
  in Applied Mechanics and Engineering}, vol.~72, pp.~267--304, 1989.

\bibitem{Romero:2004uo}
I.~Romero, ``{The interpolation of rotations and its application to finite
  element models of geometrically exact rods},'' {\em Computational Mechanics},
  vol.~34, pp.~121--133, 2004.

\bibitem{RoUrreCy:2014}
I.~Romero, M.~Urrecha, and C.~J. Cyron, ``{A torsion-free nonlinear beam
  model},'' {\em International Journal of Non-Linear Mechanics}, vol.~58,
  pp.~1--10, 2014.

\bibitem{Hosseini2017}
S.~Hosseini and A.~Uschmajew, ``A {R}iemannian gradient sampling algorithm for
  nonsmooth optimization on manifolds,'' {\em SIAM Journal on Optimization},
  vol.~27, pp.~173--189, 2017.

\bibitem{Bergmann2018}
R.~Bergmann, J.~H. Fitschen, J.~Persch, and G.~Steidl, ``Priors with coupled
  first and second order differences for manifold-valued image processing,''
  {\em Journal of Mathematical Imaging and Vision}, vol.~60, pp.~1459--1481,
  2018.

\bibitem{Hairer:2002vg}
E.~Hairer, C.~Lubich, and G.~Wanner, {\em {Geometric numerical integration}}.
\newblock Berlin: Springer, 2nd~ed., 2006.

\bibitem{Langer:1996wj}
J.~Langer and D.~Singer, ``{Lagrangian aspects of the Kirchhoff elastic rod},''
  {\em SIAM Review}, vol.~38, pp.~605--618, 1996.

\bibitem{Farouki2016}
R.~T. Farouki, ``Rational rotation-minimizing frames--recent advances and open
  problems,'' {\em Applied Mathematics and Computation}, vol.~272, pp.~80--91,
  2016.

\bibitem{Romero2002}
I.~Romero and F.~Armero, ``An objective finite element approximation of the
  kinematics of geometrically exact rods and its use in the formulation of an
  energy-momentum conserving scheme in dynamics,'' {\em International Journal
  for Numerical Methods in Engineering}, vol.~54, pp.~1683--1716, 2002.

\bibitem{Simo1992}
J.~C. Simo and N.~Tarnow, ``The discrete energy-momentum method. conserving
  algorithms for nonlinear elastodynamics,'' {\em Zeitschrift f\"{u}r
  Angewandte Mathematik und Physik}, vol.~43, pp.~757--792, 1992.

\bibitem{Gebhardt2019a}
C.~G. Gebhardt, B.~Hofmeister, C.~Hente, and R.~Rolfes, ``Nonlinear dynamics of
  slender structures: a new object-oriented framework,'' {\em Computational
  Mechanics}, vol.~63, pp.~219--252, 2019.

\bibitem{Gebhardt2019b}
C.~G. Gebhardt, M.~C. Steinbach, and R.~Rolfes, ``Understanding the nonlinear
  dynamics of beam structures: a principal geodesic analysis approach,'' {\em
  Thin-Walled Structures}, vol.~140, pp.~357--372, 2019.

\bibitem{Kreiss2014}
H.-O. Kreiss and O.~E. Ortiz, {\em Introduction to numerical methods for time
  dependent differential equations}.
\newblock Wiley, 2014.

\end{thebibliography}

\end{document}